\patchcmd{\@maketitle}{\LARGE}{\Large}{}{}
\newtheorem{theorem}{Theorem}[section]
\newtheorem{defi}[theorem]{Definition}
\newtheorem{lemma}[theorem]{Lemma}
\newtheorem{cor}[theorem]{Corollary}
\newcommand{\R}{\mathbb{R}}
\newcommand{\N}{\mathbb{N}}
\newcommand{\1}{\mathds{1}} 
\DeclareMathOperator*{\argmin}{argmin}
\DeclareMathOperator*{\arginf}{arginf}
\newcommand{\doubleslash}{{\hspace{-1pt}\sslash\hspace{-1pt}}} % alternative slash for alternative, pointwise modulo
\newcommand{\dist}{\operatorname{d}} % avoid notation clash with dimension d
\title{Regression in quotient metric spaces with a focus on elastic curves}
\date{\today}
\author{Lisa Steyer, Almond St\"ocker \& Sonja Greven \\
for the Alzheimer’s Disease Neuroimaging Initiative\thanks{Data used in preparation of this article were obtained from the Alzheimer’s Disease
Neuroimaging Initiative (ADNI) database (\url{adni.loni.usc.edu}). As such, the investigators
within the ADNI contributed to the design and implementation of ADNI and/or provided data
but did not participate in analysis or writing of this report. A complete listing of ADNI
investigators can be found at:
\url{http://adni.loni.usc.edu/wp-content/uploads/how_to_apply/ADNI_Acknowledgement_List.pdf}}}
\begin{document}
\maketitle
\begin{abstract}
We propose regression models for curve-valued responses in two or more dimensions, where only the image but not the parametrization of the curves is of interest. Examples of such data are handwritten letters, movement paths or outlines of objects. In the square-root-velocity framework, a parametrization invariant distance for curves is obtained as the quotient space metric with respect to the action of re-parametrization, which is by isometries. With this special case in mind, we discuss the generalization of 'linear' regression to quotient metric spaces more generally, before illustrating the usefulness of our approach for curves modulo re-parametrization. We address the issue of sparsely or irregularly sampled curves by using splines for modeling smooth conditional mean curves.  We test this model in simulations and apply it to human hippocampal outlines, obtained from Magnetic Resonance Imaging scans. Here we model how the shape of the irregularly sampled hippocampus is related to age, Alzheimer's disease and sex.
\end{abstract}

\keywords{alignment, elastic distance, quotient space regression, sparse functional data, square-root-velocity framework, warping}
\newpage

\section{Introduction: Regression for metric spaces}

Regression is a widely used statistical technique for exploring the relationship between covariates and response variables. In the simplest case of linear regression, these variables are elements in the Euclidean space and the relationship between the variables is assumed to be affine linear. Since linear operations are also defined in general Hilbert spaces, the linear regression model can be extended to these spaces \citep{ramsay_1991} and in particular to functional data in $\mathbb{L}_2$ \citep{ramsay_silverman}. For more general (metric) response spaces, analogues of linear models are less straightforwardly to define.

The focus of this paper is to develop an 'elastic' regression model for curves modulo parametrization. More precisely, we consider the quotient space $\mathcal{A}/\Gamma$ as response space, where $\mathcal{A}$ is the set of absolutely continuous curves $\boldsymbol{y}: [0,1] \to \R^d$, $d\in \N$, and $\Gamma$ is the set of boundary-preserving diffeomorphisms $\gamma: [0,1] \to [0,1]$. These curves occur naturally when we look at the outlines of (e.g., anatomical) objects such as the corpus callosum \citep{joshi} where only the image but not the parametrization of the curves is of interest. Furthermore, handwritten letters or symbols \citep[e.g.][]{dryden}, protein structures \citep{srivastava} or centerlines of the internal carotid artery \citep{sangalli} can be viewed as curves modulo parametrization in 2d or 3d. In this work, we investigate the variability in outlines of (a representative slice of) the hippocampus of patients suffering from Alzheimer's disease and of a control group, with the aim of differentiating changes due to Alzheimer's from normal aging
(Fig. \ref{fig:hippocampus_data}). These outlines were extracted from the Alzheimer’s Disease Neuroimaging Initiative (ADNI) database \citep{petersen_adni}.

In this elastic setting, the response space is a quotient space that only has a metric space structure with no notion of linearity, such that linear models cannot be directly defined. It seems natural to use constant speed geodesics instead of affine linear functions in metric spaces, since they coincide in the case of vector spaces. \cite{fletcher} considers such a geodesic regression model for a scalar covariate and the response variable being an element of a smooth Riemannian manifold. Here, the tangent bundle of the manifold serves as a convenient parametrization of the set of possible geodesics. Conversely, in general metric spaces, there is no such parameterization of the geodesics. Hence, although a geodesic regression model could be defined here as well, the estimation of the minimizing geodesics is difficult to accomplish and the result difficult to interpret. For this reason, to our knowledge, the geodesic model has not been considered for general metric spaces.

\cite{petersen} develop a non-geodesic global regression model for responses that are elements of general metric spaces and Euclidean covariates. The regression function here is implicitly defined for each possible combination of covariate values as a (potentially negatively) weighted Fréchet mean. This means that no global model parameters are estimated, which makes interpretation difficult.
Overall, defining a regression model in general metric spaces that is both interpretable and computable appears difficult if not infeasible. To build such a model for response data in a metric space, it thus seems necessary to make use of the specific structure of the response space.

Besides our current structure of interest $\mathcal{A}/\Gamma$, there are various situations where observations can be naturally seen as elements of a quotient space, for instance if the objects of interest are either subject to certain invariances or not fully observed. 
Classic examples arise in statistical shape data analysis \citep{dryden2016statistical}, where objects are considered invariant under translation, rotation, and scaling, as well as occasionally also under reflection, or a subset of these invariances. Other examples include analysis of unlabeled networks \citep{calissano2023}, data on a Grassmannian \citep{hong2016parametric}, data of 3D rotations \citep{fletcher}, compositional data \citep{pawlowsky2015modeling} and density data \citep{van2014bayes}. \cite{srivastava_book} also combines parametrization invariance with statistical shape analysis -- to analyze shapes of curves and also surfaces -- and, more recently, also with analysis of unlabeled graphs to model brain arterial networks \citep{guo2020statistical}.

Given the relevance and variety of data in quotient spaces in the literature, we will motivate our elastic regression model for curves in $\mathcal{A}/\Gamma$ with a more general discussion of a regression approach for responses in certain quotient metric spaces. More precisely, we will consider quotient spaces where the distance is induced by an isometric group action, since this is the case for $\mathcal{A}/\Gamma$ if we equip $\mathcal{A}$ with a semi-metric based on the Fisher-Rao metric\citep{srivastava}. This semi-metric can
be simplified to the $\mathbb{L}_2$ distance using the square-root-velocity (SRV) transformation (essentially $\boldsymbol{y} \mapsto \frac{\dot{\boldsymbol{y}}(t)}{\sqrt{\|\dot{\boldsymbol{y}}(t)\|}}$, $\boldsymbol{y} \in \mathcal{A}$) and minimization over all possible re-parametrizations in $\Gamma$ yields a suitable ``elastic'' distance on $\mathcal{A}/\Gamma$ modulo translation. While not all of the above examples correspond to such isometric group actions, they comprise -- besides re-parameterization groups -- also rotation, reflection and permutation groups.
 
The considered approach, which we refer to as ``quotient regression'', is straight-forward and natural in two ways: a) the structure of the model predictor is simply obtained by projecting a suitable predictor in the original space to the quotient, and b) the model is fit based on the distance in the quotient metric space obtained by minimizing the distance in the original space over all possible group actions. Due to the more general perspective, beyond the target ``quotient linear regression'' for elastic curves, our results on consistency and existence of estimators, as well as inclusion of geodesics in the model space, are also applicable to other quotient regression scenarios. 
It also allows us to point out close connections to approaches for other response quotient metric spaces, such as the recent approach of \citet{calissano} for unlabeled network responses, corresponding to quotient linear regression over the permutation group, and intrinsic Riemannian regression for responses in shape spaces \citep{CorneaEtAl2017RegRiemannianSymSpaces}, combining rotation invariance with invariance with respect to non-isometric re-scaling. Curves in $\mathcal{A}/\Gamma$, in particular, have not been directly considered before as responses in an elastic regression model. One existing approach \citep{tucker} examines the case of elastic curves as covariates instead. They introduce elastic functional principal component regression (fPCR) for scalar response variables and 1d-functions as covariates. Here they first align the data curves to their Fréchet mean and then perform principal component analysis (PCA) for both the aligned curves as well as the optimal re-parametrizations and use both parts in a functional regression model. \citep{guo} proceed similarly but use the principal component scores of the pre-aligned SRV curves as covariates and response in a regression model.

Given this related work, we consider regression (on SRV or on curve level) after pre-aligment natural benchmarks to our model. Specifically, we compare our quotient linear model for curves to 1) linear regression after pre-alignment, a simpler approach that can be used for  regression in the quotient of any Hilbert space, 2) to linear regression on curve instead of SRV level basing only alignment on the SRV framework, 3) to the combination of the simplifications in 1) and 2), and 4) to Fréchet regression \citep{petersen} for general metric spaces, which we adapt and implement for this purpose for the case of $\mathcal{A}/\Gamma$. In simulation studies, we illustrate when a clear performance gain by our model can be expected and when alternatives yield comparably good results.

In applications, such as in our example of hippocampus outlines, it is often necessary to handle sparsely or irregularly observed curves. We achieve this via employing spline bases, as often done in (sparse) functional data analysis. This is motivated by the work of \cite{steyer} on spline-based unconditional elastic mean estimation, where we show identifiability of spline coefficients modulo parameterization and the adequateness of the approach for sparsely or irregularly observed curves. We provide a ready-to-use implementation of our elastic regression model in the R package \texttt{elasdics}. In a simulation study, we validate bootstrap confidence regions either based on spline coefficients - specific to our spline-based modeling - or more generically on distances, and discuss when each is recommended in practice. Both approaches enable data based model selection and assessment of estimation uncertainty. The proposed inference methods allow us to reveal and assess systematic patterns in the hippocampus outlines, which are visually hard to distinguish  due to considerable subject-to-subject variation (Fig. \ref{fig:hippocampus_data}). Specifically, we are able to compare the effect of Alzheimer's disease to that of normal aging -- two mechanisms that have been related to each other in the literature before \citep{henneman} -- in a more detailed and visually intuitive way (Fig. \ref{fig:hippocampus_effects}).

We proceed as follows. In Section \ref{sec:quotient_model}, we first construct the model for responses in general quotient metric spaces before developing the elastic model and our estimation strategy in the particular case of curves modulo parameterization. Here we build on the spline modeling and alignment methods for sparsely and irregularly sampled curves developed in \cite{steyer}. In Section \ref{sec:alternatives}, we present different alternatives to our model, which have not yet been discussed in this form in the literature, either, but deemed natural competitors by us. Section \ref{sec:inference} proposes inference methods for our model. Section \ref{sec:simulation}  compares the performance of our model with the alternative methods described in Section \ref{sec:alternatives} and validates inference based on the spline coefficients. Finally, in Section \ref{sec:application}, we use our method to model the outline of the human hippocampus as a function of age, Alzheimer's disease status and sex, before concluding in Section \ref{sec:discussion}.
\section{Quotient space regression and the particular case of elastic curves}
\label{sec:quotient_model}
Regression models for elastic curves are a particular case of regression models for quotient metric spaces, where the quotient is induced by an isometry, and we will define a regression model for the quotient by using the structure of the original space. In the case of elastic curves, the reparametrization group acts by isometries on $\mathbb{L}_2$, the space of SRV-transformed curves \citep[cf.][]{srivastava_book}. This means the original space here is $\mathbb{L}_2$, which is a Hilbert space and therefore has a linear structure and allows us to base our models on linear regression in $\mathbb{L}_2$. 
With this goal in mind, it is worthwhile to begin with a more general discussion of regression models in metric spaces.  In particular, we discuss reasonable model spaces $\mathcal{F}$ for regression in quotient spaces $\mathcal{Y}/G$ over a more general original space $\mathcal{Y}$ on which the group $G$ acts by isometries. This is of independent interest and shows direct connections to regression for unlabeled networks \citep{calissano} and on shape/form spaces \citep{CorneaEtAl2017RegRiemannianSymSpaces, stoecker}.
%The same is true in the case of unlabelled networks considered by \cite{calissano}. However, we will not restrict ourselves to quotients of Hilbert spaces, since more general quotients are of practical interest as well. For instance the functional shape space studied by \cite{srivastava} arises as a quotient of the sphere, which is a submanifold of $\mathbb{L}_2$, on which both rotation and re-parametrization act by isometries. Therefore, in the following we discuss properties of the original space that are sufficient to allow generalization of 'linear' regression to quotient metric spaces induced by isometries.

As stated by \citet{petersen} in very general terms, traditional regression for the mean is naturally generalized to metric spaces by modeling the conditional Fréchet mean given covariates. This generalizes the least squares problem via replacing the Euclidean metric in the risk minimization with the distance in the metric space. More precisely, for $\mathcal{X}$ being the space of covariates, $(\mathcal{Y}, \dist)$ a metric space, and $(X,Y)$ random variables taking values in $\mathcal{X} \times \mathcal{Y}$, the conditional Fréchet mean of $Y$  given $X$ is given by
\begin{align} \label{eq:metric_reg}
    \mathcal{E}(Y|X = x) = \argmin_{\mu \in \mathcal{Y}} \mathbb{E}(\dist(Y, \mu)^2| X = {x}).
\end{align}
\citet{petersen} point out that without assuming an algebraic structure on $\mathcal{Y}$, it is not feasible to directly define a parametric regression model, that is to define a suitable function space $\mathcal{F}$ such that $x \mapsto \mathcal{E}(Y|X = x) $ is an element of $\mathcal{F}$. For this reason, they develop a generalization of multiple linear regression as a set of weighted Fréchet means, where the weights are given by a known function of the covariates. This allows them to define a regression model in general metric spaces without an explicit model equation or global model parameters. In contrast, as soon as there is any additional structure given on $\mathcal{Y}$, it can potentially be used to motivate a suitable % is a quotient metric space and investigate what condition must be imposed on the original space in order to define an interpretable 
function space $\mathcal{F}$, which we refer to as model space in the following. 
\begin{defi}[Model-based conditional Fréchet mean] \label{def:model_based_Frechet_mean}
Given a model space $\mathcal{F}$, we define the \emph{model-based conditional Fréchet mean} as
\begin{align} 
    f^*  = \argmin_{f \in \mathcal{F}} \mathbb{E}( \mathbb{E}(\dist(Y, f(X))^2| X )) = \argmin_{f \in \mathcal{F}} \mathbb{E}(\dist(Y, f(X))^2),
\end{align}
assuming the total variation $\mathbb{E}(\dist(Y, f(X))^2) < \infty$ is finite for some $f\in\mathcal{F}$.
\end{defi}

Note that, in contrast to Equation \eqref{eq:metric_reg}, the minimization is here over $\mathcal{F}$ rather than point-wise over $\mathcal{Y}$ and that, in general, there does not exist a unique minimizer but $f^* \subseteq \mathcal{F}$ is a set of models. $f^*(x) = \mathcal{E}(Y \mid X=x)$ coincide, if the model is correctly specified. In practice, this is of course hard to verify and it might be more truthful to model $f^*$ and assume that it reasonably approximates $\mathcal{E}(Y \mid X=x)$.
Since the distinction is subtle, we nonetheless simply refer to $f^*(x) = \mathcal{E}(Y\mid X = x)$ as conditional Fréchet mean in the following, when $\mathcal{F}$ is clear from the context, while considering $f^*$ as given in Definition \ref{def:model_based_Frechet_mean}.  

For a corresponding estimator $\hat{f} \subseteq \mathcal{F}$ of $f^*$ the following properties are desirable: a) good interpretability, presenting one central advantage of using the structure of $\mathcal{Y}$,
b) consistency and c) computational feasibility, which is
practically necessary.
While interpretability and computation depend on the structure and will be discussed for quotient space regression in Section \ref{subsec:quotient_reg}, we may discuss consistency already here at a higher level of generality. 

For a given model space $\mathcal{F}$ consider the conditional sample Fréchet mean
\begin{align} \label{eq:empirical_metric_reg}
 \hat{f} = \hat{f}_n =  \argmin_{f \in \mathcal{F}} \sum_{i = 1}^n \dist(y_i, f(x_i))^2
\end{align}
for a given set of observations $\{(x_i, y_i) , i = 1, \dots, n \}$ drawn independently from $(X, Y)$. 
We first show that the estimator $\hat{f}_n \subset \mathcal{F}$, again in general not a unique function, is a consistent estimator of $f^\ast$ in very general metric spaces $\mathcal{Y}$ in the weaker sense established by \cite{ziezold} for the (unconditional) Frechét mean. He showed that for independently and identically distributed random variables the set of empirical Fréchet means converges to the set of expected elements. Since also neither the (conditional) Fréchet mean \eqref{eq:metric_reg} nor its empirical analogue, the (conditional) sample Fréchet mean \eqref{eq:empirical_metric_reg} need to be unique, we can only expect a set version of consistency to hold here as well.

\begin{lemma}[Consistency] \label{lem:consistency}
Let $\mathcal{X}$ be compact and $\mathcal{Y}$ separable. Let $\mathcal{F} \subseteq C(\mathcal{X}, \mathcal{Y})$ be a subset of the continuous functions from  $\mathcal{X}$ to $\mathcal{Y}$ equipped with the metric $\dist_\mathcal{F}(f_1, f_2) = \sup_{x \in \mathcal{X}} \dist(f_1(x), f_2(x))$, $\forall f_1, f_2 \in \mathcal{F}$ and let $\mathbb{E}(\dist(Y, f(X))^2) < \infty$ $\forall f \in \mathcal{F}$. Then $\hat{f}_n$ is a strongly consistent estimator of $f^\ast \subseteq \mathcal{F}$ in the sense of \cite{ziezold}, that is $\bigcap_{n=1}^\infty~\overline{\bigcup_{k = n}^\infty \hat{f}_k}~\subseteq~f^\ast$.
\end{lemma}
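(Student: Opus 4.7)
My plan is to adapt \citet{ziezold}'s consistency argument for unconditional Fréchet sample means to the present conditional setting, treating $(\mathcal{F},\dist_\mathcal{F})$ as the parameter space and
\begin{equation*}
R_n(f) := \tfrac{1}{n}\sum_{i=1}^n \dist(y_i,f(x_i))^2, \qquad R(f) := \mathbb{E}\bigl[\dist(Y,f(X))^2\bigr]
\end{equation*}
as the empirical and population risk functionals, so that $\hat{f}_n = \argmin_{f\in\mathcal{F}} R_n(f)$, $f^\ast = \argmin_{f\in\mathcal{F}} R(f)$, and $R(f)<\infty$ for every $f\in\mathcal{F}$ by assumption. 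It suffices to show that any limit point $g$ of $(\hat{f}_n)$ in $\dist_\mathcal{F}$ satisfies $R(g)\le R(f)$ for every $f\in\mathcal{F}$.

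The first step is to reduce the almost-sure analysis to a countable dense family. Since $\mathcal{X}$ is compact (metric) and $\mathcal{Y}$ is separable, $C(\mathcal{X},\mathcal{Y})$ with the sup metric is separable and hence so is $\mathcal{F}$; fix a countable dense $\{f_k\}_{k\in\mathbb{N}}\subseteq\mathcal{F}$. Applying the strong law of large numbers to each integrable real i.i.d.\ sequence $\dist(Y_i,f_k(X_i))^2$ and intersecting the countably many full-measure events yields, almost surely, $R_n(f_k)\to R(f_k)$ for all $k$ simultaneously. The workhorse perturbation tool is the 1-Lipschitz estimate
\begin{equation*}
|\dist(y,f_1(x))-\dist(y,f_2(x))| \;\le\; \dist_\mathcal{F}(f_1,f_2), \qquad (x,y)\in\mathcal{X}\times\mathcal{Y},\ f_1,f_2\in\mathcal{F},
\end{equation*}
which combined with $a^2-b^2=(a-b)(a+b)$ and Cauchy--Schwarz implies continuity of $R$ on $\mathcal{F}$ and lets the pointwise SLLN extend from $\{f_k\}$ to all $f\in\mathcal{F}$ as $R_n(f)\to R(f)$.

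For the one-sided passage to the limit at $g$, with subsequence $\hat{f}_{n_j}\to g$ in $\dist_\mathcal{F}$, set $\epsilon_j:=\dist_\mathcal{F}(\hat{f}_{n_j},g)\to 0$; the reverse triangle inequality and the elementary bound $(a-\epsilon)_+^2\ge a^2-2a\epsilon$ yield
\begin{equation*}
R_{n_j}(\hat{f}_{n_j}) \;\ge\; R_{n_j}(g) \;-\; \frac{2\epsilon_j}{n_j}\sum_{i=1}^{n_j} \dist(y_i,g(x_i)),
\end{equation*}
where the last sum is bounded by $\sqrt{R_{n_j}(g)}\to\sqrt{R(g)}$ via Cauchy--Schwarz, so $\liminf_j R_{n_j}(\hat{f}_{n_j})\ge R(g)$. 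Chaining with the minimizing property $R_{n_j}(\hat{f}_{n_j})\le R_{n_j}(f)$ for any fixed $f\in\mathcal{F}$ and the extended SLLN $R_{n_j}(f)\to R(f)$ yields $R(g)\le R(f)$ for every $f\in\mathcal{F}$, hence $g\in f^\ast$, which is the claimed inclusion.

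The main obstacle is the absence of a uniform SLLN across $\mathcal{F}$: this is precisely why separability is essential (to collapse all exceptional null sets into one) and why the interchange of the limit in $j$ with the dense approximation $f_k\to g$ (respectively $f_k\to f$) has to be carried out with some care. The 1-Lipschitz estimate is the single tool that controls every perturbation term and makes both the extension of the SLLN to all of $\mathcal{F}$ and the one-sided passage to the limit for the minimizers go through.
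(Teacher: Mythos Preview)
Your argument is correct and self-contained, but it follows a genuinely different route from the paper. The paper's proof is a three-line verification that the present setup is an instance of Huckemann's general theorem on strong consistency of Fr\'echet $\rho$-means: setting $\rho((x,y),f)=\dist(y,f(x))$, it checks that $\rho$ is continuous in $(x,y)$, uniformly continuous in $f$ (via the same 1-Lipschitz bound you use), and that $(\mathcal{F},\dist_\mathcal{F})$ is separable (as a subspace of $C(\mathcal{X},\mathcal{Y})$), then cites Huckemann. You instead re-derive Ziezold-type consistency from first principles: SLLN on a countable dense set, extension to all of $\mathcal{F}$ by the Lipschitz estimate, and a one-sided limit passage for the minimizing subsequence. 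The two approaches share exactly the same key ingredients---separability of $C(\mathcal{X},\mathcal{Y})$ and the reverse-triangle-inequality bound $|\dist(y,f_1(x))-\dist(y,f_2(x))|\le\dist_\mathcal{F}(f_1,f_2)$---so in substance you are unpacking Huckemann's theorem in this special case. The paper's route is shorter and highlights that the lemma is a corollary of existing theory; yours is more elementary and does not require the reader to look up an external result. One small point of exposition: when you write ``the last sum is bounded by $\sqrt{R_{n_j}(g)}$'', you mean the \emph{average} $\tfrac{1}{n_j}\sum_i \dist(y_i,g(x_i))$, and you should also note explicitly that $g\in\mathcal{F}$ (since the closure in the Ziezold statement is taken in $\mathcal{F}$), so that your extended SLLN applies to $R_{n_j}(g)$.
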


This statement is a consequence of a theorem on strong consistency of generalized Fréchet means \citep{huckemann}, here given as Theorem \ref{theo:huckemann}. See Subsection \ref{subsec:proof_consistency} for more details. Lemma \ref{lem:consistency} shows that the sample conditional Fréchet mean is a consistent estimator (for $n \to \infty$) for continuous regression models, which means that consistency does not impose serious constraints on the quotient metric spaces for which we will define our regression model.

Note that this statement on consistency also holds true if $f^\ast = \emptyset$, i.e if there is no $f \in \mathcal{F}$ which minimizes the total variation. To ensure $f^\ast \neq \emptyset$ strong additional assumptions on $\mathcal{F}$ need to be imposed such as in the following statement (proof in Appendix \ref{proof:existence}).
\begin{lemma}[Existence] \label{lem:existence}
Let $\mathcal{X}$ be compact, $\mathcal{Y}$ complete and $\mathcal{F} \subseteq C(\mathcal{X}, \mathcal{Y})$ closed and totally bounded. Then $f \mapsto \mathbb{E}(\dist(Y, f(X))^2)$ attains its minimum on $\mathcal{F}$, i.e. $f^\ast \neq \emptyset$.
\end{lemma}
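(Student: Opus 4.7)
The plan is a standard extreme value argument: I would show that $\mathcal{F}$ is compact in the uniform metric $\dist_\mathcal{F}$ from Lemma \ref{lem:consistency} and that the risk $L(f) := \mathbb{E}(\dist(Y, f(X))^2)$ is continuous on $\mathcal{F}$, then invoke the extreme value theorem to conclude that $L$ attains its infimum.

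For compactness, I would first argue that $C(\mathcal{X}, \mathcal{Y})$ equipped with $\dist_\mathcal{F}$ is itself complete: since $\mathcal{X}$ is compact and $\mathcal{Y}$ is complete, every $\dist_\mathcal{F}$-Cauchy sequence $(f_n)$ converges pointwise to some $f : \mathcal{X} \to \mathcal{Y}$, the convergence is in fact uniform, and the uniform limit of continuous functions into a metric space is continuous. The closed subset $\mathcal{F}$ is therefore complete, and complete plus totally bounded yields compactness.

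The continuity of $L$ is the heart of the argument. If $f_n \to f$ in $\dist_\mathcal{F}$, then the reverse triangle inequality
\[
|\dist(Y, f_n(X)) - \dist(Y, f(X))| \leq \dist(f_n(X), f(X)) \leq \dist_\mathcal{F}(f_n, f) \to 0
\]
gives almost sure pointwise convergence of the integrands. To pass this to convergence in expectation I would invoke dominated convergence, and here total boundedness is crucial: it implies that $\mathcal{F}$ has finite diameter $D$. Choosing any $f_0 \in \mathcal{F}$ with $L(f_0) < \infty$ (which exists by the standing assumption made in Definition \ref{def:model_based_Frechet_mean}; otherwise $L \equiv \infty$ on $\mathcal{F}$ and the claim is trivial), the triangle inequality combined with $(a+b)^2 \leq 2a^2 + 2b^2$ yields the uniform envelope
\[
\dist(Y, f(X))^2 \leq 2 \dist(Y, f_0(X))^2 + 2 D^2,
\]
which is integrable. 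Dominated convergence then gives $L(f_n) \to L(f)$, so $L$ is continuous on the compact set $\mathcal{F}$ and attains its minimum.

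The main obstacle is precisely this envelope step. Uniform convergence $f_n \to f$ alone gives only pointwise convergence of $\dist(Y, f_n(X))^2$, which is generally insufficient for convergence of expectations when $Y$ takes values in an unbounded metric space. It is total boundedness of $\mathcal{F}$ (and not merely closedness) that furnishes the uniform diameter bound underlying the integrable dominating function; once this envelope is in hand, the remainder is a routine combination of completeness of $C(\mathcal{X}, \mathcal{Y})$ under the sup-metric, the characterization of compactness as completeness plus total boundedness, and the extreme value theorem.
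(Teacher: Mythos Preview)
Your proof is correct and follows essentially the same route as the paper: both establish compactness of $\mathcal{F}$ via completeness of $C(\mathcal{X},\mathcal{Y})$ (inherited from $\mathcal{Y}$) together with closedness and total boundedness, then invoke continuity of the risk and the extreme value theorem. Your dominated-convergence argument with the envelope $2\dist(Y,f_0(X))^2 + 2D^2$ is in fact more careful than the paper's one-line claim that the risk is ``continuous as a composition of continuous functions,'' which glosses over exactly the integrability issue you address.
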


%The limitations result from the fact that an algorithm for solving the optimization problem in \ref{eq:empirical_metric_reg} must be available and that we want to restrict the space 
%possible models $\mathcal{F}$ to avoid overfitting the data given by a finite sample ($n \in \mathbb{N}$) in an interpretable way. 
To motivate now a natural and interpretable model space $\mathcal{F}$, linear regression will serve as a prototype: in the case where $\mathcal{Y}$ is a Hilbert space and $\mathcal{X} \subset  \R^k$, $\mathcal{F}$ can be chosen as the space of affine linear functions $\mathcal{X}\rightarrow\mathcal{Y}$. The minimization problem in \eqref{eq:empirical_metric_reg} then yields an analytical solution, the minimizer $\hat{f}$ is unique and corresponds to the usual linear predictor $\hat{f}(x_1, \dots, x_k) = \hat{\beta}_0 + \sum_{j = 1}^k \hat{\beta}_j x_j$ with coefficients estimated analogously as for $\mathcal{Y} = \mathbb{R}$. That is for a design matrix $\Xi=(x_{ij})_{i=1, \dots, n; j=0, \dots k}$ with $x_{i0}\equiv 1$ and $\mathbf{y} = (y_1, \dots, y_n) \in \otimes_{i = 1}^n \mathcal{Y}$, a minimizing function $\hat{f} \in \mathcal{F}$ is given by the coefficients 
\begin{align} \label{ols}
    \hat{\boldsymbol{\beta}} = (\hat{\beta_0}, \dots \hat{\beta_k})^T = (\Xi^T \Xi)^{-1}\Xi^T \mathbf{y},
\end{align}
where the matrix times vector multiplication in $\otimes_{i = 1}^n \mathcal{Y}$ is defined as in $\R^n$ and $\Xi^T \Xi$ is assumed to be invertible. A proof for this statement can be found in \cite{ramsay_1991}.
Since general metric spaces, however, lack the notion of linearity, linear models cannot be directly defined here. Instead, we will use the quotient space structure to motivate a suitable generalization. 
%Yet, in the case of one covariate, $\mathcal{X} \subset \R$,  natural candidates  to generalize the concept of affine linear functions are geodesic curves, i.e.\ paths with shortest possible length between two points and constant speed parametrization. %, are natural candidates to generalize the concept of affine linear functions from Euclidean spaces. 
%As such, it deems natural to request that geodesics are contained in $\mathcal{F}$. Without the accustomed properties of a linear predictor, however, restricting to geodesics in each variable of a multiple regression model is problematic \citep{Huckemann2010intrinsicMANOVA}.
%Moreover, in our case, the proposed model space will not only allow for additional, natural dependency structures beyond geodesics, but will be crucial to make model estimation feasible.
%However, computation of distance minimizing geodesics is not feasible in general metric spaces. Estimation strategies only exist in special cases, for example for Riemannian manifold valued data \citep{fletcher}, but rely on the tangent space structure of the manifold and can therefore not be generalized to arbitrary metric spaces. Therefore, we restrict this discussion to the case of metric quotient spaces induced by isometries, and examine which structure we need on the original space to propose a suitable regression space $\mathcal{F}$.

\subsection{Regression in quotient metric spaces}
\label{subsec:quotient_reg}
Since we are considering regression in quotient metric spaces arising from an isometric group action, we briefly review the relevant concepts before defining our regression model for quotient spaces. We first summarize how a quotient space resulting from an isometric group action can be turned into a metric space. %Similar constructions exist for arbitrary equivalence relations \citep{burago} (not necessarily resulting from an isometric group action) but are not discussed here.

\begin{defi}[Quotient metric space] \label{defi:quotient_metric}
Let $(\mathcal{Y}, \dist)$ be a metric space and $G$ a group acting on $\mathcal{Y}$ by isometries. The quotient pseudometric $\dist_G$ is defined as
\begin{align*}
\dist_G(y_1, y_2) = \inf_{g \in G} \dist(y_1, g \circ y_2)
\end{align*}
for all $y_1, y_2 \in \mathcal{Y}$.
Since $\dist_G$ defines an equivalence relation on $\mathcal{Y}$ via $y_1 \sim y_2 \Leftrightarrow \dist_G(y_1, y_2) = 0$, there is a natural quotient metric space $(\mathcal{Y}/\dist_G, \dist_G)$ of $\mathcal{Y}$ under the action $G$. Elements of $\mathcal{Y}/\dist_G$ are denoted by $[y]$ for $y \in \mathcal{Y}$, and $\dist_G$ naturally defines a metric on the equivalence classes in $\mathcal{Y}/\dist_G$.
\end{defi}
A proof that $\dist_G$ is indeed a pseudometric on $\mathcal{Y}$ and therefore a metric on $\mathcal{Y}/\dist_G$ can be found in \cite{burago}. We will denote $(\mathcal{Y}/G, \dist_G) = (\mathcal{Y}/\dist_G, \dist_G)$, although the topological quotient $\mathcal{Y}/G$ defined via the equivalence relation $y_1 \sim y_2 \Leftrightarrow \exists g \in G: y_1 = g \circ y_2$ and $\mathcal{Y}/\dist_G$ do in general not coincide. In fact, $\dist_G$ does not in general define a metric on the topological quotient $\mathcal{Y}/G$, since there can be elements with $\dist_G(y_1, y_2) = 0$ for which there is no $g \in G$ such that $y_1 = g \circ y_2$. Nevertheless, the notation $\mathcal{Y}/G$ is common, for example in the SRV-framework \citep{srivastava_book}, where $\mathcal{A}/\Gamma$ is used instead of $\mathcal{A}/d$ to denote the set of equivalence classes with respect to the elastic distance $d$, and we thus use it here for consistency. This notation emphasizes the dependence on the group $G$ instead of the metric $\dist_G$ it induces.

The following lemma shows that separability and completeness carry over from the original space $\mathcal{Y}$ to the quotient. Thus, assumptions on $\mathcal{Y}/G$ (e.g. such as needed in Lemma \ref{lem:consistency} and \ref{lem:existence}) can be reduced to those on $\mathcal{Y}$.
%allows to reduce the assumptions on $\mathcal{Y}/G$ needed in Lemma \ref{lem:consistency} to show consistency to assumptions on the original space $\mathcal{Y}$. 
\begin{lemma} \label{lem:separable_complete}
\begin{enumerate}
    \item[i)]  $\mathcal{Y}$ separable $\Rightarrow$ $(\mathcal{Y}/G, \dist_G)$ separable.
    \item[ii)] $\mathcal{Y}$ complete $\Rightarrow$ $(\mathcal{Y}/G, \dist_G)$ complete.
\end{enumerate}
\end{lemma}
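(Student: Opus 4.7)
For both parts the key object is the canonical projection $\pi:\mathcal{Y}\to\mathcal{Y}/G$, $\pi(y)=[y]$, which is $1$-Lipschitz since $\dist_G([y_1],[y_2])=\inf_{g\in G}\dist(y_1,g\circ y_2)\le\dist(y_1,y_2)$ directly from Definition \ref{defi:quotient_metric}. For part (i), the plan is: take a countable dense $D\subseteq\mathcal{Y}$ and show that $\pi(D)$ is countable and dense in $\mathcal{Y}/G$. Given $[y]\in\mathcal{Y}/G$ and $\varepsilon>0$, pick $y'\in D$ with $\dist(y,y')<\varepsilon$; the $1$-Lipschitz property of $\pi$ yields $\dist_G([y],[y'])<\varepsilon$, so $\pi(D)$ is dense. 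This part is routine.

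For part (ii), the plan is the standard Cauchy-subsequence argument, with representatives chosen adaptively along the orbits. Start with a Cauchy sequence $([y_n])_{n}$ in $(\mathcal{Y}/G,\dist_G)$ and extract a subsequence $([y_{n_k}])_{k}$ with $\dist_G([y_{n_k}],[y_{n_{k+1}}])<2^{-k-1}$. I would then build representatives $\tilde y_k\in[y_{n_k}]$ inductively so that $(\tilde y_k)_k$ is Cauchy in $\mathcal{Y}$: set $\tilde y_1=y_{n_1}$, and given $\tilde y_k$, use that $\dist_G(\tilde y_k,y_{n_{k+1}})=\inf_{g\in G}\dist(\tilde y_k, g\circ y_{n_{k+1}})<2^{-k-1}$ to pick $g_k\in G$ with $\dist(\tilde y_k, g_k\circ y_{n_{k+1}})<2^{-k}$, and put $\tilde y_{k+1}=g_k\circ y_{n_{k+1}}$. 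A telescoping bound $\dist(\tilde y_k,\tilde y_{k+m})<\sum_{j=k}^{k+m-1}2^{-j}<2^{-k+1}$ shows $(\tilde y_k)$ is Cauchy in $\mathcal{Y}$, so by completeness it converges to some $y^\ast\in\mathcal{Y}$. Applying $\pi$ and using its $1$-Lipschitz property gives $\dist_G([y_{n_k}],[y^\ast])\le\dist(\tilde y_k,y^\ast)\to 0$, hence $[y_{n_k}]\to[y^\ast]$ in $\mathcal{Y}/G$. Finally, a Cauchy sequence with a convergent subsequence converges to the same limit, so $[y_n]\to[y^\ast]$, establishing completeness.

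The main technical point—and the only step that requires genuine care—is the inductive choice of representatives in part (ii). The infimum defining $\dist_G$ need not be attained, so one cannot simply pick a minimizer; instead the slack must be controlled quantitatively. The trick of thinning the sequence so that $\dist_G([y_{n_k}],[y_{n_{k+1}}])<2^{-k-1}$ leaves room for an additional $2^{-k-1}$ error in picking $g_k$, giving a summable bound $2^{-k}$ which is what makes $(\tilde y_k)$ Cauchy. Separability and completeness of $\mathcal{Y}$ are used only at the end of each argument (to obtain countability of $\pi(D)$ and existence of $y^\ast$, respectively); the isometric nature of the action is what makes $\dist_G$ a genuine pseudometric and ensures the representative-switching step does not affect the quotient distance, as already noted following Definition \ref{defi:quotient_metric}.
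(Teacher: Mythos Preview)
Your proposal is correct and follows essentially the same approach as the paper: the projection is $1$-Lipschitz for part (i), and for part (ii) you pass to a rapidly Cauchy subsequence and inductively choose orbit representatives whose successive distances are summable, then use completeness of $\mathcal{Y}$ and project back. The paper phrases the inductive choice as a composition $g_1\circ\cdots\circ g_{k-1}\circ y_k$ rather than your $\tilde y_k$, and uses the bound $2^{-k}$ (yielding representative distances $<2^{-(k-1)}$) where you use $2^{-k-1}$ (yielding $<2^{-k}$), but these are cosmetic differences only.
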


A proof for these statements can be found in the appendix. For the special case of elastic curves, \ref{lem:separable_complete} ii) was also shown in \citet[Lemma 13]{bruveris2016optimal}. 

\subsubsection{Quotient regression models} \label{subsubsec:quotienregression}

Given the construction of such a quotient metric space $(\mathcal{Y}/G, \dist_G)$, there is a natural way to induce a model space $\mathcal{F}$ for regression on $\mathcal{Y}/G$ from a given model space $\Phi$ of functions $\varphi: \mathcal{X} \rightarrow \mathcal{Y}$. Given $\Phi$, e.g.\ affine linear functions for the case of $\mathcal{Y}$ a Hilbert space, we let $\mathcal{F}$ be the space of (point-wise) projections $x \mapsto [\varphi(x)]$ of functions $\varphi \in \Phi$ on $\mathcal{Y}/G$ which we denote by $\Phi\doubleslash G = \{f: \mathcal{X} \rightarrow \mathcal{Y}/G, x \mapsto [\varphi(x)] \mid \varphi \in \Phi\}$. 
$\Phi\doubleslash G$ is the quotient space of $\Phi$ with respect to the equivalence relation $\varphi_1 \sim \varphi_2 \Leftrightarrow \forall x\in\mathcal{X} : \varphi_1(x) \sim \varphi_2(x)$. %$\Leftrightarrow \dist_{\Phi\doubleslash G}([\varphi_1], [\varphi_2]) = 0$ for $\varphi_1,\varphi_2\in\Phi$, where $\dist_{\Phi\doubleslash G}([\phi_1], [\phi_2]) = \sup_{x\in\mathcal{X}} \dist([\phi_1(x)], [\phi_2(x)])$ in accordance with the notation in Lemma \ref{lem:consistency}.
%If the $\varphi\in\Phi$ are bounded, $\Phi$ can be equipped with the supremum metric $\dist_\Phi(\varphi_1, \varphi_2) = \sup_{x\in\mathcal{X}} \dist(\varphi_1(x), \varphi_2(x))$ and $\Phi\doubleslash G = \Phi / \mathcal{G}$ presents itself a quotient metric space of $(\Phi, \dist_\Phi)$ under the point-wise action of the group $\mathcal{G}$ of functions $\mathcal{X} \rightarrow G$, $x \mapsto g_x$ given by $x \mapsto g_x \circ \varphi(x)$ for any $\varphi\in\Phi$. 
We refer to regression with model space $\mathcal{F} = \Phi\doubleslash G$ for a conditional Fréchet mean in $\mathcal{Y}/G$ as \emph{quotient regression} (over $\Phi$). Note that we now focus on regression on $\mathcal{Y}/G$ instead of the original space $\mathcal{Y}$, i.e we replace $\mathcal{Y}$ by $\mathcal{Y}/G$ in Definition \ref{def:model_based_Frechet_mean}, while keeping the model space denoted as $\mathcal{F}$ for simplicity.

\begin{defi}[Quotient regression] \label{defi:quotient_reg}
Let $x_1, \dots, x_n \in \mathcal{X}$ be realizations of a random variable $X$ and let $[y_1], \dots, [y_n] \in \mathcal{Y}/G$ be realizations of a random variable $[Y]$ taking values in $\mathcal{Y}/G$, where $(\mathcal{Y}, \dist)$ is a metric space and $G$ a group acting on $\mathcal{Y}$ by isometries. Then, for a model space $\Phi = \{\varphi:\mathcal{X} \rightarrow \mathcal{Y}\}$ we define the quotient regression model (over $\Phi$) on $\mathcal{Y}/G$ as the conditional Fréchet mean $\mathcal{E}([Y]| X = x) = f^*(x)$ assuming
\begin{align*}
  f^* = \argmin_{f \in \Phi\doubleslash G} \mathbb{E}(\dist_G([Y], f(X)))^2),
\end{align*}
which is estimated as $\hat{f}(x) = \hat{f}_n(x) = [\hat{\varphi}(x)]$ with
\begin{align} \label{eq:quotient_reg}
\hat{\varphi} = \argmin_{\varphi \in \Phi} \sum_{i = 1}^n  \dist_G([y_i], [\varphi(x_i)])^2 = \argmin_{\varphi \in \Phi} \sum_{i = 1}^n \inf_{g_i \in G} \dist(g_i \circ y_i, \varphi(x_i))^2.
\end{align}
\end{defi}

While it is not immediately clear that quotient regression is a good model for every combination of $\mathcal{Y}$, $G$ and $\Phi$, we will, in this section, give some evidence that it  is in several cases. In particular, we will later illustrate its benefits in our example of elastic curve modeling based on a multiple linear spline predictor.
Another example of quotient regression with model space $\Phi = \{\varphi: \mathbb{R}^k \to \mathcal{Y}, \varphi \text{ affine linear}\}$ has been suggested by \cite{calissano} for the special case of $\mathcal{Y}$ being the set of networks and $G$ being the permutation group on the set of nodes. In particular, our result on consistency for the quotient regression model (Corollary \ref{cor:consistency}) also applies to their case. %Our construction shows that their model is an example of a general class of models, which can be defined for the quotient of an arbitrary Hilbert space by  a group which acts on $\mathcal{Y}$ by isometries.
Note that, by contrast, approaches inducing a probability distribution on $\mathcal{Y}/G$ via some distribution on $\mathcal{Y}$ \citep[such as in, e.g.\ offset normal shape distributions, ][Chap. 11]{dryden2016statistical} are, in general, fundamentally different from our distribution-free approach that constructs the model space via projection  while the mean is defined to minimize the distance $\dist_G$. 

Consistency of the quotient regression model carries over from Lemma \ref{lem:consistency} using that %the results from Lemma \ref{lem:separable_complete} i) for
separability of $\mathcal{Y}/G$ (based on Lemma \ref{lem:separable_complete} i)) and continuity of $\Phi \doubleslash G \subset C(\mathcal{X}, \mathcal{Y}/G)$ carry over from $\mathcal{Y}$ and $\Phi$, respectively.

\begin{cor}[Consistency for quotient regression] \label{cor:consistency}
Let $\mathcal{X}$ be compact and $\mathcal{Y}$ separable. Let $\Phi \subseteq C(\mathcal{X}, \mathcal{Y})$ be a subset of the continuous functions from  $\mathcal{X}$ to $\mathcal{Y}$, $\Phi\doubleslash G$ equipped with the metric $\dist_{\Phi\doubleslash G}$, and $\mathbb{E}(\dist(Y, [\varphi(X)])^2) < \infty$ for all $\varphi \in \Phi$. Then $\hat{f}_n$ is a strongly consistent estimator of $f^\ast \subseteq \mathcal{F}$ in the sense of Lemma \ref{lem:consistency}.
\end{cor}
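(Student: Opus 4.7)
The plan is to reduce Corollary \ref{cor:consistency} directly to Lemma \ref{lem:consistency} by applying the latter with the quotient space $(\mathcal{Y}/G, \dist_G)$ playing the role of the response space and $\Phi \doubleslash G$ playing the role of the model space $\mathcal{F}$. Once the hypotheses of Lemma \ref{lem:consistency} are seen to transfer from $(\mathcal{Y}, \dist, \Phi)$ to $(\mathcal{Y}/G, \dist_G, \Phi \doubleslash G)$, strong consistency in the Ziezold sense follows immediately with no further work.

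To carry this out, I would verify the three assumptions of Lemma \ref{lem:consistency} in turn. Separability of $\mathcal{Y}/G$ is handed to us by Lemma \ref{lem:separable_complete}(i). For the continuity requirement $\Phi \doubleslash G \subseteq C(\mathcal{X}, \mathcal{Y}/G)$, I would observe that the canonical projection $\pi: \mathcal{Y} \to \mathcal{Y}/G$, $y \mapsto [y]$, is $1$-Lipschitz, since $\dist_G([y_1], [y_2]) \leq \dist(y_1, y_2)$ follows by choosing $g$ equal to the identity in the infimum of Definition \ref{defi:quotient_metric}; consequently $x \mapsto [\varphi(x)] = \pi \circ \varphi(x)$ is continuous for every $\varphi \in \Phi \subseteq C(\mathcal{X}, \mathcal{Y})$. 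The finite second-moment condition $\mathbb{E}(\dist_G([Y], f(X))^2) < \infty$ for all $f \in \Phi \doubleslash G$ is assumed directly. Finally, the sup-metric $\dist_{\Phi \doubleslash G}(f_1, f_2) = \sup_{x \in \mathcal{X}} \dist_G(f_1(x), f_2(x))$ on $\Phi \doubleslash G$ is the direct analogue of the metric featured in Lemma \ref{lem:consistency}.

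It remains to match up the estimators and population targets. The empirical objective in \eqref{eq:quotient_reg} depends on $\varphi \in \Phi$ only through its equivalence class $[\varphi(\cdot)] \in \Phi \doubleslash G$, because $\dist_G([y_i], [\varphi(x_i)])$ is invariant under the equivalence relation on $\Phi$; the same invariance holds for the population objective defining $f^\ast$. Hence the set of minimizers in $\Phi$, when projected to $\Phi \doubleslash G$, coincides with the set of sample (resp.\ population) Fréchet minimizers of Lemma \ref{lem:consistency} applied to the quotient setup. Applying that lemma then delivers $\bigcap_{n=1}^\infty \overline{\bigcup_{k=n}^\infty \hat{f}_k} \subseteq f^\ast$ as required. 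The only point needing any genuine attention is the transfer of continuity, which is essentially automatic given the contractivity of $\pi$; beyond that, no properties of the group action other than its isometric nature (already built into the definition of $\dist_G$) enter the argument, so this should be the cleanest of the reductions in the paper.
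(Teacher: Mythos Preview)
Your proposal is correct and follows essentially the same approach as the paper, which states just before the corollary that consistency carries over from Lemma~\ref{lem:consistency} via separability of $\mathcal{Y}/G$ (Lemma~\ref{lem:separable_complete}(i)) and continuity of $\Phi\doubleslash G \subset C(\mathcal{X}, \mathcal{Y}/G)$. Your write-up simply fills in the details the paper leaves implicit, in particular the $1$-Lipschitz property of the projection $\pi$ and the invariance of the objective under the equivalence relation on $\Phi$.
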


We can also formulate requirements as in Lemma \ref{lem:existence} to ensure that the quotient regression model is not empty. Note that all requirements are given for the original space $\mathcal{Y}$ and the model space $\Phi$ instead of $\mathcal{Y}/G$ and $\mathcal{F}$.

\begin{cor}[Existence for quotient regression] \label{lem:existence_quotient}
Let $\mathcal{X}$ be compact, $\mathcal{Y}$ complete and $\Phi \subseteq C(\mathcal{X}, \mathcal{Y})$ closed and totally bounded. Then
$\varphi \mapsto \mathbb{E}(\dist_G([Y], [\varphi(X)])^2)$ attains its minimum on $\Phi$.
\end{cor}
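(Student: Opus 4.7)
The plan is to reduce Corollary \ref{lem:existence_quotient} to Lemma \ref{lem:existence} applied to the pair $(\mathcal{Y}/G, \Phi\doubleslash G)$ instead of $(\mathcal{Y}, \Phi)$. This requires verifying that $\mathcal{Y}/G$ is complete and that $\Phi\doubleslash G$ is a closed and totally bounded subset of $C(\mathcal{X}, \mathcal{Y}/G)$, equipped with the sup metric $\dist_{\Phi\doubleslash G}([\varphi_1], [\varphi_2]) = \sup_{x\in\mathcal{X}} \dist_G([\varphi_1(x)], [\varphi_2(x)])$. Completeness of $\mathcal{Y}/G$ is immediate from Lemma \ref{lem:separable_complete} ii). Moreover, since $\dist_G([y_1],[y_2]) \le \dist(y_1, y_2)$ by definition of the quotient pseudometric, the canonical projection $\pi: \mathcal{Y} \to \mathcal{Y}/G$ is $1$-Lipschitz; in particular, $x \mapsto [\varphi(x)]$ is continuous whenever $\varphi \in \Phi \subseteq C(\mathcal{X}, \mathcal{Y})$, so $\Phi \doubleslash G \subseteq C(\mathcal{X}, \mathcal{Y}/G)$ as required.

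Next, I would argue that the map $\Pi: \Phi \to \Phi\doubleslash G$, $\varphi \mapsto [\varphi(\cdot)]$ is itself $1$-Lipschitz with respect to the two sup metrics, since
\begin{align*}
\dist_{\Phi\doubleslash G}([\varphi_1],[\varphi_2]) = \sup_{x} \dist_G([\varphi_1(x)], [\varphi_2(x)]) \le \sup_{x} \dist(\varphi_1(x), \varphi_2(x)) = \dist_\Phi(\varphi_1, \varphi_2).
\end{align*}
Because $\mathcal{X}$ is compact and $\mathcal{Y}$ complete, $C(\mathcal{X}, \mathcal{Y})$ with the sup metric is complete, so $\Phi$, being closed and totally bounded in a complete space, is compact. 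The continuous image $\Phi\doubleslash G = \Pi(\Phi)$ is therefore also compact in $C(\mathcal{X}, \mathcal{Y}/G)$, hence closed and totally bounded there.

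With these ingredients in place, Lemma \ref{lem:existence} applied to $(\mathcal{Y}/G, \Phi\doubleslash G)$ yields some $f^\ast \in \Phi\doubleslash G$ minimizing $f \mapsto \mathbb{E}(\dist_G([Y], f(X))^2)$. By construction $f^\ast = [\varphi^\ast(\cdot)]$ for some $\varphi^\ast \in \Phi$, and the objective $\varphi \mapsto \mathbb{E}(\dist_G([Y], [\varphi(X)])^2)$ factors through $\Pi$: any two $\varphi_1, \varphi_2 \in \Phi$ with $[\varphi_1]=[\varphi_2]$ in $\Phi\doubleslash G$ produce the same value. Consequently $\varphi^\ast$ attains the minimum on $\Phi$, as claimed.

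I expect the only subtle point to be handling the closedness of $\Phi\doubleslash G$, since closedness does not generally transfer through projections; the workaround is to promote ``closed and totally bounded in a complete space'' to compactness first and then exploit the fact that continuous images of compacta are compact. The remaining verifications (that the projection is $1$-Lipschitz and that the objective descends to the quotient) are essentially direct from Definition \ref{defi:quotient_metric}.
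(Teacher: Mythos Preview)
Your proof is correct, but it takes a somewhat more circuitous route than the paper's. The paper argues directly on $\Phi$: since $\mathcal{Y}$ is complete, $C(\mathcal{X},\mathcal{Y})$ is complete, so the closed and totally bounded set $\Phi$ is compact; then the objective $\varphi \mapsto \mathbb{E}(\dist_G([Y],[\varphi(X)])^2)$ is continuous on $\Phi$ as a composition of continuous maps, and a continuous function on a compact set attains its minimum. You establish the same compactness of $\Phi$, but then push it forward via the $1$-Lipschitz projection $\Pi$ to obtain compactness of $\Phi\doubleslash G$, invoke Lemma~\ref{lem:existence} on $(\mathcal{Y}/G,\Phi\doubleslash G)$, and finally pull the minimizer back through $\Pi$. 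Your route has the virtue of being a genuine reduction to Lemma~\ref{lem:existence} (and nicely isolates where Lemma~\ref{lem:separable_complete}~ii) enters), whereas the paper's route is shorter because it bypasses the quotient of function spaces entirely and applies the compactness\,+\,continuity argument once on $\Phi$ itself. Both rest on the same key point, namely that closed and totally bounded in a complete space implies compact.
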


In the remainder of this subsection, we discuss computational aspects of quotient regression estimators. 
Here, quotient regression offers a straight-forward estimation scheme if, for realizations $\tilde{y}_1, \dots, \tilde{y}_n$ of a random variable $\tilde{Y}$ in the original space $\mathcal{Y}$, an estimator $\tilde{\varphi}$ of $\varphi^* = \argmin_{\varphi\in\Phi} \mathbb{E}[\dist(Y, \varphi(X))^2]$ is available:
in this case, we address the minimization problem in \eqref{eq:quotient_reg} via alternating 1) updating $\hat{f}(x) = [\hat{\varphi}(x)]$ by setting $\hat{\varphi}$ to the $\tilde{\varphi}$ fitting the data $(\tilde{y}_i, x_i), i = 1, \dots, n$ for current response realizations $\tilde{y}_i\in[y_i]\subset\mathcal{Y}$, and 2) optimally aligning the data, i.e. finding $\tilde{y}_i = \argmin_{y \in [y_i]} \dist(y, \tilde{\varphi}(x_i))$ for each $y_i$ and a current estimator $\tilde{\varphi} \in \Phi$.\newline
Alternating algorithms are natural in settings such as ours and corresponding estimation schemes have successfully been used for estimation of Fréchet means in different quotient space scenarios, including, for instance, conditional mean estimation for unlabeled networks \citep{calissano} or unconditional estimation of Procrustes means in shape analysis \citep{dryden2016statistical}, of elastic mean curves \citep{steyer}, elastic mean shape \citep{srivastava_book}, and elastic full Procrustes mean shape estimation \citep{stoecker2022elastic}.

In practice it is necessary to compute numerical approximations $\tilde{\varphi} \in \Phi$ and $\tilde{y}_i = \tilde{g}_i \circ y_i$ for some $\tilde{g}_i \in G$, where true optima need not be unique or even exist in general. The algorithm iteratively reduces the loss in each step and returns a single $\hat{f}\in \Phi\doubleslash G$ even in cases where the set of empirical conditional Fréchet means  does not contain exactly one function. The resulting estimator $\hat{f}$ is expected to give a good fit to the data even if technically there exists a $f\in \mathcal{F}$ with a (slightly) lower empirical loss. Such differences are likely to be small compared to the variability introduced by finite samples, and the practically relevant issue of multiple local minima can be addressed by testing different initial values. 

%Depending on the setting, the alignments in 2) are not trivial, nor can we, in general and analogously to the alternating algorithms cited above, provide any guarantees for convergence of the described algorithm. 
%In fact, the algorithm will in practice involve computation of unique numerical approximations $\tilde{\varphi} \in \Phi$ and $\tilde{y}_i = \tilde{g}_i \circ y_i$ for some $\tilde{g}_i \in G$, whereas true optima do not necessarily have to be unique or even exist in general. Moreover, the case that the empirical conditional Fréchet mean $\hat{f} = \emptyset$ is empty is in theory also covered by the definition and previous results, whereas the algorithm will return a single $\hat{f}\in \Phi\doubleslash G$.
%Practically, however,  the algorithm iteratively decreases the loss function in each step and is expected to return useful results also in cases, where no existence and uniqueness results for these estimators are available. %Resulting from iterative reduction of the numerical square-error $\sum_{i=1}^n \dist(\tilde{g}_i \circ y_i, \tilde{\phi}(x_i))^2$ (potentially also with different starting values), 
%The resulting estimator $\hat{f}$ might still yield a good fit to the data even if, technically, there exists an $f\in \mathcal{F}$ with a (slightly) lower empirical loss. Such differences will likely be small compared to the variability induced by finite samples, and the real challenge of multiple local minima can be addressed by testing different starting values. 

\subsubsection{Quotient geodesic regression and geodesics on the quotient space} \label{subsubsec:quotiengeodesicregression}

For the case of a Riemannian manifold $\mathcal{Y}$, geodesic regression has been discussed by various authors \citep[e.g.,][]{fletcher} as natural generalization of simple linear regression on a single covariate $x\in\mathcal{X} \subset \mathbb{R}$ to curved spaces. In the context of manifolds, geodesics are typically defined as curves $c:(-\varepsilon, \varepsilon) \rightarrow \mathcal{Y}$ around some $\boldsymbol{\mu} = c(0)$ with a constant velocity $\boldsymbol{\beta} = \dot{c}(0)$ in the tangent space $T_{\boldsymbol{\mu}}\mathcal{Y}$ at $\boldsymbol{\mu}$. Locally, they correspond to paths of shortest length. 
For general metric spaces, and in particular for a quotient metric space $\mathcal{Y}/G$ with some general $G$, ``geodesics'' commonly directly refer to shortest paths, due to the lack of a manifold structure. As such they are less tangible, do not offer the same parameterization in terms of ``intercept'' $\boldsymbol{\mu}$ and ``slope'' $\boldsymbol{\beta}$, and the set of geodesics in $\mathcal{Y}/G$ does, in general, not yield a convenient model space $\mathcal{F}$. %In fact, even when geodesics between two points $[y_1], [y_2] \in \mathcal{Y}/G$ can be computed, finding a minimizing geodesic $\hat{f}$ is not necessarily feasible, as in our motivating example of elastic regression.

The next lemma gives a characterization of the shortest paths and therefore geodesics on the quotient metric space $(\mathcal{Y}/G, \dist_G)$ if $(\mathcal{Y}, \dist)$ is a length metric space, i.e.\ if the distance $d$ coincides with the intrinsic metric, which is the infimum of the lengths of all paths from one point to another.

\begin{lemma}[Shortest paths in quotient metric spaces] \label{lem:shortest_path_quotient}
Let $(\mathcal{Y}, \dist)$ be a length metric space and $G$ a group acting on $\mathcal{Y}$ by isometries. Let $y_1, y_2 \in \mathcal{Y}$ and assume there is a $\tilde{g} \in G$ with $\dist_G([y_1], [y_2]) = \dist(y_1, \tilde{g} \circ y_2)$, in which case we call $y_1$ and $\tilde{g} \circ y_2$  aligned. Furthermore assume there is a shortest path $\gamma:[a, b] \to \mathcal{Y}$ with $\gamma(a) = y_1$ and $\gamma(b) = \tilde{g} \circ y_2$, i.e $\gamma$ is a continuous function connecting $y_1$ and $\tilde{g} \circ y_2$ with minimal length. Then $[\gamma]$ is a shortest path in $(\mathcal{Y}/G, \dist_G)$ between $[y_1]$ and $[y_2]$, where $[\gamma]$ is the projection of $\gamma$ onto $\mathcal{Y}/G$, i.e. $[\gamma](t) = [\gamma(t)]$ for all $t \in [a, b]$.
\end{lemma}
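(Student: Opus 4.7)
The plan is to verify that the projected curve $[\gamma]$ is a path in $(\mathcal{Y}/G, \dist_G)$ with length equal to $\dist_G([y_1], [y_2])$, which is precisely the definition of a shortest path. Because the length of any continuous path between two points is bounded below by the distance between those points, everything reduces to producing a matching upper bound on the length of $[\gamma]$.

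The key observation is that the canonical projection $\pi: \mathcal{Y} \to \mathcal{Y}/G$, $y \mapsto [y]$, is $1$-Lipschitz: substituting $g = e$ in the infimum of Definition \ref{defi:quotient_metric} yields $\dist_G([z_1], [z_2]) \leq \dist(z_1, z_2)$ for all $z_1, z_2 \in \mathcal{Y}$. From this, continuity of $[\gamma] = \pi \circ \gamma$ is immediate, as is the identification of its endpoints: $[\gamma(a)] = [y_1]$ and $[\gamma(b)] = [\tilde{g}\circ y_2] = [y_2]$. Writing $L$ and $L_G$ for the length functionals with respect to $\dist$ and $\dist_G$ respectively, applying the $1$-Lipschitz bound term-by-term to any partition of $[a,b]$ and taking suprema gives $L_G([\gamma]) \leq L(\gamma)$. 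Since $\gamma$ is a shortest path between $y_1$ and $\tilde{g}\circ y_2$ in the length space $(\mathcal{Y}, \dist)$ and, by the alignment assumption, $\dist(y_1, \tilde{g}\circ y_2) = \dist_G([y_1], [y_2])$, we obtain $L(\gamma) = \dist_G([y_1], [y_2])$ and hence $L_G([\gamma]) \leq \dist_G([y_1], [y_2])$.

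Combining with the automatic inequality $L_G([\gamma]) \geq \dist_G([\gamma(a)], [\gamma(b)]) = \dist_G([y_1], [y_2])$, which holds in any metric space by iterated use of the triangle inequality, we conclude $L_G([\gamma]) = \dist_G([y_1], [y_2])$, so $[\gamma]$ is indeed a shortest path in the quotient. The main conceptual point, rather than a hard technical obstacle, is that we do not need $(\mathcal{Y}/G, \dist_G)$ to itself be a length space: we only assert that one specific continuous path in the quotient attains the pseudometric distance between its endpoints. Everything else rests on two standard facts, namely that $1$-Lipschitz maps do not increase length and that lengths dominate endpoint distances.
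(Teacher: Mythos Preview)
Your proof is correct and follows essentially the same approach as the paper's: both obtain the upper bound $L_G([\gamma]) \leq L(\gamma)$ by taking $g = \mathrm{id}$ in the infimum defining $\dist_G$ (which you phrase as the projection being $1$-Lipschitz), and both obtain the matching lower bound from the general fact that length dominates endpoint distance. Your version is slightly more explicit in checking continuity of $[\gamma]$ and in noting that $(\mathcal{Y}/G,\dist_G)$ need not itself be a length space, but the underlying argument is the same.
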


A proof of this statement based on the argumentation of \cite{burago} can be found in the appendix. It shows that shortest paths in the quotient metric space, and therefore geodesics, are essentially a subset of those in the original space $(\mathcal{Y}, \dist)$, for which start $y_1$ and end point $y_2$ are aligned, that is $\argmin_{g \in G} \dist(y_1, g \circ y_2) = \text{id}$. %Again note that although this lemma tells us how to compute shortest paths between two given points in the quotient space with respect to the quotient metric, finding the geodesic in $\mathcal{Y}/G$ that minimizes the squared loss in \eqref{eq:empirical_metric_reg} with respect to $\dist_G$ is still not feasible in general settings. %For this reason we suggest to use as a model space $\mathcal{F}$ the larger space of projections  into $\mathcal{Y}/G$ of geodesics with respect to $d$ in the original space $\mathcal{Y}$. This space, according to Lemma \ref{lem:shortest_path_quotient}, contains the geodesics in $\mathcal{Y}/G$ as a subset and is easier to work with if $\mathcal{Y}$ has additional structure, such as a Hilbert space structure in the case of elastic curves (cf.\ Subsection \ref{subsec:quotientelastic}). 

%\begin{defi}[Quotient space regression with one covariate] \label{defi:quotient_reg}
%Let $x_1, \dots, x_n \in \mathcal{X} \subset \R$ be realizations of a random variable $X$ and let $[y_1], \dots, [y_n] \in \mathcal{Y}/G$ be realizations of a random variable $[Y]$ taking values in $\mathcal{Y}/G$, where $(\mathcal{Y}, \dist)$ is a length metric space and $G$ a group acting on $\mathcal{Y}$ by isometries. Then we define a regression model on $\mathcal{Y}/G$ as the conditional Fréchet mean 
%\begin{align*}
%  f(x) = \mathcal{E}([Y]| X = x) = \arginf_{[\mu] \in \mathcal{Y}/G} \mathbb{E}(\dist_G([Y], [\mu])^2| X = x),
%\end{align*}
%which we estimate as $\hat{f}(x) = [\hat{\varphi}(x)]$ with
%\begin{align} \label{eq:quotient_reg}
%\hat{\varphi} = \arginf_{\varphi \in \Phi} \sum_{i = 1}^n  \dist_G([y_i], [\varphi(x_i)])^2 = %\arginf_{\varphi \in \Phi} \sum_{i = 1}^n \inf_{g_i \in G} \dist(g_i \circ y_i, \varphi(x_i))^2,
%\end{align}
%where $\Phi$ consists of constant speed geodesics on $\mathcal{Y}$.
%\end{defi}
Lemma \ref{lem:shortest_path_quotient} tells us how to compute shortest paths between two given points in the quotient space with respect to the quotient metric. Yet, finding the geodesic in $\mathcal{Y}/G$ that minimizes the squared loss in \eqref{eq:empirical_metric_reg} with respect to $\dist_G$ is still not feasible in general settings, and there is even no numerical estimation algorithm available that would promise at least reasonable practical solutions. This is, in particular, the case in our motivating example of elastic regression.

As suitable alternative, we suggest \emph{quotient geodesic regression} for the case where $\mathcal{Y}$ carries a Riemannian manifold structure (or in particular a Hilbert space structure) that allows for geodesic (or linear) modeling $\Phi$, and show that the resulting model space $\Phi\doubleslash G$ in fact contains the geodesics in $\mathcal{Y}/G$. Moreover, Simulation 5 (Fig. \ref{fig:simulation_5}) in Section \ref{sec:simulation} gives one illustrative example of a non-geodesic model $f\in \Phi\doubleslash G$ that is likely desirable to also have  included in the model space, a further argument for a larger model space in practical data scenarios. 

\begin{defi}[Quotient geodesic regression] \label{defi:quotient_geodesic}
Referring to the setting of Definition \ref{defi:quotient_reg}, we call quotient regression on a single covariate $X$ in $\mathcal{X} \subset \R$ for a response $[Y]$ in $\mathcal{Y}/G$ quotient geodesic regression if $\Phi$ is the set of geodesics on $\mathcal{Y}$.
\end{defi}

Given the requirements on $\mathcal{X}$, $\mathcal{Y}$ and $(X,Y)$ in Lemma \ref{lem:consistency}, quotient regression %function $\hat{f}$ is a 
yields a consistent estimator $\hat{f}$ for all true $f^* \in \mathcal{F} = \Phi\doubleslash G$. %\{[\varphi]| \varphi \in \Phi\}$. 
Accordingly, in particular all true $f^*$ that are geodesics in $\mathcal{Y}/G$, which form a subset in this space (see Lemma \ref{lem:shortest_path_quotient}), can be consistently estimated by quotient geodesic regression, using the quotient $\Phi\doubleslash G$ of the geodesics $\Phi$ in $\mathcal{Y}$ as a larger model space than the geodesics in $\mathcal{Y}/G$. 

%Enlarging the model space in this way is advantageous if the loss minimizing geodesic in the quotient space cannot be computed, but the one in the original space can. In this setting, we can address the minimization problem in \eqref{eq:quotient_reg} via alternating between estimating a closest geodesic $\varphi \in \Phi$ to the data $g_i \circ y_i, i = 1, \dots, n$ in for fixed actions $g_i$, and optimally aligning the data, i.e. finding actions $g_i$ as minimizer of $\dist(g_i \circ y_i, \varphi(x_i))$, for a fixed function $\varphi \in \Phi$.

\subsubsection{Quotient linear models}
In quotient geodesic regression, we considered the special case of simple regression with a single covariate $x \in \mathcal{X} \subset \R$. We now consider multiple regression with covariates $\boldsymbol{x} \in \mathcal{X} \subset \R^k$  as a  basis for our main goal of elastic regression, and focus in the following on linear models. To facilitate a suitable linear structure, we consider the 
important special case where %The simplest setting in which 
%we can estimate geodesics that minimize the squared distance to the data is that 
$\mathcal{Y}$ is a Hilbert space, where geodesics are straight lines and the model space $\Phi$ %considered in Definition \ref{defi:quotient_reg} 
can be chosen as (a linear subspace of) the space of affine linear functions $\mathcal{X} \rightarrow \mathcal{Y}$. %, and a minimiser of the sum of squared distances for given actions $g_i, i = 1, \dots, n$ can be computed analytically in $\mathcal{Y}$ as discussed above (see \eqref{ols}). %Therefore, although we consider the more general case to be interesting as well, in particular if $\mathcal{Y}$ is a Riemannian manifold (e.g., a pre shape space), from now on
%In the following, we will focus on the case where $\mathcal{Y}$ is a Hilbert space, since this is the case when we consider elastic curves, and defer for the interesting more general case  of a Riemannian manifold  $\mathcal{Y}$  to Remark \ref{quotient:riemannian}.
In Section \ref{quotient:riemannian}, we will then also briefly discuss extensions to the more general case where $\mathcal{Y}$ is a Riemannian manifold.

%As geodesics are univariate functions mapping to a metric space,  we only defined a model in the univariate case $\mathcal{X} \subset \R$ so far.  If $(\mathcal{Y}, \|\cdot \|_\mathcal{Y})$ is a Hilbert space, our approach allows to also generalize the quotient space regression model \eqref{eq:quotient_reg}  to the case of multiple regression, $\mathcal{X} = \R^k$, by considering for $\Phi$ affine linear functions from $\R^k$ to $\mathcal{Y}$.

\begin{defi}[Quotient linear regression with multiple scalar covariates] \label{defi:mult_quotient_reg}
Let $\boldsymbol{x}_i = (x_{i,1}, \dots, x_{i,k})^\top \in \mathcal{X} \subset \R^k$, $i = 1, \dots, n$ be realizations of a random vector $\boldsymbol{X} = (X_1, \dots, X_k)^\top$ and let $[y_1], \dots, [y_n] \in \mathcal{Y}/G$ be realizations of a random variable $[Y]$ taking values in $\mathcal{Y}/G$, where $(\mathcal{Y}, \|\cdot \|_\mathcal{Y})$ is a Hilbert space and $G$ a group acting on $\mathcal{Y}$ by isometries. Then the quotient linear regression model %over a subspace $\Phi$ of the affine linear functions $\mathcal{X} \rightarrow \mathcal{Y}$ 
is a quotient regression over a model space $\Phi$ of affine linear functions $\mathcal{X} \rightarrow \mathcal{Y}$ given by $(k+1)$ parameters in a %$\Phi \cong \mathcal{B}^{k+1}$ with
subspace $\mathcal{B} \subset \mathcal{Y}$, that is $\mathcal{F}=\Phi \doubleslash G$ with elements
\begin{align*} 
    f:\R^k \to \mathcal{Y}/G; \quad
    \boldsymbol{x} = (x_1, \dots, x_k)^\top \mapsto %\mathcal{E}([Y]| X_1 = x_1, \dots X_k = x_k) =
    \left[\beta_0 + \sum_{j = 1}^k \beta_j x_{j} \right],
\end{align*}
where we assume $\mathcal{E}([Y]| X_1 = x_1, \dots X_k = x_k) = f(\boldsymbol{x}) = \left[\beta_0 + \sum_{j = 1}^k \beta_j x_{j} \right]$ and the coefficients $\beta_0, \dots, \beta_k\in \mathcal{B} \subset \mathcal{Y}$ are estimated as
\begin{align} \label{eq:mult_quotient_reg}
(\hat{\beta}_0, \dots, \hat{\beta}_k) =\argmin_{\beta_0, \dots, \beta_k \in \mathcal{B}} \sum_{i = 1}^n  \dist_G([y_i], [\beta_0 + \sum_{j = 1}^k \beta_j x_{i,j}])^2 = \argmin_{\beta_0, \dots, \beta_k \in \mathcal{B}} \sum_{i = 1}^n \inf_{g_i \in G} \left\| \beta_0 + \sum_{j = 1}^k \beta_j x_{i,j} - g_i \circ y_i \right\|_\mathcal{Y}^2.
\end{align} 
%for covariates $\\boldsymbol{x}_1, \dots, \\boldsymbol{x}_n \in \R^k, \\boldsymbol{x}_i = (x_{i,1}, \dots, x_{i,k})^{T}$, and response variables $[y_1], \dots, [y_n] \in \mathcal{Y}/G$. 
Thus, the estimated regression function becomes $\hat{f}(\boldsymbol{x}) = [\hat{\beta}_0 + \sum_{j = 1}^k \hat{\beta}_j x_{j}]$.
\end{defi}

 For a quotient over a linear space, this generalizes the definition of the univariate quotient geodesic model \ref{defi:quotient_geodesic} since for $k = 1$ and $\mathcal{B} = \mathcal{Y}$ the set of constant speed geodesics coincides with the set of affine linear functions $\Phi$ and therefore $\mathcal{F}=\Phi \doubleslash G = \{f: \R \to \mathcal{Y}/G,\, x_1 \mapsto [\beta_0 + \beta_1 x_1]\}$ is the set of projections of constant speed geodesics.
 
The following corollary shows that the model space $\Phi\doubleslash G$ of quotient linear regression includes geodesics on $\mathcal{Y}/G$ not only in coordinate directions but also in any direction in the covariate space that is a convex linear combination of coordinate directions. We proof this statement in the appendix via showing that the set of elements which are aligned to one point form a convex cone (Lemma \ref{lem:convex_cone}).

\begin{cor} \label{cor:multivariate_geodesic}
Let $(\mathcal{Y}, \| \cdot \|_\mathcal{Y})$ be a Hilbert space and $G$ act on $\mathcal{Y}$ by isometries.
%a real inner product space, such as a Hilbert space, and $G$ act on $\mathcal{Y}$ by isometries with $g \circ 0 = 0$ the identity element of $G$ for all $g \in G$.
Let $f: [0,1]^k \to \mathcal{Y}/G, \ (x_1, \dots, x_k)^T \mapsto [\beta_0 + \sum_{j = 1}^k x_j \beta_j]$ with $\beta_0, \beta_1, \dots, \beta_k \in \mathcal{Y}$ be such that $\beta_0 + \beta_j$ is aligned to $\beta_0$ for all $j = 1, \dots, k$. Then $f|_{x_j}: [0,1] \to \mathcal{Y}/G, x_j \mapsto [\beta_0 + x_j \beta_j]$ is a constant speed geodesic for all $j = 1, \dots, k$ due to Lemma \ref{lem:shortest_path_quotient}. Furthermore, let $\lambda_1, \dots, \lambda_k \in [0, 1]$ with $ \sum_{j = 1}^k \lambda_j = 1$. Then
$$\tilde{f}: [0,1] \to \mathcal{Y}/G, \ x \mapsto \left[ \beta_0 + x \sum_{j = 1}^k \lambda_j \beta_j \right]$$ is a constant speed geodesic in $\mathcal{Y}/G$ between $[\beta_0]$ and $[\beta_0 + \sum_{j = 1}^k \lambda_j \beta_j]$.
\end{cor}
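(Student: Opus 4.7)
The plan is to reduce the corollary to Lemma \ref{lem:shortest_path_quotient} applied to the straight-line path $\gamma:[0,1] \to \mathcal{Y}$, $\gamma(t) = \beta_0 + t\sum_{j=1}^k \lambda_j \beta_j$, which is a shortest path in the Hilbert space $\mathcal{Y}$. If its endpoints can be shown to be aligned, Lemma \ref{lem:shortest_path_quotient} immediately yields that $[\gamma] = \tilde{f}$ is a shortest path in $(\mathcal{Y}/G, \dist_G)$, hence a geodesic between $[\beta_0]$ and $[\beta_0 + \sum_j \lambda_j \beta_j]$.

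To establish the required alignment, I would invoke Lemma \ref{lem:convex_cone}: the set $A_{\beta_0}$ of points aligned to $\beta_0$ is a convex cone with apex $\beta_0$, so that the translated set $A_{\beta_0} - \beta_0$ is a convex cone in $\mathcal{Y}$. Each $\beta_j$ lies in this set by the hypothesis $\beta_0 + \beta_j \in A_{\beta_0}$, so every nonnegative linear combination $t \sum_j \lambda_j \beta_j$ with $t \geq 0$ belongs there as well. Specialising to $t = 1$ yields the desired alignment of $\beta_0$ with $\beta_0 + \sum_j \lambda_j \beta_j$, and for general $t \in [0,1]$ we additionally see that $\gamma(t)$ stays in $A_{\beta_0}$ throughout, which will be useful to apply the lemma to subsegments as well.

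Constant-speed parameterisation then follows from a standard metric-geometry estimate. For every $0 \leq s \leq t \leq 1$, the bound $\dist_G([\gamma(s)], [\gamma(t)]) \leq \|\gamma(s) - \gamma(t)\|_\mathcal{Y} = (t-s)\,\|\sum_j \lambda_j \beta_j\|_\mathcal{Y}$ holds by inserting $g = \mathrm{id}$ into the quotient infimum. Combining the three such bounds for the partition $0 \leq s \leq t \leq 1$ via the triangle inequality, together with the equality $\dist_G([\gamma(0)], [\gamma(1)]) = \|\sum_j \lambda_j \beta_j\|_\mathcal{Y}$ from alignment, forces every inequality to be tight, giving $\dist_G([\gamma(s)], [\gamma(t)]) = (t-s)\|\sum_j \lambda_j \beta_j\|_\mathcal{Y}$ and hence constant speed.

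The substantive difficulty is entirely absorbed in Lemma \ref{lem:convex_cone}, whose proof is deferred to the appendix. Because $G$ acts by isometries but not necessarily linearly on $\mathcal{Y}$, convex combinations do not a priori respect the infimum defining $\dist_G$, so showing that the alignment set is closed under nonnegative linear combinations is the non-trivial step and the main obstacle of the overall argument. Once that cone structure is available, the corollary follows by chaining Lemma \ref{lem:shortest_path_quotient} with the elementary triangle-inequality estimate above.
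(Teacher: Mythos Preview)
Your proposal is correct and follows essentially the same route as the paper: invoke Lemma~\ref{lem:convex_cone} to show that $\beta_0 + \sum_j \lambda_j \beta_j$ is aligned to $\beta_0$, then apply Lemma~\ref{lem:shortest_path_quotient} to the straight line in $\mathcal{Y}$. One small inaccuracy in your phrasing: Lemma~\ref{lem:convex_cone} states that $A_{\beta_0}$ is a convex cone with apex at the \emph{origin}, not at $\beta_0$, so the claim that $A_{\beta_0}-\beta_0$ is a cone does not follow from it (and need not hold for $t>1$); the paper instead writes $\beta_0 + \sum_j \lambda_j \beta_j = \sum_j \lambda_j(\beta_0+\beta_j)$ directly as a nonnegative combination of the aligned elements $\beta_0+\beta_j$, which sidesteps the issue. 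For your purposes only $t\in[0,1]$ is needed, and there convexity of $A_{\beta_0}$ together with $\beta_0\in A_{\beta_0}$ suffices, so your argument goes through. Your explicit constant-speed verification via the triangle inequality is a welcome addition that the paper leaves implicit.
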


This generalizes geodesics to the multiple covariate setting as well as possible given the lack of a linear space structure for $\mathcal{Y}/G$. 
%This multiple regression model on the quotient space with model space $\Phi = \{\varphi: \mathbb{R}^k \to \mathcal{Y}, \varphi \text{ affine linear}\}$ 
Such a quotient linear model has been suggested by \cite{calissano} for the special case of $\mathcal{Y}$ being the set of networks and $G$ being the permutation group on the set of nodes. Our construction shows that their model is an example of a general class of models, which can be defined for the quotient of an arbitrary Hilbert space by a group which acts on $\mathcal{Y}$ by isometries, and points out the inherent connection to other such cases.\newline
In practice, the coefficients $\beta_j$ will usually be modeled within a suitable finite-dimensional subspace $\mathcal{B} \subset \mathcal{Y}$, such that also $\Phi \cong \mathcal{B}^{k+1}$ will be finite-dimensional. While $\Phi\doubleslash G$ then no longer  necessarily contains the geodesics on $\mathcal{Y}/G$ precisely, it  may still yield good approximations to them. That the model space $\Phi \subseteq C(\mathcal{X}, \mathcal{Y})$ is a finite dimensional subspace  allows us to conclude that the regression model is non-empty under weaker  assumptions than in Lemma \ref{lem:existence_quotient}. 

\begin{theorem}[Existence in finite dimensional model spaces] \label{theo:existence}
Let $\mathcal{Y}$ be a Hilbert space, $\mathcal{X} \subset \R^k$ compact and $\Phi \subseteq C(\mathcal{X}, \mathcal{Y})$ a finite dimensional subspace. If $[Y]$ is bounded and supp$(X) = \mathcal{X}$, there is a minimizer of $\Psi(\varphi) = \mathbb{E}(\dist_G([Y], [\varphi(X)])^2)$ in $\Phi$.
\end{theorem}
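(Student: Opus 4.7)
The plan is to apply the direct method of the calculus of variations: establish continuity and coercivity of $\Psi$ on the finite-dimensional space $\Phi$, so that any minimizing sequence is bounded (coercivity), admits a convergent subsequence (finite dimensionality), and its limit realises the infimum (continuity).

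Continuity will follow from the pointwise inequality
\[
\bigl|\dist_G([Y], [\varphi_1(X)]) - \dist_G([Y], [\varphi_2(X)])\bigr| \leq \dist_G([\varphi_1(X)], [\varphi_2(X)]) \leq \|\varphi_1(X) - \varphi_2(X)\|_\mathcal{Y},
\]
which combines the triangle inequality on $(\mathcal{Y}/G,\dist_G)$ with the fact that $\dist_G \leq \dist$. Since $\mathcal{X}$ is compact and $\Phi$ is finite dimensional, convergence in $\Phi$ is equivalent to uniform convergence of the evaluations $\varphi(x)$ on $\mathcal{X}$; together with boundedness of $[Y]$ this produces an integrable envelope for $\dist_G([Y],[\varphi(X)])^2$ on bounded subsets of $\Phi$, and a dominated convergence argument then yields continuity of $\Psi$ (indeed, Lipschitz continuity on bounded sets).

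For coercivity, I would set $N(\varphi) := \bigl(\mathbb{E}[\dist_G([\varphi(X)],[0])^2]\bigr)^{1/2}$. The triangle inequality on $(\mathcal{Y}/G,\dist_G)$ combined with Minkowski's inequality in $L^2$ gives $\sqrt{\Psi(\varphi)} \geq N(\varphi) - M'$ where $M' := \bigl(\mathbb{E}[\dist_G([Y],[0])^2]\bigr)^{1/2} < \infty$ by boundedness of $[Y]$. It therefore suffices to show that $N$ is a norm on $\Phi$: equivalence of norms on the finite-dimensional $\Phi$ will then give $\|\varphi\|_\Phi \lesssim N(\varphi)$, and coercivity of $\Psi$ follows. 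Subadditivity and positive homogeneity of $N$ transfer from the $L^2$ norm; for definiteness, $N(\varphi) = 0$ implies $\dist_G([\varphi(x)],[0]) = 0$ for $X$-almost every $x$, and the hypothesis $\operatorname{supp}(X)=\mathcal{X}$ together with continuity of $x \mapsto \dist_G([\varphi(x)],[0])$ lifts this to every $x\in\mathcal{X}$. For the isometric actions relevant to this paper (reparametrization on SRV, rotation, permutation of nodes), the origin $0 \in \mathcal{Y}$ is fixed by $G$, so $\dist_G([\varphi(x)],[0]) = \|\varphi(x)\|_\mathcal{Y}$ and the conclusion $\varphi \equiv 0$ is immediate.

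The main obstacle is precisely this final definiteness step: a general isometric action on a Hilbert space need not fix the origin, in which case the ``kernel'' $\{\varphi \in \Phi : [\varphi(x)] = [0]\ \forall x \in \mathcal{X}\}$ may be nontrivial and $N$ only a seminorm. One would then pass to the quotient of $\Phi$ by this kernel, run the direct method there, and lift a minimiser back to $\Phi$ — but the fixed-origin setting above already covers every application in the paper.
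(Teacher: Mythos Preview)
Your proposal is correct and follows essentially the same route as the paper: continuity plus coercivity of $\Psi$ on the finite-dimensional $\Phi$, with coercivity obtained by showing that an expectation of $\|\varphi(X)\|_{\mathcal{Y}}$ defines a norm on $\Phi$ (using $\operatorname{supp}(X)=\mathcal{X}$ and continuity) and invoking equivalence of norms. The only cosmetic differences are that the paper uses the $L^1$-type quantity $\mathbb{E}\|\varphi(X)\|_{\mathcal{Y}}$ together with Jensen's inequality where you use the $L^2$-type $N(\varphi)$ and Minkowski, and that your caveat about the action fixing the origin is well placed---the paper's own chain $\|g\circ\varphi(X)\|_{\mathcal{Y}}-\|Y\|_{\mathcal{Y}}\geq\|\varphi(X)\|_{\mathcal{Y}}-C_1$ tacitly uses exactly the same assumption.
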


A proof of this statement can be found in the appendix. It shows that for any finite dimensional model space $\Psi$ we can expect $f^\ast \neq \emptyset$, i.e. that the quotient regression model $f^\ast$ in Definition \ref{defi:quotient_reg} is not the empty set.

\subsubsection{Side-remark on quotient regression over a Riemannian manifold}
%\begin{remark}[Quotient regression over a Riemannian manifold]
\label{quotient:riemannian}
While for a single covariate geodesic regression is the canonical generalization of simple linear regression to a Riemannian manifold $\mathcal{Y}$, transfer of multiple linear regression to curved spaces is somewhat less straight-forward. 
Yet, a still natural option is given by generalized linear model (glm) type intrinsic regression \citep{zhu2009intrinsic, CorneaEtAl2017RegRiemannianSymSpaces} with a ``Riemannian Log-link'', i.e. with the model space $\Phi$ consisting of functions $\varphi:\boldsymbol{x}\mapsto\operatorname{Exp}_{\beta_0}(\beta_1x_1+\dots+\beta_kx_k)$ with intercept $\beta_0 \in \mathcal{Y}$, coefficients $\beta_1,\dots, \beta_k \in T_{\beta_0}\mathcal{Y}$ in the tangent space at $\beta_0$, and the Riemannian exponential map $\operatorname{Exp}$ at $\beta_0$ as response-function. 
The model models and estimates the conditional Fréchet mean with respect to the intrinsic Riemannian distance $\dist$ and reduces to geodesic regression for $k=1$. Quotient intrinsic regression over a Riemannian manifold can then be defined using $\mathcal{F} = \Phi //G$ with the above glm-type intrinsic $\Phi$.
Intrinsic regression on Kendall's shape space $\Sigma^m$ of 2D landmark configurations $\boldsymbol{y} \in \mathbb{C}^m$ modulo translation, scale and rotation, discussed as an example by \citep{CorneaEtAl2017RegRiemannianSymSpaces}, can, in fact, be considered a special case of quotient intrinsic regression with $\mathcal{Y} = \mathbb{S}^{2(m-1)}$ the sphere of dimension $2(m-1)$, $\Phi$ the model space of intrinsic regression on $\mathbb{S}^{2(m-1)}$, and the 2D rotations $G = \{\exp(\omega \sqrt{-1}) \mid \omega \in [-\pi, \pi)\}$ the isometric group action.
In this case, $\mathcal{Y}/G = \Sigma^m$ carries itself a Riemannian manifold structure (of the complex projective space $\Sigma^m \cong \mathbb{C}P^{m-2}$). 
For shapes in higher dimensions, $\mathcal{Y}/G$ does not carry a manifold structure anymore \citep{huckemann2010intrinsic}, but an analogous quotient intrinsic regression model could also be formulated. 
Additionally, an intrinsic regression model of the 2D form/size-and-shape space of $\boldsymbol{y}$ modulo translation and rotation \citep{stoecker} with $\mathcal{Y} = \mathbb{C}^{m-1}$ yields another example of quotient linear regression.
Hence, intrinsic regression on manifolds does not only yield a further, more general, underlying model space $\Phi$ for quotient regression, but also further motivation for the quotient (linear) model approach, since in special cases intrinsic regression models on manifolds present specially tailored quotient regression models.

%The case where $\mathcal{Y}$ is a Riemannian manifold is also of practical relevance, e.g., when $\mathcal{Y}$ is a pre shape space. Estimating procedures for geodesics that minimize the squared distance also exist in the Riemannian case \citep{fletcher}, allowing direct application of Definition \ref{defi:quotient_reg}. Moreover, we propose to extend Definition \ref{defi:mult_quotient_reg} in this case to define a multiple quotient space regression model as the projections of linear functions in the tangent space mapped back to the manifold by the exponential map. By Lemma \ref{lem:shortest_path_quotient}, these model spaces will include functions that are geodesics in coordinate directions, which are geodesics on the original space with start and end points aligned.
%\end{remark}

%\subsection{Quotient regression for elastic curves in the SRV framework}
\subsection{Elastic regression for curves via quotient linear models in the SRV framework}
\label{subsec:quotientelastic}

In this subsection we will develop quotient regression for the particular case of curves %in $\mathcal{A}$ 
modulo re-parametrization (and translation) in order to obtain an elastic regression model for curves. %However, the re-parameterization group $\Gamma$ does not act by isometries on $\mathcal{A}$, so we will not consider the quotient space regression model for the curves directly, but for their SRV transformation.
To achieve that the re-parameterization group $\Gamma$ acts by isometries, we will not consider the quotient space regression model for the curves $\boldsymbol{y}$ directly, but for their SRV transformation.
Considering SRV transforms in the Hilbert space $\mathbb{L}_2$ of square integrable functions $q: [0,1] \rightarrow \mathbb{R}^d$ induces a suitable metric on the space of absolutely continuous curves $\mathcal{A}$ modulo translation.

\begin{lemma}[SRV transformation \citep{srivastava_book}]\label{lem:SRVtrafo}
The SRV transformation $Q$ defined via
\begin{align*}
Q(\boldsymbol{y})(t) = 
\begin{cases}
\frac{\dot{\boldsymbol{y}}(t)}{\sqrt{\|\dot{\boldsymbol{y}}(t)\|}} 
\quad 
&\text{if } \dot{\boldsymbol{y}}(t) \neq 0 \\
0 & \text{if } \dot{\boldsymbol{y}}(t) = 0,
\end{cases}
\end{align*}
gives a one-to-one correspondence between the absolutely continuous curves $\mathcal{A}$ modulo translation and the Hilbert space $\mathbb{L}_2$, on which $\Gamma = \{ \gamma:[0,1] \to [0,1] \mid \gamma \text{ monotonically increasing, onto and differentiable} \}$ acts by isometries.
\end{lemma}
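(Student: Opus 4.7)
The plan is to verify three components: (i) $Q$ maps $\mathcal{A}$ into $\mathbb{L}_2$, (ii) $Q$ descends to a bijection from $\mathcal{A}$ modulo translation onto $\mathbb{L}_2$, and (iii) the induced action of $\Gamma$ on $\mathbb{L}_2$ via pullback is by isometries. Each piece is essentially a direct computation once the correct formulas are written down, so I would organize the proof around these three items.

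For (i), I would note that for $\boldsymbol{y} \in \mathcal{A}$ the derivative $\dot{\boldsymbol{y}}$ exists almost everywhere and lies in $\mathbb{L}_1([0,1], \R^d)$. A direct computation gives $\|Q(\boldsymbol{y})\|_{\mathbb{L}_2}^2 = \int_0^1 \|\dot{\boldsymbol{y}}(t)\|\, dt$, the length of the curve, which is finite for absolutely continuous curves on $[0,1]$. Hence $Q(\boldsymbol{y}) \in \mathbb{L}_2$.

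For (ii), injectivity modulo translation follows by observing that $Q(\boldsymbol{y}_1) = Q(\boldsymbol{y}_2)$ a.e.\ forces $\|\dot{\boldsymbol{y}}_1\| = \|Q(\boldsymbol{y}_1)\|^2 = \|Q(\boldsymbol{y}_2)\|^2 = \|\dot{\boldsymbol{y}}_2\|$ a.e., and then $\dot{\boldsymbol{y}}_1 = \dot{\boldsymbol{y}}_2$ a.e., so $\boldsymbol{y}_1 - \boldsymbol{y}_2$ is constant. For surjectivity, given $q \in \mathbb{L}_2$, I would define $\boldsymbol{y}(t) = \int_0^t q(s)\|q(s)\|\, ds$; since $\|q\,\|q\|\,\| = \|q\|^2 \in \mathbb{L}_1$, this $\boldsymbol{y}$ is absolutely continuous with $\dot{\boldsymbol{y}} = q\|q\|$ a.e., and a direct calculation confirms $Q(\boldsymbol{y}) = q$ (the cases $q(t) \neq 0$ and $q(t) = 0$ are both handled by the piecewise definition of $Q$).

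For (iii), I would first derive the induced action: using the chain rule on $\boldsymbol{y} \circ \gamma$ gives $Q(\boldsymbol{y} \circ \gamma)(t) = Q(\boldsymbol{y})(\gamma(t))\sqrt{\dot\gamma(t)}$, so $\Gamma$ acts on $\mathbb{L}_2$ by $(q, \gamma) \mapsto (q\circ\gamma)\sqrt{\dot\gamma}$. The isometry property then follows from the substitution $s = \gamma(t)$:
\begin{align*}
    \|(q_1\circ\gamma)\sqrt{\dot\gamma} - (q_2\circ\gamma)\sqrt{\dot\gamma}\|_{\mathbb{L}_2}^2
    &= \int_0^1 \|q_1(\gamma(t)) - q_2(\gamma(t))\|^2 \dot\gamma(t)\, dt \\
    &= \int_0^1 \|q_1(s) - q_2(s)\|^2\, ds = \|q_1 - q_2\|_{\mathbb{L}_2}^2.
\end{align*}

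The main obstacle is bookkeeping around the measure-zero sets where $\dot{\boldsymbol{y}}$ or $q$ vanish, so that the piecewise definition of $Q$ and the a.e.\ identities are consistent; the substitution rule also requires $\gamma$ to be sufficiently regular, which is exactly why $\Gamma$ is taken to consist of boundary-preserving diffeomorphisms. Since the statement is established in detail in \cite{srivastava_book}, I would refer there for the technical verifications while presenting the core computations above.
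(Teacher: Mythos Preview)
Your proposal is correct and matches the paper's treatment: the paper does not give a standalone proof but cites \cite{srivastava_book} for the lemma and only spells out the isometry computation (your part (iii)) in the text immediately following the statement, together with the inverse formula $Q^{-1}(\boldsymbol{q})(t) = \int_0^t \boldsymbol{q}(s)\|\boldsymbol{q}(s)\|\,ds$ used for surjectivity. Your outline of parts (i) and (ii) simply makes explicit what the paper leaves to the reference, and your part (iii) is verbatim the paper's argument.
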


More precisely, the action of $\Gamma$ on the SRV transformed curves becomes
$\Gamma \times \mathbb{L}_2 \to \mathbb{L}_2$, $(\gamma, \boldsymbol{q}) = (\boldsymbol{q} \circ \gamma) \sqrt{\dot{\gamma}}$, which is by isometries since $\|(\boldsymbol{q}_1 \circ \gamma) \sqrt{\dot{\gamma}} - (\boldsymbol{q}_2 \circ \gamma) \sqrt{\dot{\gamma}}\|_{\mathbb{L}_2}^2 = \int_0^1 (\boldsymbol{q}_1(\gamma(t)) - \boldsymbol{q}_2(\gamma(t)))^2 \dot{\gamma}(t) dt = \int_0^1 (\boldsymbol{q}_1(t) - \boldsymbol{q}_2(t))^2 dt = \| \boldsymbol{q}_1  - \boldsymbol{q}_2 \|_{\mathbb{L}_2}^2$ for all $\gamma \in \Gamma$. That means we can define an elastic distance $\dist$ on $\mathcal{A}/\Gamma$ modulo translation as the quotient metric ($\dist_G$ in Definition \ref{defi:quotient_metric}) on $\mathbb{L}_2/\Gamma$.

\begin{defi}[Elastic distance \citep{srivastava_book}]\label{lem:curve_srv}
Let $[\boldsymbol{y}_1], [\boldsymbol{y}_2]$ be equivalence classes in $\mathcal{A}/\Gamma$ modulo translation. Then the elastic distance
\begin{equation} \label{def:elastic_dist}
\dist([\boldsymbol{y}_1], [\boldsymbol{y}_2]) =
\displaystyle \inf_{\gamma_1, \gamma_2 \in \Gamma} \| Q(\boldsymbol{y}_1 \circ \gamma_1) - Q(\boldsymbol{y}_2 \circ \gamma_2) \|_{\mathbb{L}_2},
\end{equation}
is a proper metric. Here $\| \mathbf{q}\|_{\mathbb{L}_2} = (\int_0^1 \|\mathbf{q}(t)\|^2 dt)^{1/2}$, $\mathbf{q} \in \mathbb{L}_2$, denotes the usual $\mathbb{L}_2$ norm.
\end{defi}

Thus, we can define a quotient regression model for SRV curves modulo re-parametrization as in Subsection \ref{subsec:quotient_reg}. We formulate a regression model for the elastic curves themselves using the inverse of the SRV transformation $Q$, which is given via $Q^{-1}(\boldsymbol{q})(t) = \int_0^t \boldsymbol{q}(s) \| \boldsymbol{q}(s) \| \ ds$ for all $\boldsymbol{q} \in \mathbb{L}_2$.

\begin{defi}[Quotient SRV-linear regression for elastic curves] \label{defi:quotient_reg_elastic}
Let $\boldsymbol{x}_i = (x_{i,1}, \dots, x_{i,k})^\top \in \R^k$, $i = 1, \dots, n$ be realizations of a random vector $\boldsymbol{X} = (X_1, \dots, X_k)^\top$ and $\boldsymbol{q}_1, \dots, \boldsymbol{q}_n \in \mathbb{L}_2$ be SRV transformations of realizations of a random variable $[\boldsymbol{Y}]$ taking values in $\mathcal{A}/\Gamma$, where $\mathcal{A}$ is the set of absolutely continuous curves from $[0,1]$ to $\R^d$ and $\Gamma$ the set of monotonically increasing, onto and differentiable re-parametrizations. On curve level, the quotient linear regression model then becomes
\begin{align*}
    f(\boldsymbol{x}) = f(x_1, \dots, x_k) = \mathcal{E}([\boldsymbol{Y}]| X_1 = x_1, \dots, X_k = x_k) = \left[Q^{-1}\left(\varphi(\boldsymbol{x})\right)\right] 
\end{align*}
with linear predictor 
$$
 \varphi(\boldsymbol{x}) = \boldsymbol{\beta}_0 + \sum_{j = 1}^k \boldsymbol{\beta}_j x_{j}
$$
on SRV-level.
%Here $Q$ acts similarly to the link function in a generalized linear model, linking the linear predictor in $\mathbb{L}_2$ to the conditional expectation of the response.
The coefficients $\boldsymbol{\beta}_0, \dots, \boldsymbol{\beta}_k \in \mathbb{L}_2$ of the regression function are estimated as
\begin{align*}
\argmin_{\boldsymbol{\beta}_0, \dots, \boldsymbol{\beta}_k \in \mathbb{L}_2
} \sum_{i = 1}^n \inf_{\gamma_i \in \Gamma} \left\| \boldsymbol{\beta}_0 + \sum_{j = 1}^k \boldsymbol{\beta}_j x_{i,j} - (\boldsymbol{q}_i \circ \gamma_i)\sqrt{\gamma_i} \right\|_{\mathbb{L}_2}^2.
\end{align*}
\end{defi}

We further assume that the parameters lie in a spline space, that is $\boldsymbol{\beta}_j(t)= \sum_{m = 1}^M \boldsymbol{\xi}_{j,m} B_m(t)$, $j = 1, \dots, k$,  where $\{B_m, m = 1, \dots M\}$ is a spline basis (e.g. linear B-splines) and $\boldsymbol{\xi}_{j,m} \in \mathbb{R}^d$ for all $j = 1, \dots, k$ and $m = 1, \dots, M$. We showed identifiability modulo warping of splines from several spline spaces in \cite{steyer}.

For SRV-transforms $Q(\boldsymbol{Y})$ this model directly corresponds to a quotient linear model (Definition \ref{defi:mult_quotient_reg}, with original space $\mathcal{Y} = \mathbb{L}_2$ and the respective isometric group action $\Gamma$ implied by re-parameterization of a curve $\boldsymbol{y}$ for its SRV transform $\boldsymbol{q} = Q(\boldsymbol{y})$.
As such, it enjoys consistency in the sense of Corollary \ref{cor:consistency} and, using the finite-dimensional spline space for modeling, also existence of a Fréchet mean, i.e.\ $f^\ast \neq \emptyset$, as we showed in Theorem \ref{theo:existence} in a more general setting. Due to Lemma \ref{lem:SRVtrafo}%{lem:curve_srv} 
we can equivalently understand the model on curve level. 

The minimization needed to estimate this quotient regression model for elastic curves is tackled via alternating between fitting a function-on-scalar model in each of the $d$ dimensions for fixed $\gamma_i$, and updating the optimal re-parametrizations $\gamma_i, i = 1, \dots, n$ for fixed $\boldsymbol{\beta}$s, see Algorithm \ref{algo:quotient_regression} below. The two alternated steps are generic in the sense that suitable warping and L2 fitting steps can be combined that are tailored to the situation at hand (e.g. densely vs. sparsely observed curves). In our own implementation in the R-package \texttt{elasdics} \citep{elasdics}, since the data $\boldsymbol{q}_i, i = 1, \dots, n$ are SRV transformations of usually discretely observed curves, we use our methods specifically developed in \cite{steyer} for potentially sparse settings for both steps. That is we replace $\boldsymbol{q}_i$ by $\boldsymbol{\check{q}}_i$, the SRV transformation of the polygon $\boldsymbol{\check{y}}_i$ which is constructed via connecting the observed points linearly and choosing a constant speed parameterization. Note that this parameterization does not play a role for our model itself but only provides a suitable initial value. Also note that the relevant error made in this approximation, i.e. the difference between the polygon $\boldsymbol{\check{q}}_i$ and the unobserved curves $\boldsymbol{q}_i$, is the one at the SRV level. Accordingly, relatively densely observed points drawn with error at the curve level cause large errors at the SRV level (since the polygonal approximation corresponds to computing derivatives via finite differences). In this case it can be advantageous to coarsen the observed points first or to smooth them by a spline approximation on curve level.

%We further assume that the parameters lie in a spline space, that is $\boldsymbol{\beta}_j(t)= \sum_{m = 1}^M \boldsymbol{\xi}_{j,m} B_m(t)$, $j = 1, \dots, k$,  where $\{B_m, m = 1, \dots M\}$ is a spline basis (e.g. linear B-splines) and $\boldsymbol{\xi}_{j,m} \in \mathbb{R}^d$ for all $j = 1, \dots, k$ and $m = 1, \dots, M$. We showed identifiability modulo warping of splines from several spline spaces in \cite{steyer}. 

\begin{algorithm}[ht]
\caption{Quotient SRV-linear regression for elastic open curves \label{algo:quotient_regression}}
\KwIn
{data pairs $(\boldsymbol{x}_i, \boldsymbol{\check{q}}_i)$, $i = 1, \dots, n$, where  $\boldsymbol{\check{q}}_i$ are the SRV transformations of observed polygons $\boldsymbol{\check{y}}_i$ and  $\boldsymbol{x}_i = (x_{i,1}, \dots, x_{i,k})  \in \R^k$, $i = 1, \dots, n$ are observed covariates;
 convergence tolerance $\epsilon > 0$
}
Compute initial estimate $\hat{\boldsymbol{\beta}}_{0,new}, \dots, \hat{\boldsymbol{\beta}}_{k,new} =\argmin_{\boldsymbol{\beta}_0, \dots, \boldsymbol{\beta}_k \in \mathbb{L}_2
} \sum_{i = 1}^n \| \boldsymbol{\beta}_0 + \sum_{j = 1}^k \boldsymbol{\beta}_j x_{i,j} - \boldsymbol{\check{q}}_i \|_{\mathbb{L}_2}^2$;\\
Set $\hat{\boldsymbol{\beta}}_{j,old} = \text{Inf} \quad \forall j = 0, \dots, k$;\\
\While{$\max_{j = 0, \dots, k} \| \hat{\boldsymbol{\beta}}_{j,old} - \hat{\boldsymbol{\beta}}_{j,new} \|_{\mathbb{L}_2}^2 > \epsilon $}{
$\hat{\boldsymbol{\beta}}_{j,old} = \hat{\boldsymbol{\beta}}_{j,new}  \quad \forall j = 0, \dots, k$\;
$\gamma_i = \argmin_{\gamma} \left\| \hat{\boldsymbol{\beta}}_{0,old} + \sum_{j = 1}^k \hat{\boldsymbol{\beta}}_{j,old} x_{i,j} - (\boldsymbol{\check{q}}_i \circ \gamma)\sqrt{\gamma} \right\|_{L_2}^2, \quad \forall i = 1, \dots, n$  \tcp*{warping step}
$\hat{\boldsymbol{\beta}}_{0,new}, \dots, \hat{\boldsymbol{\beta}}_{k,new} = \arginf_{\boldsymbol{\beta}_0, \dots, \boldsymbol{\beta}_k \in \mathbb{L}_2
} \sum_{i = 1}^n  \left\| \boldsymbol{\beta}_0 + \sum_{j = 1}^k \boldsymbol{\beta}_j x_{i,j} - (\boldsymbol{\check{q}}_i \circ \gamma_i)\sqrt{\gamma_i} \right\|_{\mathbb{L}_2}^2$ \\ \tcp*[f]{$L_2$ spline fit via least-squares}}
\Return{$\hat{\boldsymbol{\beta}}_{j} = \hat{\boldsymbol{\beta}}_{j,new} \quad \forall j = 0, \dots, k$}
\end{algorithm}

Note that the spline model assumption is not compatible to a geodesic model assumption. Although geodesic lines are contained in the quotient space regression model assumption as shown in Lemma \ref{lem:shortest_path_quotient}, geodesics between two spline curves do in general not lie in a spline space (see Subsection \ref{subsec:geodesic_splines}), since aligning one spline curve to another does in general not result in a spline curve. Thus, a model can not be a geodesic model and a spline model at the same time, but we can use a spline model to approximate  a geodesic model. %On the other hand, if the true model lies in the spline space, using splines as a model space will also yield consistent estimates due to Lemma \ref{lem:consistency}.

\subsection{Extensions to closed curves} \label{subsec:closed_curves}
Since the space of SRV curves belonging to closed curves, $\{ \boldsymbol{p} \in \mathbb{L}_2 | \int_0^1 \boldsymbol{p}(t) \|\boldsymbol{p}(t)\| dt = \boldsymbol{0} \in \R^d \}$, does not form a linear subspace in $\mathbb{L}_2$, regression of closed curves cannot be treated analogously to that of open curves. While in principle it would be possible to consider the space of closed curves as a submanifold of $\mathbb{L}_2$ and then define the quotient regression model on this submanifold modulo warping, to the best of our knowledge there are no methods to compute minimizing geodesics on this submanifold. (\citet{srivastava_book} provide algorithms for numerical computation of geodesics between two closed curves -- extending this to finding a minimizing geodesic through a sample of curves is, however, not straightforward). For this reason, we do not focus on closed curves here. However, as closed curves often appear naturally in practical applications, we describe at least a heuristic method for the regression of closed curves based on quotient regression for open curves. This method is also implemented in the R-package \texttt{elasdics} \citep{elasdics}.

Specifically, we treat the curves as open curves in the $\mathbb{L}_2$ fitting step, but restrict the splines we use for modeling their SRV transforms to be closed %fitting on SRV level to closed curves 
(which is necessary but not sufficient for closedness of the modeled curves, ensuring matching derivatives at starting and end points). %While this does not ensure closedness of the modeled curves, it does force the derivatives of the curves at the start and the end point to match.
Then we close the predictions via projecting them onto the space of derivatives belonging to closed curves: %(See Algorithm \ref{algo:quotient_regression_closed} in the Appendix for more details.) 
Since we model the SRV transform $\boldsymbol{p}$ as a spline  and therefore bounded curve, the corresponding derivative $\boldsymbol{p} \|\boldsymbol{p}\|$ is also bounded and therefore in $\mathbb{L}_2$. Hence we can consider the space $\{ \boldsymbol{p} \|\boldsymbol{p}\| \in \mathbb{L}_2 | \int_0^1 \boldsymbol{p}(t) \|\boldsymbol{p}(t)\| dt = 0 \}$, which is a linear subspace of the Hilbert space $\mathbb{L}_2$, and compute the orthogonal projection of $\boldsymbol{p} \|\boldsymbol{p}\|$ onto this space as $\boldsymbol{p} \|\boldsymbol{p}\| - \int_0^1 \boldsymbol{p}(s) \|\boldsymbol{p}(s)\| ds$. Thus, the prediction on curve level becomes $t \mapsto \int_0^t \boldsymbol{p}(s) \|\boldsymbol{p}(s)\| ds- t \cdot \int_0^1 \boldsymbol{p}(s) \|\boldsymbol{p}(s)\| ds$, which is a closed curve. We use these closed predictions in the iterative algorithm \ref{algo:quotient_regression} to replace the $\hat{\boldsymbol{\beta}}_{j,old}$ when aligning  the observations in each iteration (warping step). See Algorithm \ref{algo:quotient_regression_closed} in the Appendix for details.
\section{Alternative regression approaches} \label{sec:alternatives}
Although there are so far no direct competitors available to our quotient regression for curves modulo re-parametrization, we discuss in the following different approaches that we consider natural alternatives. %approaches that can be used to model curves modulo re-parametrization as a function of Euclidean covariates.
Comparison to these alternatives may be relevant %also offer clues in different directions 
beyond our specific focus as they exemplify 
a) pre-alignment as natural alternative to quotient regression with responses in any quotient metric space, 
b) statistical modeling on curve level with only alignment based on SRV transforms,
and c) usage of a generic approach for metric spaces without using the quotient structure.
The first three alternatives we give are new proposals reflecting combinations of a) and b), while for c)  Fr\'echet regression in Subsection \ref{subsec:frechet} constitutes an existing general approach, which has to be adapted to and implemented for our setting, and for which we give a novel concrete implementation for the elastic regression case. All methods discussed here will then be used as comparison methods to benchmark our quotient regression approach in simulations in Section \ref{sec:simulation}.

\subsection{Regression after pre-alignment}
For elastic regression as for general quotient metric spaces $\mathcal{Y}/G$ where $(\mathcal{Y}, \|\cdot \|_\mathcal{Y})$ is a Hilbert space, an obvious competitor of the quotient linear model is to fit a linear model on the original space $\mathcal{Y}$ after once pre-aligning the data $y_i$, $i = 1, \dots, n$ to its (marginal) Fréchet mean $\mu_0 =  \arginf_{\mu \in \mathcal{Y}} \sum_{i = 1}^n \dist_G([y_i], [\mu])^2$. Here, we consider the model with predictor $f: x \mapsto [\varphi(x)]$ and the estimator $\hat{\varphi}(\boldsymbol{x}) = \hat{\beta}_0 + \hat{\beta}_1 x_1 + \dots + \hat{\beta}_k x_k$ given by
\begin{align*}
\boldsymbol{\hat{\beta}} = (\hat{\beta}_0, \dots, \hat{\beta}_k)^\top = \argmin_{\beta_0, \dots, \beta_k \in \mathcal{Y}} 
\sum_{i = 1}^n  \left\| \beta_0 + \sum_{j = 1}^k \beta_j x_{i,j} - g_i^\ast \circ y_i \right\|^2_\mathcal{Y}
\end{align*}
where $g_i^\ast = \argmin_{g \in G} \|\mu_0 - g \circ y_i \|_\mathcal{Y}$. Here we assume that there exists an optimal alignment to the mean for all $y_i$'s. The minimiser $\boldsymbol{\hat{\beta}}$ can be computed as $ \boldsymbol{\hat{\beta}} = (\boldsymbol{\Xi}^\top \boldsymbol{\Xi})^{-1}\boldsymbol{\Xi}^\top (\mathbf{g^\ast \circ y})$, where $\boldsymbol{\Xi} \in \mathbb{R}^{n \times k}$ is the design matrix and $\mathbf{g^\ast \circ y} = (g_1^\ast \circ y_1, \dots, g_n^\ast \circ y_n)^\top \in \otimes_{i = 1}^n \mathcal{Y}$. 

Although the model space $\Phi$ also consists of affine linear functions in $\mathcal{Y}$, this is not an intrinsic regression, i.e. we do not truly consider its projection to the quotient $\Phi\doubleslash G$ as model space on $\mathcal{Y}/G$ here. That means no attempt is made to minimize the empirical risk \eqref{eq:mult_quotient_reg} with respect to the quotient space distance $\dist_G$ and therefore, this risk will always be greater than or equal to that for the quotient space regression model. 

In the specific case that we want to model curves with respect to the elastic distance \eqref{def:elastic_dist}, this means computing a linear model for the SRV transformed curves in $\mathbb{L}_2$ after pre-aligning the corresponding data curves to the elastic mean. That is
\begin{align*}
(\hat{\boldsymbol{\beta}}_0, \dots, \hat{\boldsymbol{\beta}}_k) = \argmin_{\boldsymbol{\beta}_0, \dots, \boldsymbol{\beta}_k \in \mathbb{L}_2
} \sum_{i = 1}^n \left\| \boldsymbol{\beta}_0 + \sum_{j = 1}^k \boldsymbol{\beta}_j x_{i,j} - (\boldsymbol{q}_i \circ \gamma_i)\sqrt{\gamma_i} \right\|_{\mathbb{L}_2}^2
\end{align*}
with $\gamma_i = \argmin_{\gamma \in \Gamma} \|\boldsymbol{\mu}_0 - (\boldsymbol{q}_i \circ \gamma_i)\sqrt{\gamma_i} \|_{\mathbb{L}_2}$ and $\boldsymbol{\mu}_0$ is the SRV transformation of the elastic mean curve. \citep{guo} propose a similar procedure, where they then use the principal component scores of the pre-aligned SRV curves in a simple regression model. In contrast, we use splines to model the $\boldsymbol{\beta}$s and the alignment methods developed in \cite{steyer} to enable fitting of irregularly and/or sparsely observed curves and to allow better comparison with our quotient regression model for elastic curves (Definition \ref{defi:quotient_reg_elastic}). We refer to this procedure as 'pre-align, srv fit' in the following.

\subsection{Alternative procedures with fit on curve level}
Considering pre-alignment of the data curves $\boldsymbol{y}_1, \dots, \boldsymbol{y}_n$ with SRV transformations $\boldsymbol{q}_1, \dots, \boldsymbol{q}_n$ to their elastic mean curve a pre-processing step, it might also deem natural to compute the regression model on curve level instead of on SRV level. We call this approach 'pre-align, curve fit'.
Here, the fitted predictor is given by $\hat{\boldsymbol{f}}(\boldsymbol{x}) = \hat{\boldsymbol{\beta}}_0 + \hat{\boldsymbol{\beta}}_1 x_1 + \dots + \hat{\boldsymbol{\beta}}_k x_k$ with
\begin{align*}
(\hat{\boldsymbol{\beta}}_0, \dots, \hat{\boldsymbol{\beta}}_k) = \argmin_{\boldsymbol{\beta}_0, \dots, \boldsymbol{\beta}_k \in \mathbb{L}_2
} \sum_{i = 1}^n \left\| \boldsymbol{\beta}_0 + \sum_{j = 1}^k \boldsymbol{\beta}_j x_{i,j} - \boldsymbol{y}_i \circ \gamma_i^\ast \right\|_{\mathbb{L}_2}^2
\end{align*}
where $\gamma_i^\ast = \argmin_{\gamma \in \Gamma} \|\boldsymbol{\mu}_0 - (\boldsymbol{q}_i \circ \gamma)\sqrt{\gamma} \|_{\mathbb{L}_2}$ and $\boldsymbol{\mu}_0$ is again the SRV transformation of the elastic mean curve. This is tempting in particular if we want to fit closed curves since, on curve level, closed curves can be modeled without further modifications using a closed spline basis for the model coefficients $\boldsymbol{\beta}_0, \dots, \boldsymbol{\beta}_k$.

We further consider a heuristic procedure in which we alternate between optimal alignment and regression fit as in the quotient regression approach, but fit the linear model on curve level rather than on SRV level ('iterate align, curve fit'). This is not a suitable method for fitting the quotient regression model with respect to the elastic distance, because the elastic distance becomes the usual $\mathbb{L}_2$ metric only for SRV transforms. Fitting the linear model on curve instead of on SRV level will not return a minimizer of the squared elastic distances to the data curves. In fact, there is no risk function that this algorithm aims to minimize, and the procedure is thus only defined by the iterative algorithm rather than being the fitting algorithm of a regression model.

Moreover, both procedures with linear model fit on curve level do not include geodesics with respect to the elastic distance in their model space, i.e., they are not suitable to generalize linear regression in this sense.

\subsection{Fréchet regression} \label{subsec:frechet}
So far we considered models that exploit the linear space structure of either the space on SRV or on curve level to define regression models for curves with respect to the elastic distance. In contrast, \cite{petersen} developed a regression model they call Fréchet regression for random objects lying in arbitrary metric spaces with covariates in $\mathbb{R}^k$, which does not rely on any linear structure. They achieve this by noting that in standard linear regression, the regression function can be viewed as a function mapping the input $\boldsymbol{x} \in  \mathbb{R}^k$ to a weighted mean of the $y_i$, where only the weights depend on $\boldsymbol{x}$. Their Fréchet regression model then extends standard linear regression by using the same weights with an arbitrary metric instead of the Euclidean distance, i.e. using a weighted Fréchet mean. Although this implicitly defines a regression model for arbitrary metric spaces, without explicit model equation however, details and complexity of the estimation depend on the specific space considered. \cite{petersen} discuss in their paper the case of propability distributions equipped with the Wasserstein metric as well as the case of covariance matrices. For both cases, there is an implementation in the \texttt{R} package \texttt{frechet}\citep{frechet}. To the best of our knowledge, the case of curves with respect to the elastic distance has not yet been considered, so we describe below how we estimate the Fréchet regression model in this case. 

For observed curves with SRV transforms $\boldsymbol{q}_1, \dots, \boldsymbol{q}_n \in \mathbb{L}_2$, the predictor $\boldsymbol{f}$ for an input vector $\boldsymbol{x} \in \R^k$ is given by
\begin{align*}
\boldsymbol{f}(\boldsymbol{x})  = \argmin_{\boldsymbol{p} \in \mathbb{L}_2}
\sum_{i = 1}^n s(\boldsymbol{x}_i, \boldsymbol{x}) \inf_{\gamma_i \in \Gamma} \|\boldsymbol{p} - (\boldsymbol{q}_i \circ \gamma_i) \sqrt{\gamma_i}\|_{\mathbb{L}_2}^2,
\end{align*}
via a point-wise optimization function where the weights \citep{petersen} are given as $s(\boldsymbol{x}_i, \boldsymbol{x}) = 1 + (\boldsymbol{x}_i - \bar{\boldsymbol{x}})^\top \boldsymbol{\hat{\Sigma}}^{-1} (\boldsymbol{x} - \bar{\boldsymbol{x}})$. Here $\bar{\boldsymbol{x}} = \frac{1}{n} \sum_{i = 1}^n \boldsymbol{x}_i$ is the mean of the observed covariates and $\boldsymbol{\hat{\Sigma}} = \frac{1}{n} \sum_{i = 1}^n (\boldsymbol{x}_i - \bar{\boldsymbol{x}})(\boldsymbol{x}_i - \bar{\boldsymbol{x}})^\top$ their empirical covariance matrix.
Thus, for a given input value $\boldsymbol{x} \in \mathbb{R}^k$, the conditional mean response curve is computed as a weighted Fréchet mean with respect to the elastic distance using weights $s(\boldsymbol{x}_i, \boldsymbol{x})$. For the particular case of the space of SRV curves with the elastic metric, we propose to consider the observed polygons $\boldsymbol{\check{y}}_1, \dots, \boldsymbol{\check{y}}_n$ with SRV transformations $\boldsymbol{\check{q}}_1, \dots, \boldsymbol{\check{q}}_n$ as we do for our quotient space regression model to handle discretely observed curves. Then we estimate the weighted Fréchet mean via alternating between updating the optimal re-parametrizations $\gamma_i \in \Gamma$ as $\argmin_{\gamma_i \in \Gamma} \|\boldsymbol{p} - (\boldsymbol{\check{q}}_i \circ \gamma_i) \sqrt{\gamma_i}\|_{\mathbb{L}_2}$ for a given $\boldsymbol{p} \in \mathbb{L}_2$, using our alignment methods developed in \cite{steyer} to align discretely observed curves to a model based curve, and computing the weighted $\mathbb{L}_2$-mean 
$ \argmin_{\bar{\boldsymbol{p}}} \sum_{i = 1}^n s(\boldsymbol{x}_i, \boldsymbol{x}) \left\| \bar{\boldsymbol{p}} - (\boldsymbol{\check{q}}_i \circ {\gamma_i}) \sqrt{\dot{\gamma_i}} \right\|_{\mathbb{L}_2}^2$ for given alignments $\gamma_i$. For the mean estimation step we propose to use splines, as we do for the quotient regression model. Details are given as Algorithm \ref{algo:frechet_reg} in the Appendix.

One disadvantage of this approach is that the regression function is not given by a set of parameters, such as slopes and intercepts. In fact, for every given input vector $\boldsymbol{x}\in \mathbb{R}^k$, the value of the regression function has to be estimated separately as a weighted mean. This makes interpretation of the model more challenging and estimation more time consuming. One advantage in the SRV context is that handling closed curves is straightforward, as we can compute closed (weighted) Fréchet means using results in \cite{steyer}.

%\subsection{Comparison of the proposed approaches}
\subsection{Differences in curve alignment implied by the different approaches}
To gain an understanding of the differences between the proposed approaches, we compare how the observed curves are aligned during the fitting process and discuss the implications of these differences in specific data scenarios. When fitting the quotient regression model, we align the observed curves $\boldsymbol{\check{y}}_i$ to the model based predictions $Q^{-1}(\hat{\boldsymbol{\beta}}_0 + \sum_{j = 1}^k \hat{\boldsymbol{\beta}}_j x_{i,j})$ for all $i = 1, \dots, n$. This means that each observed $\boldsymbol{\check{y}}_i$ is aligned to a model-based curve that is expected to have similar features as the observation. Likewise, in the 'iterate align, curve fit' approach, the observed $\boldsymbol{\check{y}}_i$ is aligned to its associated prediction.

In contrast, the pre-alignment methods 'pre-align, srv fit' and 'pre-align, curve fit' align the curves to the elastic mean, hence may not properly align certain features of the curves if these features occur in specific directions of $\boldsymbol{x}$ that are missing in the mean curve. Similarly, in the fitting algorithm for the Fréchet regression model (Algorithm \ref{algo:frechet_reg}) the observed curves $\boldsymbol{\check{y}}_i$  are aligned to the model prediction for each considered new value of $\boldsymbol{x}$, which is usually different from $\boldsymbol{x}_i$. Accordingly, we also expect less convincing results for this model in situations where certain features of the curves occur only for some values of $\boldsymbol{x}$ but not for others.

Overall, we expect all five methods to provide satisfactory results in scenarios where all observed curves have similar features, and that the quotient regression model outperforms the fits after pre-alignment as well as the Fréchet regression model when some features are missing in the elastic mean curve respectively some of the curves. For the 'iterate align, curve fit' approach, the behavior is more difficult to anticipate, as its iterative procedure optimizes no loss function. 
We will investigate these expectations for model performance in simulations in Section \ref{sec:simulation}. 
Besides that Section \ref{sec:simulation} will also cover simulations on methods of inference in quotient regression, which we describe beforehand in the next section.

\section{Inference and model selection} \label{sec:inference}
\subsection{A generalized coefficient of determination}
Both Fréchet regression and the quotient regression are defined as empirical risk minimization problems in one way or another. 
%This means that no underlying probability distributions are available, for example, to draw new data according to a particular regression model. Accordingly, only the empirical risk can be used for inference and model selection. 
\cite{petersen} generalize the coefficient of determination $R^2$ to models with values $y_1, \dots, y_n$, $n \in \mathbb{N}$ in metric spaces $(\mathcal{Y}, \dist)$. For an estimated model equation $\hat{f}$ their Fréchet coefficient of determination is given as
\begin{align*}
    \tilde{R}^2 = 1 - \frac{\sum_{i = 1}^n \dist(y_i, \hat{f}(\boldsymbol{x}_i))^2}{\sum_{i = 1}^n \dist(y_i, \hat{\mu}_0)^2}
\end{align*}
where $\hat{\mu}_0 =  \argmin_{\mu \in \mathcal{Y}} \sum_{i = 1}^n \dist(y_i, \mu)^2$ is the Fréchet mean of the data. Note that if constant functions are contained in the model space in which $\hat{f}$ is estimated, we have $\tilde{R}^2 \in [0, 1]$ as for $R^2$ in standard linear regression. In this case testing the global null hypothesis of no effect, that is ${f}(\boldsymbol{x}) \equiv \mu_0$ constant, is equivalent to testing $H_0 :\tilde{R}^2 = 0$. The distribution of the test statistic $\tilde{R}^2$ under $H_0$ is available via permutation re-sampling of the data, i.e randomly permuting the labels of the response variable $y_i$ while keeping the covariates $\boldsymbol{x}_i$ fixed. They further suggest to use an adjusted coefficient of determination $\tilde{R}^2_{\text{adj}} = 1 - (1 - \tilde{R}^2) \frac{n -1}{n-k-1}$ for model selection, where $k$ accounts for the number of covariates $\boldsymbol{x}_i = (x_{i1}, \dots, x_{ik})^T$ in the model.

\subsection{Distance-based bootstrap confidence regions}\label{subsec:curveCI}
To obtain confidence regions for the predicted curves we propose to bootstrap the data $(\boldsymbol{x}_i, \boldsymbol{\check{y}}_i)$, $i = 1, \dots, n$ to obtain an approximate sample of the model predictions, $\boldsymbol{\hat{y}}_1, \dots, \boldsymbol{\hat{y}}_{N_{boot}}$, for a given $\boldsymbol{x}$. From this we construct a $(1- \alpha)$-confidence region as a generalized convex hull \citep[$\alpha$-shapes]{edelsbrunner}, of the (centered, i.e we subtract the center of mass for each predicted curve) $\lceil (1- \alpha) N_{boot} \rceil$ closest curves to the bootstrap mean with respect to the elastic distance. Note that when the bootstrapped curves form a relatively dense set, directly plotting the (1-$\alpha$) closest curves gives a good and simple visual approximation to plotting the generalized convex hull in practice. 

\subsection{Bootstrap confidence regions based on spline coefficients} \label{subsec:coefCI}
Inference as described above can be conducted for approaches without a parametric model equation, such as Fréchet regression, and parametric models, such as the quotient regression model, which provide estimates for intercept and slope parameters. However, since our quotient linear model for elastic curves is a parametric model, we are not only interested in the global null hypothesis of none of the covariates having an effect, but also want to assess the relevance of individual parameters. We propose to test individual hypotheses by bootstrapping the data $(\boldsymbol{x}_i, \boldsymbol{\check{y}}_i)$, $i = 1, \dots, n$ to obtain an approximate sample from the distribution of the estimated model parameters $\hat{\boldsymbol{\beta}}_0, \dots, \hat{\boldsymbol{\beta}}_k$.  Confidence regions for the parameters can then be constructed from this sample and used to decide whether a particular parameter, for instance $\boldsymbol{\beta}_j = \boldsymbol{0}$ corresponding to no effect, is plausible given the observed data, as detailed below. 

Our proposed representation of the coefficient functions $\boldsymbol{\beta}_j(t)= \sum_{m = 1}^M \boldsymbol{\xi}_{j,m} B_m(t)$, $t \in [0,1]$, $j = 1, \dots, k$ has the additional advantage that using a linear combination of spline basis functions $B_m(t)$, $m = 1, \dots, M$ with local support, such as B-splines, also allows to test local individual hypotheses on subintervals of $[0,1]$, i.e.\ to test where a given covariate affects the response curve. 
We have shown in \cite{steyer} that linear splines on SRV level (among other splines) are identifiable via their spline coefficients modulo parametrization, and that the mapping between the spline coefficients and the elastic curves is a homeomorphism. 
We can thus use the variation in the spline coefficients as representative of that in the estimated effects and construct alternative confidence regions as outlined in the following.
Note that this alternative to Section \ref{subsec:curveCI} is, however, only recommended when estimates are sufficiently concentrated, as we will briefly discuss in Section \ref{subsec:curveCI_vs_coefCI}. %, as long as the sample size $n$ is sufficiently large to allow for practical matching of spline coefficients across different bootstrap estimates, i.e.\ intuitively spoken for identifiability to take effect in practice. %carry over to the approximating estimated curve. 
%Especially when we estimate very flexible curves with many basis functions relative to the sample size, %the variability in the estimated aligned 
%This assumption is crucial because otherwise, due to different 
%parameterizations can 
%Otherwise, the variability in the estimated aligned parameterizations can lead to basis coefficients no longer corresponding to related locations on the estimated curves, 
%especially when we estimate very flexible curves with many basis functions relative to the sample size. %We expect this especially when we estimate very flexible curves with many basis functions but have observed relatively few curves. 
%In these cases, inference based on the spline coefficients will lead to a loss in power due to the added variability of the parameterization,  and the distance-based methods described above should be used instead.

%If one considers the variation in spline coefficients to be representative of the variation in estimated effects, we propose to 
We construct a $(1- \alpha)$-confidence region for $\boldsymbol{\beta}_j$ based on the bootstrapped spline coefficients $\boldsymbol{\xi}_{j,m}^{(b)}$, $b = 1, \dots, N_{boot}$ as the $d$-dimensional ellipse
\begin{align*}
C_{j,m, \alpha} = \{\boldsymbol{\xi}\in \R^d | (\boldsymbol{\xi} - \bar{\boldsymbol{\xi}}_{j,m})^T \boldsymbol{\hat{\Sigma}}^{-1}_{j,m} (\boldsymbol{\xi} - \bar{\boldsymbol{\xi}}_{j,m}) \leq c_{j,1 - \alpha} \},
\end{align*}
where $\bar{\boldsymbol{\xi}}_{j,m} = \frac{1}{N_{boot}}\sum_{b = 1}^{N_{boot}} \boldsymbol{\xi}_{j,m}^{(b)}$ is the bootstrap mean, $\boldsymbol{\hat{\Sigma}}_{j,m} = \frac{1}{N_{boot} - 1} \sum_{b = 1}^{N_{boot}} (\boldsymbol{\xi}_{j,m}^{(b)} - \bar{\boldsymbol{\xi}}_{j,m})(\boldsymbol{\xi}_{j,m}^{(b)} - \bar{\boldsymbol{\xi}}_{j,m})^T$ is the bootstrap sample covariance and $c_{j,m ,1 - \alpha}$ the empirical $(1-\alpha)$-quantile of the studentized bootstrap sample $\{ (\boldsymbol{\xi}_{j,m}^{(b)} - \bar{\boldsymbol{\xi}}_{j,m})^T \boldsymbol{\hat{\Sigma}}^{-1}_{j,m} (\boldsymbol{\xi}_{j,m}^{(b)} - \bar{\boldsymbol{\xi}}_{j,m})| b = 1, \dots, N_{boot}\}$ for all $j = 1, \dots, k$. From this confidence regions for the coefficients $\boldsymbol{\xi}_{j,m}$ on can proceed to construct pointwise confidence regions for the corresponding effect functions $\boldsymbol{\beta}_j$. Moreover, $C_{j,m, \alpha}$ can also be used to test the local individual hypothesis $H_{0,j,m}: {\boldsymbol{\xi}}_{j,m} =\boldsymbol{0}$ by checking for overlap with $\boldsymbol{0}$. 
Using these confidence regions for the single spline coefficients, we construct a joint $(1 - \alpha)$-confidence region for the matrix of spline coefficients $\boldsymbol{\xi}_j = (\boldsymbol{\xi}_{j,1}, \dots \boldsymbol{\xi}_{j,M})^T \in \R^{M \times d}$ corresponding to the effect function $\boldsymbol{\beta}_j$ as $C_{j, \alpha} = \bigtimes_{m = 1}^M C_{j, m, \frac{\alpha}{M}}$, where $\frac{\alpha}{M}$ is a Bonferroni-type correction of the confidence level. Hence $P(\boldsymbol{\xi}_j  \in C_{j, \alpha}) = P(\bigcap_{m = 1}^M \{ \boldsymbol{\xi}_{j,m} \in  C_{j,m, \frac{\alpha}{M}} \}) =   1 - P(\bigcup_{m = 1}^M {\{ \boldsymbol{\xi}_{j,m} \notin  C_{j,m, \frac{\alpha}{M}} \}}) \geq 1 - \sum_{m = 1}^M P({\{ \boldsymbol{\xi}_{j,m} \notin C_{j,m, \frac{\alpha}{M}}}) \geq 1 - \alpha$, if $C_{j,m, \frac{\alpha}{M}}$ is a valid confidence region, i.e.\ fulfills $P({\{ \boldsymbol{\xi}_{j,m} \notin  C_{j,m, \frac{\alpha}{M}}}) %= 1 - P(\boldsymbol{\xi}_{j,m} \in  C_{j,m, \frac{\alpha}{M}}) 
\leq \frac{\alpha}{M}$.

The constructed confidence region can be utilized to test the individual hypothesis $H_{0,j}: \boldsymbol{\beta}_j = \boldsymbol{0}$.  This is done by rejecting $H_{0,j}$ if and only if $\boldsymbol{0} \neq C_{j,m}$, which is equivalent to $\bar{\boldsymbol{\xi}}_{j,m}^T \hat{\Sigma}^{-1}_{j,m} \bar{\boldsymbol{\xi}}_{j,m} \geq c_{j,1 - \frac{\alpha}{M}}$ for at least one $m = 1, \dots, M$. We thus use $\displaystyle \max\{ \bar{\boldsymbol{\xi}}_{j,m}^T \hat{\Sigma}^{-1}_{j,m} \bar{\boldsymbol{\xi}}_{j,m} | m = 1, \dots, M \}$ as a test statistic. Since the resulting test  relies  on the local representation property of the spline coefficients for   the effect functions and, as a bootstrap method, also on the interchangeability of the data generating distribution with the empirical distribution, we examine the validity and power of the test in a simulation in the following subsection. 

\subsection{Distance vs. spline coefficient based confidence regions} \label{subsec:curveCI_vs_coefCI}

The idea of the spline coefficient based confidence regions proposed in Section \ref{subsec:coefCI} is based on the assumption that the distribution of the $\hat{\boldsymbol{\beta}}_j$, or alternatively the bootstrap samples $\hat{\boldsymbol{\beta}}^{(b)}_j$, is reflected well by an elliptical distribution of the respective spline coefficients $\hat{\boldsymbol{\xi}}^{(b)}_j$. 
Despite identifiability of the used piece-wise linear splines modulo re-parameterization \citep{steyer}, this does not necessarily have to be the case. In particular, if two estimators $\hat{\boldsymbol{\beta}}^{(1)}_j$ and $\hat{\boldsymbol{\beta}}^{(2)}_j$ differ too much, different curve alignment may result in the $m$-th spline coefficients $\hat{\boldsymbol{\xi}}^{(1)}_{j,m}$ and $\hat{\boldsymbol{\xi}}^{(2)}_{j,m}$ of each of them corresponding to different segments of the curves, which might occur especially when we estimate very flexible curves with many basis functions relative to the sample size. 
%, as long as the sample size $n$ is sufficiently large to allow for practical matching of spline coefficients across different bootstrap estimates, i.e.\ intuitively spoken for identifiability to take effect in practice. %carry over to the approximating estimated curve. 
%Especially when we estimate very flexible curves with many basis functions relative to the sample size, %the variability in the estimated aligned 
%This assumption is crucial because otherwise, due to different 
%parameterizations can 
%Otherwise, the variability in the estimated aligned parameterizations can lead to basis coefficients no longer corresponding to related locations on the estimated curves, 
%especially when we estimate very flexible curves with many basis functions relative to the sample size. %We expect this especially when we estimate very flexible curves with many basis functions but have observed relatively few curves. 
In these cases, inference based on the spline coefficients will lead to a loss in power due to the added variability of the parameterization,  and the distance-based methods described in Section \ref{subsec:curveCI} should be used.
Conversely, if the estimators $\hat{\boldsymbol{\beta}}^{(b)}_j$ are sufficiently concentrated, spline coefficient based methods allow for local investigation and might yield more power since they make use of elliptical confidence regions rather than depending on the distance to the bootstrap mean only.
\section{Simulations}
\label{sec:simulation}
We first compare in simulations the quotient regression model with the alternative procedures presented in Section \ref{sec:alternatives}. Then, in the second part of this section, we examine the test for the parameters of the quotient regression model based on the bootstrapped spline coefficients.
\subsection{Comparison of model performance}
We compare the quotient linear model to the procedures described in Section \ref{sec:alternatives}. To this end, we choose three simulation scenarios for each of which we add errors of different magnitude and draw a varying number of points per curve. The predictive performance is then determined on an independent test set drawn according to the same principle, using the mean squared (elastic) distance (MSE) of the new observations to their predicted curves. Evaluation on a test set rather than on a true underlying model is necessary because quotient regression as well as Fréchet regression are defined as risk minimazion problems and no distribution is available that would allow us to draw random curves with a specific conditional Fréchet mean structure.
With the auxiliary sampling scheme used instead, we may specify a template model but there is no precise `true' model explicitly available that we can compare the model estimates to. Each of the $3\times 4 = 12$ simulations is then repeated 100 times to obtain a stable estimate of the MSE.

\begin{table}[ht]
\centering
\resizebox{\textwidth}{!}{
\begin{tabular}{|p{0.5cm} p{0.3cm} p{0.4cm} p{1.1cm} | p{1cm} p{0.9cm} p{1cm} p{0.9cm} p{1cm} | p{1cm} p{0.9cm} p{1cm} p{0.9cm} p{1cm}|}
\hline
& & & & \multicolumn{5}{c|}{MSE} & \multicolumn{5}{c|}{Average run time in seconds} \\
\hline
 sce-nario &sim & sd & $\kappa_i \in$ & quotient space regression & pre align, SRV fit & iterate align, curve fit & pre align, curve fit & Fréchet regression & quotient space regression & pre align, SRV fit & iterate align, curve fit & pre align, curve fit & Fréchet regression \\ 
\hline
1 & 1 & 0.4 & [15, 20] & \textbf{0.57} & 0.66 & 0.65 & 0.71 & 0.59 & 12 & 5 & 9 & 5 & 56 \\ 
1 &  2 & 0.8 & [15, 20] & \textbf{0.88} & 0.95 & 0.96 & 1.00 & 0.89 & 13 & 5 & 10 & 5 & 67 \\ 
1 &  3 & 0.4 & [30, 40] & \textbf{0.32} & 0.39 & 0.37 & 0.44 & 0.33 & 83 & 28 & 54 & 24 & 528 \\
1 &  4 & 0.8 & [30, 40] & \textbf{0.74} & 0.82 & 0.80 & 0.85 & 0.76 & 51 & 18 & 34 & 17 & 338 \\ \hline
2 &  5 & 0.2 & [15, 20] & \textbf{0.35} & 0.59 & 0.38 & 0.54 & 0.37 & 14 & 14 & 25 & 14 & 105 \\
2 &  6 & 0.4 & [15, 20] & \textbf{0.43} & 0.67 & 0.46 & 0.62 & 0.45 & 4 & 4 & 8 & 4 & 33 \\ 
2 &  7 & 0.2 & [30, 40] & \textbf{0.19} & 0.41 & 0.22 & 0.37 & 0.22 & 16 & 11 & 28 & 10 & 141 \\
2 &  8 & 0.4 & [30, 40] & \textbf{0.31} & 0.55 & 0.35 & 0.50 & 0.34 & 16 & 11 & 31 & 11 & 195 \\ \hline
3 &  9 & 0.1 & [15, 20] & \textbf{0.78} & 0.89 & 0.81 & 0.93 & 0.84 & 32 & 90 & 44 & 82 & 1922 \\ 
3 &  10 & 0.2 & [15, 20] & \textbf{1.37} & 1.49 & 1.39 & 1.52 & 1.41 & 38 & 102 & 58 & 97 & 1769 \\ 
3 &  11 & 0.1 & [30, 40] & \textbf{0.95} & 1.06 & \textbf{0.95} & 1.08 & 0.99 & 14 & 47 & 22 & 47 & 707 \\ 3 &  12 & 0.2 & [30, 40] & 2.80 & 2.96 & \textbf{2.79} & 2.95 & 2.81 & 13 & 48 & 20 & 48 & 528 \\ 
\hline
\end{tabular}
}
\caption{\label{tab:model_performance} Mean squared elastic distance (MSE) estimated out of sample (smallest per row in bold) and average run time of one estimation for the five methods in three different scenarios with a varying error magnitude (sd) and number of points drawn per curve, where the number of points $\kappa_i$ is drawn uniformly on a given range (15 to 20 or 30 to 40 points per curve). This gives a total of 12 simulations (sim).}
\end{table}

The three different simulation scenarios differ regarding which curves are used as models for $x = -1$ and for $x = 1$ and whether the trajectory between them is modeled linearly on SRV or on curve level. For the first scenario (simulations 1-4, see Fig. \ref{fig:simulated_models} (left) for an example of simulation 1), we use similar fish shapes to model the curves for $x = -1$ and for $x = 1$, and consider the geodesic between them (i.e. linear on SRV level with curves aligned). This setting is meant to be advantageous to methods using pre-alignment to the mean curve (fish), which should give good alignment among all curves, and we expect that all five methods should be able to model this type of data well. In contrast, in the second scenario (simulations 5-8, see Fig. \ref{fig:simulated_models} (middle) for an example of simulation 5) we also consider a linear relationship of the covariate $x$ with the SRV curves, but not a geodesic (i.e.\ no alignment with respect to the elastic distance between endpoints) between the curves for $x = -1$ (fish with open mouth) and $x = 1$ (fish with closed mouth). This seems natural since aligning the modeled curves here would not match the back end of the open mouth ($x = -1$) with the tip of the closed mouth ($x = 1$). In this setting we expect that pre-aligning the data to the elastic mean will not properly align the open/closed mouth of the fish and therefore %an iterative approach such as fitting 
the quotient linear model is beneficial. In the last simulation scenario (simulations 9-12, see Fig. \ref{fig:simulated_models} (right) for an example of simulation 11), we consider model misspecification in the sense that the effect of the covariate $x \in [-1, 1]$ is simulated linearly on curve instead of on SRV level. Additionally, in this setting we investigate the quality of our approach to modeling smooth, closed contours, here simulating closed quadratic spline curves.

To generate observations for the first scenario, we first obtain $n=11$ smooth curves for $x = -1, -0.8, \dots, 0.8, 1$ as the convex combinations of the SRV transformed modeled curves for $x = -1$ and $x = 1$. Next we evaluate them on a regular grid of 51 points from which we compute 50 SRV vector via finite difference approximation of the derivative. After adding a Gaussian 1st order random walk error with standard deviation $sd$ to these SRV vectors we back transform them to the curve level and select $\kappa_i$ of the resulting 51 points, where $\kappa_i$ is drawn uniformly on the given interval, to obtain sparse/irregular settings. Here we choose relatively small standard deviations $sd$ for the additional noise compared to the effect, since we want to focus on demonstrating structural differences between possible effects on the curves and how the different methods handle those. % effects on the curves. 

Each of the five regression models/procedures is fitted to these data assuming linear SRV splines with 11 knots for the quotient space regression model, the fit on SRV level after pre-alignment and the Fréchet regression model, and quadratic splines also with 11 knots for the models with fit on curve level. This results in the same model flexibility for all five models modulo translation. Since in this scenario, the modeled curves for $x = -1$ and $x = 1$ are approximately aligned, we expect all five methods to give reasonable results. This is confirmed visually in the model predictions (Fig.~\ref{fig:simulated_models}, left, and Fig. \ref{fig:simulation_1}), but the MSEs in rows 1-4 of Tab. \ref{tab:model_performance} reveal that the quotient space regression model performs best for this scenario and all combinations of $sd$ and  $\kappa_i$ .

The data for the second scenario, i.e., simulations 5-8, are generated in the same way as the data for the first scenario, except that the shape of the modeled curves for $x = -1$ and $x = 1$ differs more and we do not consider the geodesic between them as the generating
model. Since in this scenario the modeled curves have sharp edges around the mouth, we use constant splines on SRV-level corresponding to linear splines on curve level and 51 knots for all five procedures (see Fig.~\ref{fig:simulated_models}, middle, and Fig. \ref{fig:simulation_5} for an example of simulation 5). In this setting pre-aligning the data to the elastic mean (which also corresponds to the model prediction for $x = 0$ of the Fréchet regression model) will not properly align the open/closed mouth of the fish. Thus, a procedure that pre-aligns and then fits a model is not able to fit the open mouth of the fish for $x = -1$ (see Fig. \ref{fig:simulated_models}, middle). Similarly, for the Fréchet model fit the open mouth appears too small, as well as the whole predicted curve for $x = -1$ and $x = 1$. This is the case since for fitting the Fréchet model, we also align fish with open and closed mouth, since for each new value of $x$, we align all data curves to the corresponding new prediction (cf. Algorithm \ref{algo:frechet_reg}). Hence in this setting only the quotient regression model gives visually satisfying results, which is also reflected in the MSEs of the five models (Tab. \ref{tab:model_performance}, simulations 5 to 8). Here the MSE is always the smallest for the quotient regression model followed by Fréchet regression and the heuristic procedure of iterating between alignment and curve-level fit. We expect this to be the case in general if features of the curves (as for example the open mouth of the fish) that occur in certain directions of $x$ are missing in the mean curve.

\begin{figure}[ht]
    \centering
    \includegraphics[scale=0.65]{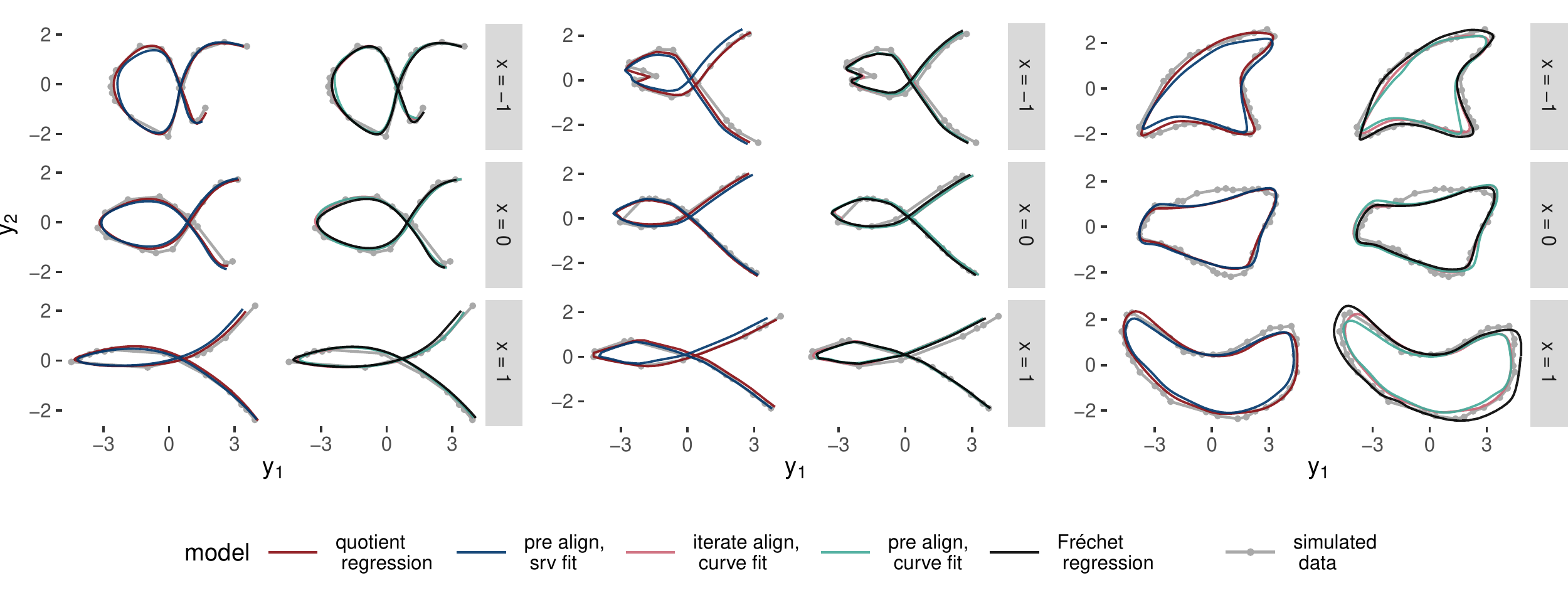}
    \caption{\label{fig:simulated_models} Predictions for $x = -1, 0, 1$ for one typical selected run of simulation 1 (left), simulation 5 (middle) and simulation 11 (right). The predictions for all $x$ values of the same runs can be found in the appendix (Fig. \ref{fig:simulation_1}, Fig. \ref{fig:simulation_5} and Fig. \ref{fig:simulation_11}, respectively).}
\end{figure}

For the last simulation scenario (simulations 9 to 12) we not only choose the model to be linear on curve instead of on SRV level and use closed quadratic spline curves here to generate smooth, closed contours, we also add the random walk error with standard deviation $sd$ directly to the $\kappa_i$ selected points, and not to the observed SRV vectors. This leads to observed curves that suit a curve-level functional model better than an SRV-level model, both in terms of their relationship with the covariate and in terms of error structure. We choose this setting, which neither fits well with the quotient linear model nor with Fréchet regression, to demonstrate the robustness of our method and to validate the adapted algorithm for closed curves (Algorithm \ref{algo:quotient_regression_closed} in the Appendix). For the quotient linear model and the pre-align, SRV fit procedure we use closed linear splines with 21 knots on SRV-level and the procedure for closing the splines described in Subsection \ref{subsec:closed_curves}. For the procedures with fit on curve level we use quadratic closed splines with 21 knots and for the Fréchet regression model we use linear SRV splines with 21 knots and the algorithm for estimating closed mean curves of \citet{steyer} adapted for weighted mean estimation (Algorithm \ref{algo:frechet_reg} in the Appendix).
See Fig.~\ref{fig:simulated_models}, right, and Fig. \ref{fig:simulation_11} in the appendix for an example of simulation 11. Even in this unfavorable setting the quotient linear model performs best in three out of four simulations. Only in the case of $\kappa_i\in [30, 40]$ points per curve and $sd = 0.2$, the procedure where we iterate between alignment and curve-level fit performs slightly better in terms of the MSE (Tab. \ref{tab:model_performance}). This can be explained by the fact that in this case the points are observed relatively densely and therefore errors at the curve level cause large errors at the SRV level (since we calculate the derivative via finite differences).

Visually, the quotient regression model and iterating between alignment and curve-level fit gives satisfying results, while if we fit a model after pre-alignment, the predicted curves for $x = -1$ and $x = 1$ appear too small (see Fig. \ref{fig:simulated_models}, right). This can again be explained by the fact that alignment to the mean does not automatically result in good alignment among the curves. This is similarly problematic for Fréchet regression. Here, the prediction for $x = 1$ appears too large and the prediction for $x = -1$ is a bit too bulky on the left. 

Overall, in the 12 simulations, the quotient regression model performed best in terms of the MSE among the five estimation methods considered, followed by Fréchet regression and the alternation between curve level fitting and alignment. Also, the average time required for one computation is relatively small for the quotient regression model compared to the other methods, especially for the more complex simulations 5 to 12 (Tab. \ref{tab:model_performance}), while it naturally takes somewhat longer than methods with pre-alignment only in most scenarios. 
The increased run times for methods with pre-alignment in scenarios with closed curves (simulation 9-12) stem from the fact that they involve unconditional elastic mean computation explicitly optimizing for closed mean curves \citep{steyer} whereas quotient regression utilizes the simplified approach described in Section \ref{subsec:closed_curves}.
In part, this also explains the long run times of Fréchet regression in these scenarios, building on an adapted version of this unconditional elastic mean computation.
Fréchet regression, however, also generally takes longer than the other methods, since optimization must be performed for each value of $\boldsymbol{x}$ separately; here for our small $n=11$ and single covariate we used all observed covariates $x = -1, -0.8, \dots, 0.8, 1$. This means that for this model the computation time increases not only with the number $n$ of observed curves but also with the number of predictions for covariate combinations desired.

\subsection{Inference based on spline coefficients}
Another advantage of the quotient regression model over Fréchet regression is that it yields parameter estimates for each covariate. These are useful not only for interpretation but also for model inference. In this simulation, we investigate and validate the bootstrap based tests described in Section \ref{sec:inference} for the slope parameters of the quotient regression model. In particular, we focus on the more difficult case of the test based on the associated spline coefficients, which additionally allows to investigate local properties of the slope parameters.

For this purpose, we generate SRV curves as linear splines with 6 equidistant knots as a function of two covariates. To see how the test behaves with stronger and weaker effects, a strong effect $\tilde{\boldsymbol{\beta}}_1$ is used for the association with $x_1$ and a weaker, local effect $\tilde{\boldsymbol{\beta}}_2$ is assumed for the relation with $x_2$, with $\tilde{\boldsymbol{\beta}}_2(t) = 0$ for $t >= 0.4$ (cf.\ Fig. \ref{fig:test_beta_coefs.pdf}). In addition, we assume that there is a third covariate $x_3$, which is independent of the observed curves. 

For the simulation, we first draw samples of sample size $n \in \{10, 30, 60\}$ of the covariates $x_{ji} \sim \text{Unif}(-1,1)$ for $j = 1,2,3$ and $i = 1, \dots, n$. Then, similar as in the previous subsection, for $i = 1, \dots, n$ we randomly select 10 to 15 points on the curve $Q^{-1}(\tilde{\boldsymbol{\beta}}_0 + \tilde{\boldsymbol{\beta}}_1 x_{1i} + \tilde{\boldsymbol{\beta}}_2 x_{2i})$. Note, that this procedure will not generate observations from the quotient regression model with parameters $\tilde{\boldsymbol{\beta}}_0$,  $\tilde{\boldsymbol{\beta}}_1$ and $\tilde{\boldsymbol{\beta}}_2$, as the sampling of the points on the curve generates a not further defined error in the quotient space. Since the quotient regression model is defined only as a minimization problem and there is no generating probability distribution available to sample curves from this model for given model parameters but we have to rely on the described auxiliary sampling scheme. In general, this implies that the model will be misspecified, i.e. $\mathcal{E}([Y]|x_1, x_2, x_3) \neq [Q^{-1}(\tilde{\boldsymbol{\beta}}_0 + \tilde{\boldsymbol{\beta}}_1 x_{1} + \tilde{\boldsymbol{\beta}}_2 x_{2})]$ for the quotient regression model and the data generated as described above. As a consequence, if we estimate $\displaystyle (\hat{\boldsymbol{\beta}_0}, \hat{\boldsymbol{\beta}}_1, \hat{\boldsymbol{\beta}}_2, \hat{\boldsymbol{\beta}}_3) =\argmin_{(\boldsymbol{\beta}_0, \boldsymbol{\beta}_1, \boldsymbol{\beta}_2, \boldsymbol{\beta}_3) \in \mathbb{L}_2} \sum_{i = 1}^n  d(\boldsymbol{y}_i, \boldsymbol{\beta}_0 + \boldsymbol{\beta}_1 x_{1i} + \boldsymbol{\beta}_2 x_{2i} + \boldsymbol{\beta}_3 x_{3i})^2$ with respect to the elastic distance, the parameters $\hat{\boldsymbol{\beta}_0}$, $\hat{\boldsymbol{\beta}}_1$ and $\hat{\boldsymbol{\beta}}_2$ will be different to $\tilde{\boldsymbol{\beta}}_0$,  $\tilde{\boldsymbol{\beta}}_1$ and $\tilde{\boldsymbol{\beta}}_2$ even if $n \to \infty$. However, $\hat{\boldsymbol{\beta}}_3 \to \boldsymbol{0}$ if $n \to \infty$ holds, since  $x_3$ and the curves are assumed to be independent. For the test of the slope parameters, this means that rejections of $H_{01}: \boldsymbol{\beta}_1 = 0$ and $H_{02}: \boldsymbol{\beta}_2 = 0$ correspond to the test's power, while those of $H_{03}: \boldsymbol{\beta}_3 = \boldsymbol{0}$ should keep the type one error rate here specified as $\alpha=0.05$. Looking at the tests for the individual spline coefficients $\boldsymbol{\xi}_{2,m}$,  $m =1, \dots, 6$ of $\boldsymbol{\beta}_2$, we expect the null hypotheses $\boldsymbol{\xi}_{2,1} = \boldsymbol{0}$ and $\boldsymbol{\xi}_{2,2} = \boldsymbol{0}$ to be rejected, but because of the above argument, the other spline coefficients are not guaranteed to be zero.

To obtain an estimate of the rejection probability for the tests of the coefficients being zero given the sample size $n \in \{10, 30, 60\}$ and the number of bootstrap repetitions $N_{boot} \in \{100, 500, 1000\}$, we draw 1000 times a sample consisting of curves $\boldsymbol{y}_1, \dots, \boldsymbol{y}_n$ with covariates $x_{1i}, x_{2i}, x_{3i}$, $i = 1, \dots, n$ as described above.  Next, we draw bootstrap replicates $\boldsymbol{y}_1^{(b)}, \dots, \boldsymbol{y}_n^{(b)}$, $b=1, \dots, N_{boot}$, from the sample and reject the null hypothesis $H_{0j}: \boldsymbol{\beta}_j = \boldsymbol{0}$ if $\bar{\boldsymbol{\xi}}_{j,m}^T \hat{\Sigma}^{-1}_{j,m} \bar{\boldsymbol{\xi}}_{j,m} \geq c_{j, 1 - \frac{\alpha}{M}}$, where $c_{j, 1 - \frac{\alpha}{M}}$ is the $1 - \frac{\alpha}{M}$ percentile, for any of the spline coefficients $\boldsymbol{\xi}_{j,m}$,  $m = 1, \dots, M=6$ of $\boldsymbol{\beta}_j$
(as described in more detail in Section \ref{sec:inference}). The estimated rejection probability (Tab. \ref{tab:bootstrap_test})  then is the relative proportion of the 1000 repetitions in which the null hypothesis is rejected.

\begin{table}[ht]
\centering
\begin{tabular}{|rr|rrr|}
  \hline
$n$ & $N_{boot}$ & $H_{01}: \boldsymbol{\beta}_1 = \boldsymbol{0}$ & $H_{02}: \boldsymbol{\beta}_2 = \boldsymbol{0}$ & $H_{03}: \boldsymbol{\beta}_3 = \boldsymbol{0}$ \\ 
  \hline
  10 & 100 & 0.66 & 0.07 & 0.02 \\ 
  10 & 500 & 0.42 & 0.01 & 0.00 \\ 
  10 & 1000 & 0.38 & 0.01 & 0.00 \\ 
  30 & 100 & 1.00 & 0.76 & 0.12 \\ 
  30 & 500 & 1.00 & 0.67 & 0.06 \\ 
  30 & 1000 & 1.00 & 0.67 & 0.05 \\ 
  60 & 100 & 1.00 & 0.97 & 0.10 \\ 
  60 & 500 & 1.00 & 0.96 & 0.05 \\ 
  60 & 1000 & 1.00 & 0.96 & 0.04 \\ 
   \hline
\end{tabular}
\caption{Estimated rejection probability of the null hypothesis $H_{0j}: \boldsymbol{\beta}_j = 0$, $j=1, 2, 3$, for a sample of size $n$ and $B$ Bootstrap replications.}
\label{tab:bootstrap_test}
\end{table}

For the data constellation described above, table \ref{tab:bootstrap_test} indicates that the rejection  probability of $H_{03}$ keeps the $\alpha$ level of  5\% if the number of  bootstrap replications is sufficiently large, i.e.\ at least about 500-1000.  In this setup, the weak effect $\boldsymbol{\beta}_2$ is found to be significant in 67\% and 97\% of the cases and the strong effect $\boldsymbol{\beta}_1$ even in 100\% of the cases for $n=30$ and $60$, respectively. To see if the distinction of zero and non-zero effects is also possible for parts of the curves, we consider in Fig. \ref{fig:test_beta_coefs.pdf} (right) the rejection probabilities for the tests of the individual spline coefficients. 

\begin{figure}[ht]
    \centering
    \includegraphics[scale=0.7]{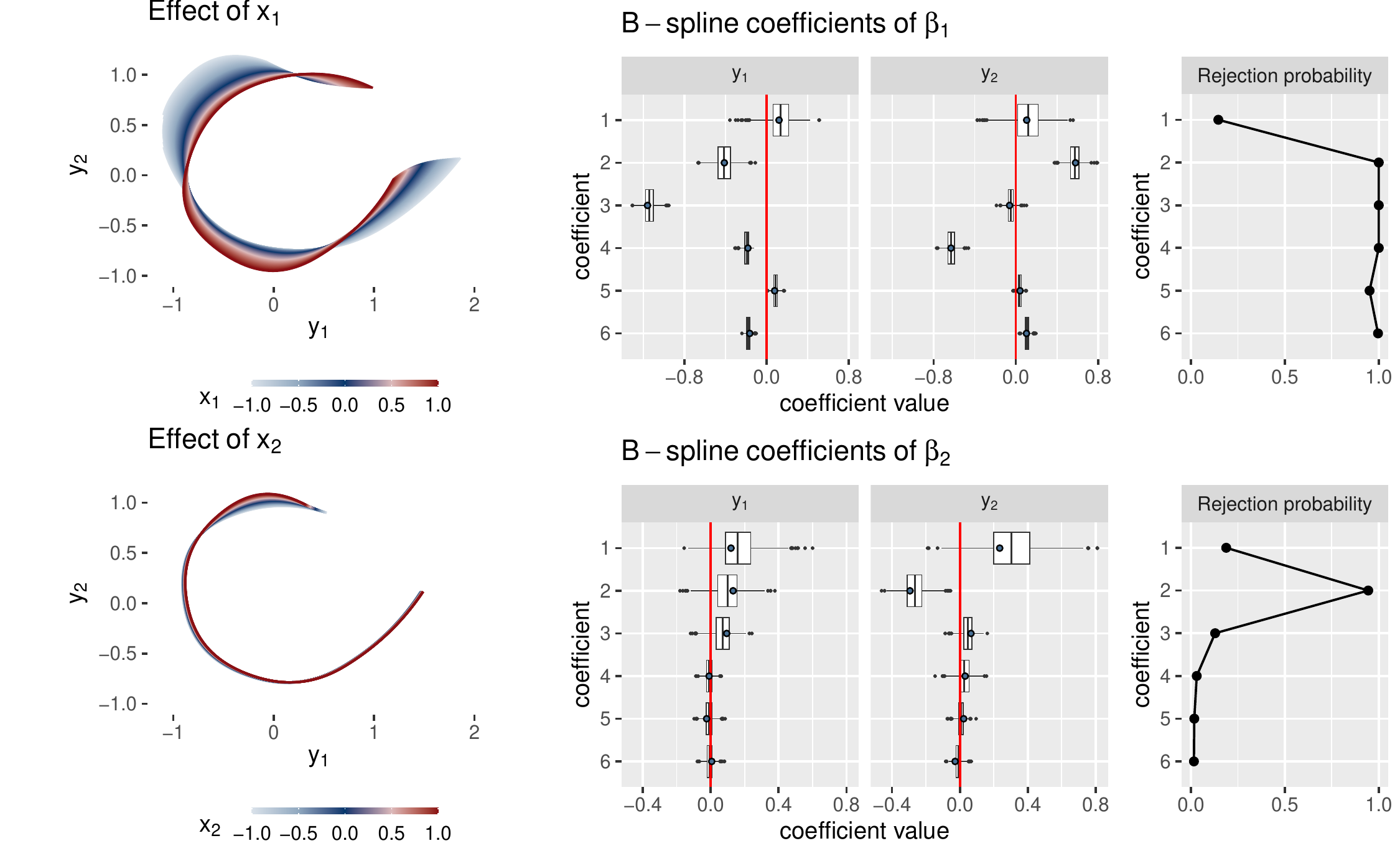}
    \caption{\label{fig:test_beta_coefs.pdf} Left: Effect of $x_1$ given $x_2 = 0$ (top) and of $x_2$ given $x_1 = 0$ (bottom) estimated on a large simulated sample (n = 500) as an approximation of the closest model to the true model in our model space. Middle: Distribution of  bootstrapped coefficients ${\boldsymbol{\xi}}_{j,m}^{(b)}$, $m = 1, \dots, 6$, $j = 1$ (top) and $j = 2$ (bottom) over the 1000 repetitions for the setting with $n = 60$ and $N_{boot} = 1000$; the coefficients of the effect estimated on the large sample are displayed as blue dots. Coefficients ${\boldsymbol{\xi}}_{j,1}, \dots, {\boldsymbol{\xi}}_{j,6}$  correspond to regions on the curves from the top anti-clockwise to the right. Right: Rejection probabilities for the tests of the individual spline coefficients.}
\end{figure}

The plot indicates that the null hypotheses for coefficients of $\beta_1$ are mostly rejected (rejection probabilities 0.15, 1.00, 1.00, 1.00, 0.95, and 1.00 for coefficients 1-6, respectively) while for $\beta_2$ the coefficients changing the lower part of the curves, $\boldsymbol{\xi}_{4}, \boldsymbol{\xi}_{5}$ and $\boldsymbol{\xi}_{6}$, are mostly not rejected (rejection probabilities 0.19,  0.94,  0.13,  0.03,  0.02 and 0.02 for coefficients 1-6, respectively), which is consistent with the absent visible effect (Fig. \ref{fig:test_beta_coefs.pdf} , left). The distribution of the bootstrapped coefficients ${\boldsymbol{\xi}}_{j,m}^{(b)}$ (Fig.\ref{fig:test_beta_coefs.pdf} , middle) essentially scatters around the estimated optimal parameters for large sample size (n = 500), which we use as an approximation of the closest model to the true model in our model space. This indicates good identifiability of the regression model via its spline coefficients.

We further repeat the simulation for n = 60 and B = 1000 with 11 instead of 6 knots still using linear SRV splines for data generation and for modeling to check that the tests are also valid for more complex curves with more spline coefficients. Here we observe a rejection probability of 100\% for $H_{01}$ and $H_{02}$ and of 6\% for $H_{03}$.
The high rejection probability also for the smaller effect $\boldsymbol{\beta}_2$ probably results from the fact that we did not succeed in simulating a local effect (see Fig. \ref{fig:test_beta2_more_coefs} in the appendix). However, the coefficients appear to be well identified here as well. To keep the significance level of $\alpha = 0.05$ for $H_{03}$ exactly, more observations would be necessary for this larger number of spline coefficients.

Overall, we conclude from this simulation that it is possible to test the significance of the slope parameters for the quotient space regression model using the corresponding spline coefficients. As this requires that the model is well identified by the spline coefficients, the sample size should be large enough to ensure that the spline coefficients of the different bootstrap model estimates represent equivalent parts of the curves. In particular, for a larger number of spline coefficients allowing flexible modeling of curves, a relatively large number of observations is needed to a) maintain the $\alpha$ level in the bootstrap setting and b) obtain reasonable power.
We also discuss the choice of test based or not based on spline coefficients in the context of the application in the next section. %, we will see an example of a real data application and discuss, among other things, whether a test based on the spline coefficients is the best choice in this case.

\section{Investigating the effect of age and Alzheimer's disease on hippocampus outlines via elastic regression}
\label{sec:application}
Hippocampal volume loss is associated with both Alzheimer's disease and normal aging \citep{henneman}. Moreover, \cite{frisoni} showed that these covariates affect the hippocampal surface locally using parametric surface mesh models. The surface mesh model, however, depends on a meaningful parametrization of the shape. In contrast, we investigate local effects on the hippocampal volume by modeling the shape of the hippocampus. However, it's essential to note that when we refer to 'shape' we do not mean the classical shape spaces that consider point configurations modulo rotation and scaling. Instead, our approach focuses on curves modulo reparametrization and translation, using the two-dimensional outlines in a quotient linear model that defines an elastic model of the outlines modulo re-parametrization without dependence on any chosen parametrization.

\subsection{Data acquisition and preparation}
Data used in the preparation of this article were obtained from the Alzheimer’s Disease Neuroimaging Initiative (ADNI) database (\url{adni.loni.usc.edu}). The ADNI was launched in 2003 as a public-private partnership, led by Principal Investigator Michael W. Weiner, MD. The primary goal of ADNI has been to test whether serial magnetic resonance imaging (MRI), positron emission tomography (PET), other biological markers, and clinical and neuropsychological assessment can be combined to measure the progression of mild cognitive impairment (MCI) and early Alzheimer’s disease (AD).
In addition to the MRI images, ADNI also provides semi-automated segmentations of the hippocampus created using a high-dimensional brain mapping program SNT, which was commercially available from Medtronic Surgical Navigation Technologies (Louisville, CO). For more details on this procedure and a comparison with manual segmentation of the hippocampal volumes, see \cite{hsu}. For our analysis, we use all available hippocampal masks of the 101 Alzheimer's disease (AD) patients and 138 controls (CN) obtained from the MRI images of the first scanning session. 
To apply our quotient regression model to the hippocampus data, we need to extract two-dimensional outlines (Fig. \ref{fig:hippocampus_data}, right) from the three-dimensional hippocampal masks (Fig. \ref{fig:hippocampus_data}, left). To do this, we perform the following steps for the left and right hippocampus separately. First, each hippocampus is rotated around the left-right axis using principal component analysis so that its first principal component lies in the horizontal plane. Then we project the data onto the horizontal plane and use the function \texttt{ocontour} from the \texttt{R}-package ``EBImage'' \citep{EBImage} to extract a closed outline curve. After alignment to the overall mean, the outlines of the hippocampus are  sliced at the tail in the same location to obtain meaningful open curves, since the hippocampus merges into the fornix at the tail, i.e. it is not anatomically closed. In the last step the number of points per curve is reduced to improve the computational efficiency of model estimation, via keeping only points whose time stamps are at least 0.015 apart after alignment to the overall mean.
\begin{figure}[ht]
    \centering
    \includegraphics[scale=0.65]{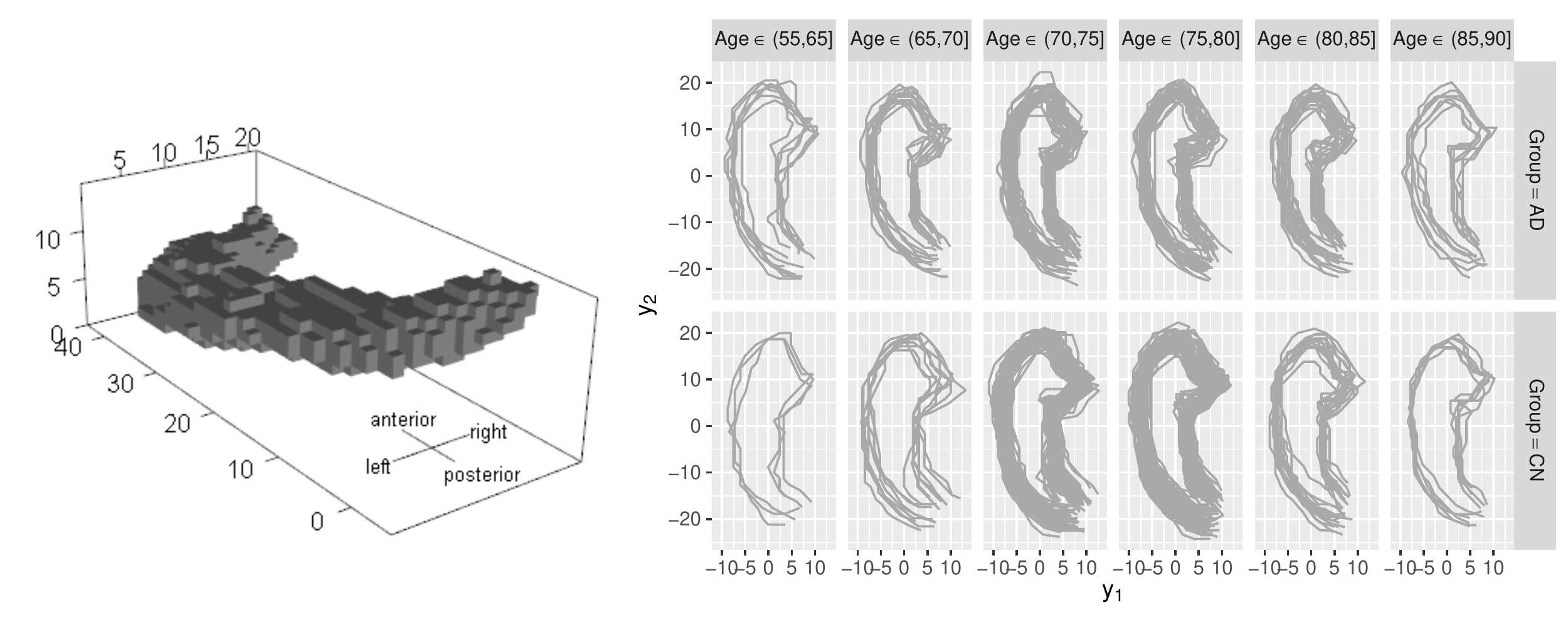}
    \caption{\label{fig:hippocampus_data}
    Left: Three-dimensional left hippocampal mask for one person. Right: Open outlines of the left hippocampus for different age groups separately for Alzheimer's disease (AD) patients and control group (CN).}
\end{figure}

The result of this preprocessing are left and right hippocampal outlines of 101 Alzheimer's disease (AD) patients and 138 control subjects (CN) with 33 to 48 points per curve. The explanatory variables considered are Age, Group = AD, CN and Sex = M, F of the subjects. These covariates are roughly balanced, the mean age for AD and CN is 76 years, ranging from 57 to 89 and 62 to 90 years, respectively, and about 49\% of the subjects in both groups are female. Visual inspection of the outlines, taking into account the covariates Age and Group (Fig. \ref{fig:hippocampus_data}, right), reveals no clear relationship with the shape of the hippocampus. This might be due to the large overall variance of the outlines.

\subsection{Regression analysis of hippocampal shapes}
To see how age, Alzheimer's disease and the sex of a subject influence the shape of the hippocampus, we model the hippocampus outlines using the quotient linear model for elastic curves (Def. \ref{defi:quotient_reg_elastic}). Precisely, we assume

\begin{align*}
    \mathcal{E}([\boldsymbol{Y}] | x_{\text{Age}}, x_{\text{Group}}, x_{\text{Sex}}) = [Q^{-1}\left( \boldsymbol{\beta}_0 + \boldsymbol{\beta}_{\text{Age}} x_{\text{Age}} +  \boldsymbol{\beta}_{\text{Group}} x_{\text{Group}} + \boldsymbol{\beta}_{\text{Sex}} x_{\text{Sex}} \right)],
\end{align*}

where the conditional Fr\'echet mean $\mathcal{E}([\boldsymbol{Y}] | x_{\text{Age}}, x_{\text{Group}}, x_{\text{Sex}})$ of the hippocampal outlines is defined with respect to the elastic distance on the product space of elastic curves for the left and right hippocampus. That is $\dist(\boldsymbol{y}_1, \boldsymbol{y}_2) = \sqrt{\dist_{\text{left}}(\boldsymbol{y}_{1, \text{left}}, \boldsymbol{y}_{2, \text{left}})^2 + \dist_{\text{right}}(\boldsymbol{y}_{1, \text{right}}, \boldsymbol{y}_{2, \text{right}})^2}$ with $\boldsymbol{y}_i = (\boldsymbol{y}_{i, \text{left}}, \boldsymbol{y}_{i, \text{right}}), i = 1,2$, where $\dist_{\text{left}}$ and $\dist_{\text{right}}$ are the separate elastic distances for the left and the right hippocampal curves, respectively. With this product space distance, the optimization problem defining the metric regression model  $\argmin_{f \in \mathcal{F}} \sum_{i = 1}^n  \dist(\boldsymbol{y}_i, f(\boldsymbol{x}_i))^2$ becomes $\argmin_{f \in \mathcal{F}} \sum_{i = 1}^n  \dist_{\text{left}}(\boldsymbol{y}_{\text{left}}, f_{\text{left}}(\boldsymbol{x}_i))^2 + \sum_{i = 1}^n \dist_{\text{right}}(\boldsymbol{y}_{\text{right}}, f_{\text{right}}(\boldsymbol{x}_i))^2 $ with $f = (f_{\text{left}}, f_{\text{right}})$ and therefore can be solved separately for the left and right hippocampal shapes.

The parameters $\boldsymbol{\beta}_0,\boldsymbol{\beta}_{\text{Age}}, \boldsymbol{\beta}_{\text{Group}}, \boldsymbol{\beta}_{\text{Sex}} \in \mathbb{L}_2$ of this intrinsic metric regression model are estimated using linear spline functions with 21 equidistant knots for the left and the right hippocampus each. Since this leads to piecewise linear predictions on SRV level, the predicted outlines are smooth curves (Fig. \ref{fig:hippocampus_effects}).  %However, because we model the SRV transforms in $\mathbb{L}_2$ and not the observed curves, the estimated effects are not straightforward to interpret directly. Therefore, we visualize them via model predictions in which one covariate is varied at a time.
Linear effects on SRV level are visualized on curve level by varying one covariate at a time to illustrate effect directions via corresponding predictions. 

\begin{figure}[ht]
    \centering
    \includegraphics[scale=0.7]{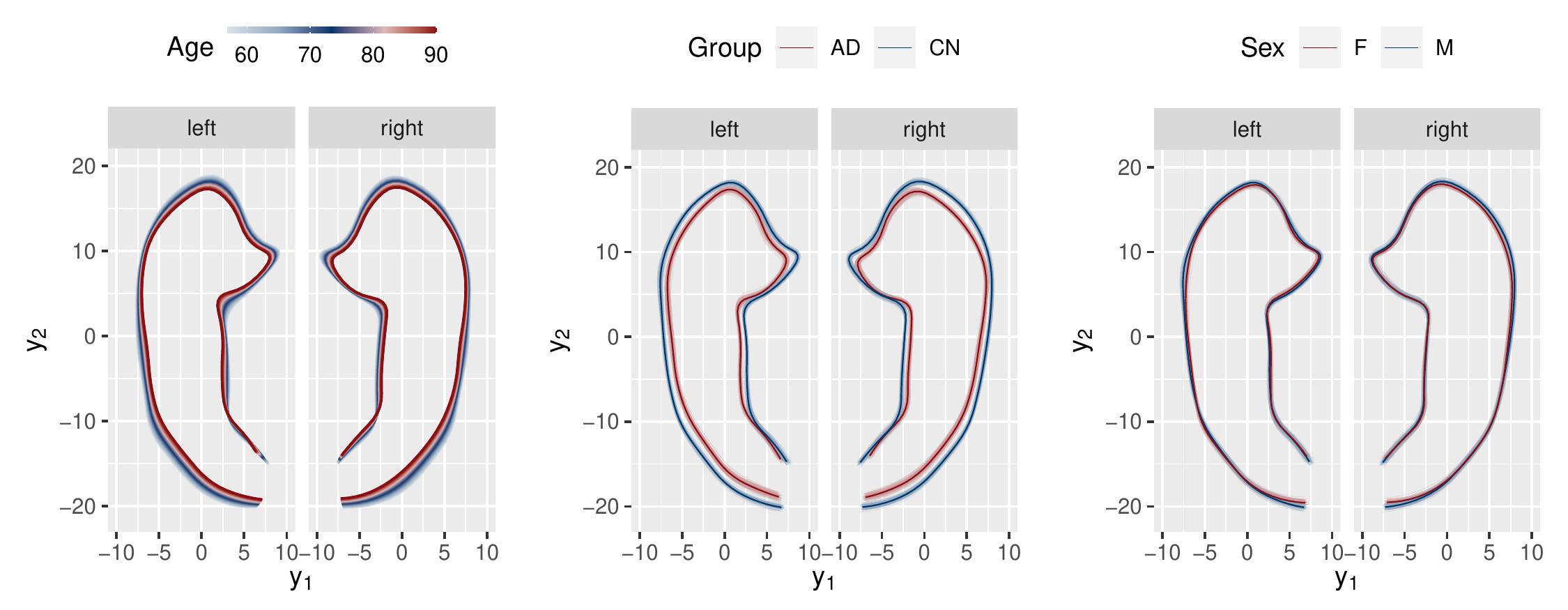}
    \caption{\label{fig:hippocampus_effects} 
    Effects displayed via model predictions with one varying covariate at a time. As a common reference, the remaining covariates are set to Age = mean(Age) = 76, Group = CN, and Sex = M. For the binary covariates Group and Sex, bootstrap predictions (lighter color) are added to the model predictions.}
\end{figure}

As expected, we observe similar effects for left and right hippocampus and the hippocampal volume decreases with age and for AD patients. Moreover, the Sex effect appears to be small compared to Age and Group effect. Since age and Alzheimer's disease appear at first glance to have a comparable effect on the hippocampus, the question arises to what extent the covariates Age and Group affect the shape of the hippocampus differently. To answer this, we use the linear structure underlying the quotient regression model and project $\hat{\boldsymbol{\beta}}_{\text{Group}}$ onto $\hat{\boldsymbol{\beta}}_{\text{Age}}$. The scalar projection of $\hat{\boldsymbol{\beta}}_{\text{Group}}$ onto $\hat{\boldsymbol{\beta}}_{\text{Age}}$ is 12.8 years, which means that having Alzheimer's diseases shrinks the hippocampus about as much as 12.8 years of aging would do, but the angle between $\hat{\boldsymbol{\beta}}_{\text{Group}}$ and $\hat{\boldsymbol{\beta}}_{\text{Age}}$ is 47 degree, which means only about half of the Group effect shows in the same direction as the Age effect does. To visualize the remaining effect, we plot the prediction for a subject with Alzheimer's disease alongside the prediction for a model where the Group effect is replaced by its linear projection on the Age effect (Fig. \ref{fig:hippocampus_effects_supplement}, left). This allows us to see which parts of the hippocampus are effected differently.  While both age and Alzheimer's disease reduce the volume at the \ovalbox{1} label in Fig. \ref{fig:hippocampus_effects_supplement} (left), the width of the hippocampal head, i.e., the distance between \ovalbox{2} and \ovalbox{4}, appears to be reduced substantially more by AD than by normal aging. In contrast, the distance between \ovalbox{1} and \ovalbox{3} appears to be similarly affected by both covariates, although for the right hippocampus this distance might be smaller for AD patients compared to someone in the control group who is 12.8 years older.

\begin{figure}[ht]
    \centering
    \includegraphics[scale=0.65]{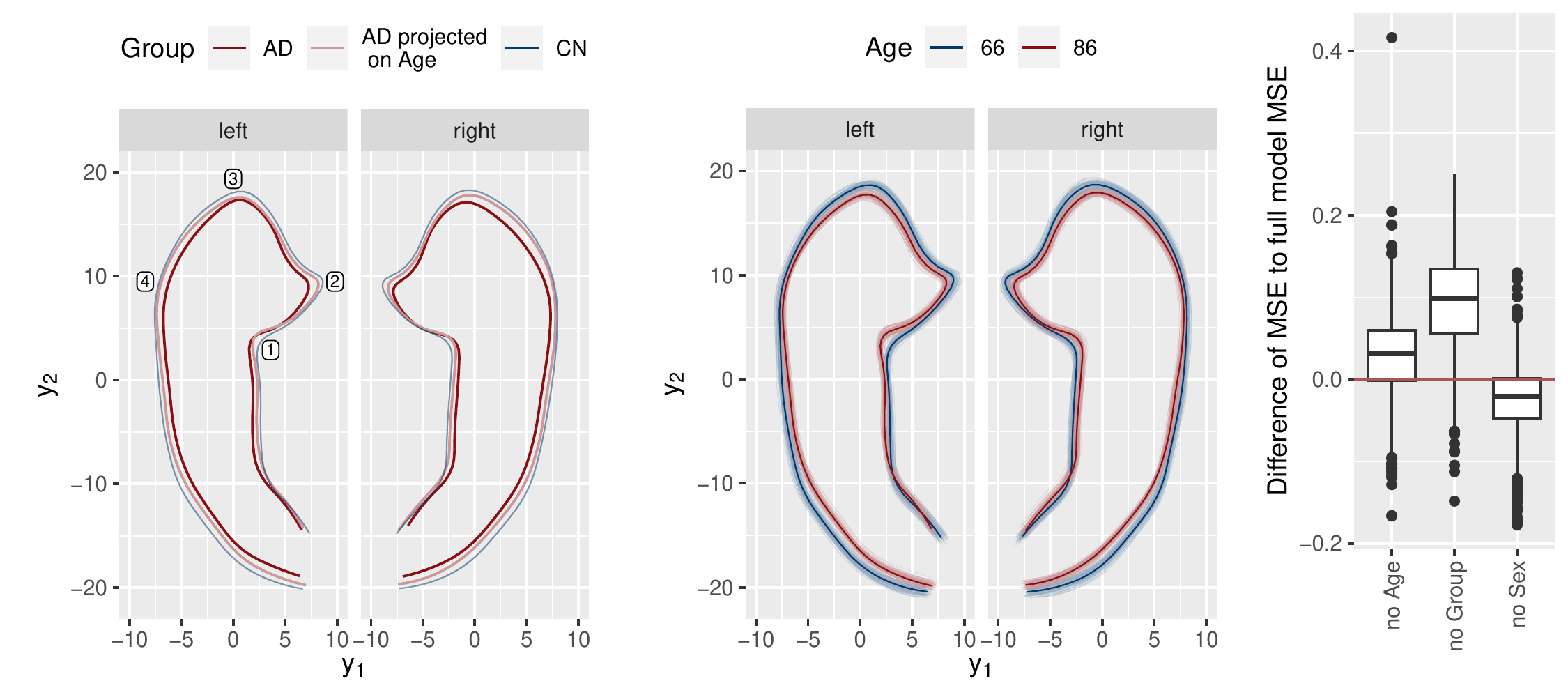}
    \caption{\label{fig:hippocampus_effects_supplement} All non mentioned covariates are fixed to Age = mean(Age) = 76, Group = CN, and Sex = M. Left: Prediction for AD compared to the prediction for a model where the Group effect is replaced by its linear projection onto the Age effect. Labels indicate prominent features. Middle: Bootstrap predictions for Age effect. Right: Difference of the MSE of the models with an omitted covariate and the MSE of the full model computed on the out-of-bootstrap sample.}
\end{figure}

\subsection{Model inference for the hippocampus regression model}
To assess the significance of the estimated effects, we apply the model inference and selection tools described in Section \ref{sec:inference} to this regression model for the hippocampal shapes. First, we test the global null hypothesis $H_0$ that none of the covariates Age, Group, and Sex has an effect by using the Fréchet coefficient of determination $\tilde{R}^2$ as a test statistic and approximating its distribution under $H_0$ through permutation sampling (number of samples = 500). We observe $\tilde{R}^2 = 0.033$, which is significantly different from the estimated mean under $H_0$ ($\overline{\tilde{R}^2} = 0.016$, p-value of the one-sided t-test $< 2.2\times 10^{-16}$). Next we compare the adjusted coefficients of determination $\tilde{R}^2_{\text{adj}}$ for all possible sub-models to decide which covariates improve the model fit. Here, the largest value is obtained for the full model ($\tilde{R}^2_{\text{adj}} = 0.021$, see Tab. \ref{tab:adj_R_squared}), while all reported values for $\tilde{R}^2$ and $\tilde{R}^2_{\text{adj}}$ are relatively small. Thus, although %the variables considered can be deemed relevant based on this criterion, they 
explaining only a small proportion of the total variation, with a larger remaining fraction corresponding to %. That is, most of the variation in the curves cannot be explained by the variables considered but is 
individual variation among individuals, all variables considered can be deemed relevant based on this criterion.

To account for the variability of the model predictions, we draw 1000 bootstrap samples and estimate the model parameters on each. The distribution of the bootstrapped spline coefficients is shown in Fig. \ref{fig:bootstrap_coefs_left} and \ref{fig:bootstrap_coefs_right} in the appendix. The large variation in some of the bootstrap coefficients, much larger than for the bootstrap predictions in Fig. \ref{fig:hippocampus_effects} and Fig. \ref{fig:hippocampus_effects_supplement}, middle, indicates that a test based on the coefficients might lose power due to warping variability, as discussed for a setting with many spline coefficients (here 42 2-dimensional coefficients per covariate for left and right in total) in Section \ref{subsec:curveCI_vs_coefCI}. %In fact, if we construct the test based on the 42 spline coefficients per effect described in Section \ref{sec:inference}, only the Group effect is found to be significant. 
Therefore, we construct confidence regions directly for the predictions. To this end, for each prediction, we compute the bootstrap predictions for the same combinations of covariates as well as their distance from the original model prediction. We then construct the simultaneous 95\% confidence region based on the closest 950 bootstrap predictions. The result for the binary effects Group and Sex is shown in Fig. \ref{fig:hippocampus_effects}. For the continuous covariate Age, we show the bootstrap predictions with confidence regions for Age = 66 and Age = 86 in Fig. \ref{fig:hippocampus_effects_supplement}, middle. Note that the resulting confidence regions not only include the variance of the effects, but also that of the intercept. It can be seen that the confidence regions constructed in this way clearly separate the Alzheimer's group from the control group, whereas the confidence regions for Sex overlap in all parts of the hippocampus. For Age, there are parts (especially at the head of the hippocampus) where the regions are separated, as well as parts where no clear separation is found. This is consistent with the previously obtained results that the group effect is the most pronounced and the Sex effect is the least evident.

In addition, we evaluate the importance of the included covariates by comparing how much the model estimation error increases when we remove single covariates from the model. Here we look at the difference of the mean squared error (MSE) of the model with an omitted covariate and the MSE for the full model (Fig. \ref{fig:hippocampus_effects_supplement}, right), where we estimate the MSE for each bootstrap model  on the out-of-bootstrap sample. That is, we evaluate the model prediction on the data that were not used for model estimation on the bootstrap sample (about 36.8\% of the data). On average, we compute an out-of-bootstrap MSE of 13.34 for the full model and 13.37, 13.43 or 13.31 for a model without Age, Group or Sex effect, respectively. Omitting one effect increases the MSE for 740, 943 and 263 out of 1000 bootstrap samples for the Age, Group and Sex effect, respectively. Thus, based on these two variables improving the out-of-bootstrap prediction error in most of the samples, we would choose the model including Age and Group but no Sex effect.

\subsection{Discussion of the hippocampus application}
Overall, Alzheimer's disease has the largest and most stable effect on the hippocampus among the covariates considered. The direction of this effect, i.e., the way the hippocampus shrinks, differs from normal aging. Although females appear to have slightly smaller hippocampi, this Sex effect is not clearly significant in our model.
In further studies, it would be interesting to include in the analysis the mild cognitive impairment (MCI) group, for whom hippocampal masks are also available from ADNI. Since this group is known to be heterogeneous, with likely unknown subgroups, we did not include them in this study. In addition, it may be worthwhile to explore more complex model equations, for example, including further covariates or interaction effects (e.g. of Age and Sex) and to examine if and how the effects differ between the left and right hippocampus. 

\section{Conclusion and Outlook} \label{sec:discussion}
In this work, we developed an elastic regression approach, which we motivate as a special case of a general type of regression models we call quotient linear regression, for curves with respect to the elastic distance as response values and multiple scalar covariates. It allows modeling curves in two or more dimensions, e.g. outlines of anatomical features, while being invariant to their parameterization. Using the projections of affine linear functions as the model space for the SRV transformed curves permits iterative estimation of the model by alternating between alignment and estimation of a functional linear model. We provide an implementation of this algorithm in the \texttt{R}-package \texttt{elasdics} \citep{elasdics}.

To deal with sparsely and/or irregularly observed curves, we use splines to model the SRV curves. Since certain of these splines are identifiable modulo parameterization, inference based on the estimated spline coefficients is possible in these cases and also allows to investigate local effects. Here, however, as with our proposed inference methods based on distances and/or on the predicted curves, we rely on re-sampling methods such as bootstrap and permutation re-sampling. Further research will be needed to develop tests and confidence sets with formal guarantees.

Placing the proposed elastic regression model into the more general context of quotient (linear) regression, allows us to point out direct connections to similar approaches on other quotient spaces in literature and to present results on properties of the model space, consistency and existence of Fréchet mean estimation in a higher level of generality. Moreover, we also pave the way to for quotient regression beyond linear model spaces:

Using affine linear functions as underlying model space  includes constant speed geodesics in the quotient space of curves modulo re-parameterization,
but is somewhat more flexible. Using this larger space not only enables the estimation strategy described above, but we have also shown through examples (in the simulations in Section \ref{sec:simulation}) that geodesic regression lines alone are not sufficient to model all  changes to curves that naturally appear in practice, and a larger model space thus is beneficial. However, our proposed model space is still linear in the SRV space, which may be too restrictive for some real data applications. To allow more flexible smooth dependence of curves on  covariates, quotient models could be extended to an additive linear regression model. %using tensor product splines on SRV level in future work.

Another appealing direction for further research is to develop the quotient regression model for elastic shapes, i.e.\ curves modulo translation, scaling, rotation and parametrization, which goes beyond quotient spaces of a Hilbert space. Along the lines of Section \ref{quotient:riemannian}, this space can be seen as the quotient of the sphere, which is a submanifold of $\mathbb{L}_2$, on which the product of the rotation and re-parametrization groups acts by isometries. Such a model could be implemented building on 'generalized linear' regression models on the sphere as suggested in \cite{stoecker} for planar (in-elastic) shapes and forms. %In this work, additive models for the shape space of curves modulo translation, scaling and rotation, but not parametrization, are already discussed, and it seems promising to combine these with the parametrization invariant model discussed here.

\section*{Acknowledgements}
We gratefully acknowledge funding by grants GR 3793/3-1 'Flexible regression methods for curve and shape data'  and GR 3793/5-1 'Combining geometry-aware statistical and deep learning for neuroimaging data' (project P7 in the research unit FOR 5363) from the German research foundation (DFG). We would like to thank Kerstin Ritter (Charité Berlin) for sharing her domain knowledge on Alzheimer's disease and the ADNI data.

Data collection and sharing for this project was funded by the Alzheimer's Disease Neuroimaging Initiative (ADNI) (National Institutes of Health Grant U01 AG024904) and DOD ADNI (Department of Defense award number W81XWH-12-2-0012). ADNI is funded by the National Institute on Aging, the National Institute of Biomedical Imaging and Bioengineering, and through generous contributions from the following: AbbVie, Alzheimer’s Association; Alzheimer’s Drug Discovery Foundation; Araclon Biotech; BioClinica, Inc.; Biogen; Bristol-Myers Squibb Company; CereSpir, Inc.; Cogstate; Eisai Inc.; Elan Pharmaceuticals, Inc.; Eli Lilly and Company; EuroImmun; F. Hoffmann-La Roche Ltd and its affiliated company Genentech, Inc.; Fujirebio; GE Healthcare; IXICO Ltd.; Janssen Alzheimer Immunotherapy Research \& Development, LLC.; Johnson \& Johnson Pharmaceutical Research \& Development LLC.; Lumosity; Lundbeck; Merck \& Co., Inc.; Meso Scale Diagnostics, LLC.; NeuroRx Research; Neurotrack Technologies; Novartis Pharmaceuticals Corporation; Pfizer Inc.; Piramal Imaging; Servier; Takeda Pharmaceutical Company; and Transition Therapeutics. The Canadian Institutes of Health Research is providing funds to support ADNI clinical sites in Canada. Private sector contributions are facilitated by the Foundation for the National Institutes of Health (\url{www.fnih.org}). The grantee organization is the Northern California Institute for Research and Education, and the study is coordinated by the Alzheimer’s Therapeutic Research Institute at the University of Southern California. ADNI data are disseminated by the Laboratory for Neuro Imaging at the University of Southern California.

\newpage
\bibliographystyle{plainnat}
\bibliography{literatur}

\newpage
\appendix
\section{Proofs and Computations}
\subsection{Proof of Lemma \ref{lem:consistency}}
\label{subsec:proof_consistency}
\cite{huckemann} generalizes the notion of the Fréchet mean to the Fréchet $\rho$-mean:
\begin{defi}[Fréchet $\rho$-mean]
Let $X, X_1, X_2, \dots$ be random elements mapping from a probability space $\Omega, \mathcal{A}, \mathcal{P}$ to a topological space Q. Let $(P, d)$ be a topological space with distance $d$. For a continuous function $\rho:Q \times P \to [0, \infty]$ define the set of population Fréchet $\rho$-means of $X$ in $P$ by
\begin{align*}
    E^{(\rho)} = \argmin_{\mu \in P} \mathbb{E}(\rho(X, \mu)^2).
\end{align*}
For $\omega \in \Omega$, denote by 
\begin{align*}
    E^{(\rho)}_n(\omega) = \argmin_{\mu \in  P} \sum_{i = 1}^n \rho(X_i(\omega), \mu)^2
\end{align*}
the set of sample Fréchet $\rho$-means.
\end{defi}
With this definition, the usual Fréchet mean is a special case where $\rho$ is the distance function. Similarly as for Fréchet means, one can show that sample Fréchet $\rho$-means are consistent in the sense of \cite{ziezold}, that is $\bigcap_{n = 1}^\infty \overline{\bigcup_{k = n}^\infty E^{(\rho)}_k(\omega)} \subseteq E^{(\rho)}$ for almost all $\omega \in \Omega$. 

\begin{theorem}[\cite{huckemann}] \label{theo:huckemann}
Let $\rho:Q \times P \to [0, \infty[$ be a continuous function on the product of a topological space with a separable space with distance $(P, d)$. Then strong consistency holds in the Ziezold sense for the set of Fréchet $\rho$-means in $P$ if:
\begin{enumerate}[label=(\roman*)]
    \item $X$ has compact support, or if
    \item $\mathbb{E}(\rho(X, p)^2) < \infty$ for all $p \in P$ and $\rho$ is uniformly continuous in the second argument.
\end{enumerate}

\end{theorem}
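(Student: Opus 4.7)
The plan is to adapt the classical Ziezold consistency argument to generalized Fréchet $\rho$-means. Using separability of $(P,d)$, I would fix a countable dense subset $D \subseteq P$ and work on the probability-one event $\Omega_0$ on which the strong law of large numbers yields $\frac{1}{n}\sum_{i=1}^n \rho(X_i(\omega),p)^2 \to \mathbb{E}(\rho(X,p)^2)$ simultaneously for every $p \in D$. Fix $\omega \in \Omega_0$ and pick any $\mu \in \bigcap_{n=1}^\infty \overline{\bigcup_{k=n}^\infty E_k^{(\rho)}(\omega)}$; by definition there is a subsequence $n_j \to \infty$ with $\mu_{n_j} \in E_{n_j}^{(\rho)}(\omega)$ and $d(\mu_{n_j},\mu) \to 0$. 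The goal becomes to show $\mathbb{E}(\rho(X,\mu)^2) \leq \mathbb{E}(\rho(X,p)^2)$ for every $p \in P$, which is precisely $\mu \in E^{(\rho)}$.

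The starting inequality is the sample minimality
\[
\frac{1}{n_j}\sum_{i=1}^{n_j} \rho(X_i(\omega),\mu_{n_j})^2 \;\leq\; \frac{1}{n_j}\sum_{i=1}^{n_j} \rho(X_i(\omega),p)^2.
\]
For $p \in D$, the right-hand side converges to $\mathbb{E}(\rho(X,p)^2)$ by the SLLN. The main obstacle is the left-hand side: one needs
\[
\frac{1}{n_j}\sum_{i=1}^{n_j} \rho(X_i(\omega),\mu_{n_j})^2 \to \mathbb{E}(\rho(X,\mu)^2),
\]
which is non-trivial because $\mu_{n_j}$ is itself sample-dependent and in general lies outside $D$, so the SLLN cannot be invoked at $\mu_{n_j}$ directly.

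To resolve this, I would split along the two hypotheses. Under (ii), uniform continuity of $\rho$ in its second argument combined with the identity $a^2-b^2=(a-b)(a+b)$ gives, for every $\epsilon>0$ and $j$ large enough,
\[
\left|\frac{1}{n_j}\sum_{i=1}^{n_j}\rho(X_i(\omega),\mu_{n_j})^2 - \frac{1}{n_j}\sum_{i=1}^{n_j}\rho(X_i(\omega),\mu)^2\right| \leq \epsilon\left(\frac{2}{n_j}\sum_{i=1}^{n_j}\rho(X_i(\omega),\mu) + \epsilon\right),
\]
whose limsup is bounded by $\epsilon\bigl(2\mathbb{E}(\rho(X,\mu))+\epsilon\bigr)$, finite by $\mathbb{E}(\rho(X,\mu)^2)<\infty$; sending $\epsilon\downarrow 0$ and invoking the SLLN at the fixed point $\mu$ (approximated through $D$) yields the desired convergence. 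Under (i), compact support of $X$ confines the $X_i(\omega)$ to a compact $K\subseteq Q$, on which $\rho$ is automatically uniformly continuous on $K\times B$ for any compact neighbourhood $B$ of $\mu$, so the same telescoping estimate applies. Letting $p$ vary in $D$ gives $\mathbb{E}(\rho(X,\mu)^2) \leq \mathbb{E}(\rho(X,p)^2)$ for all $p\in D$, and density of $D$ together with continuity of $p\mapsto \mathbb{E}(\rho(X,p)^2)$ (obtained from the same estimate) extends this to all $p\in P$, proving $\mu \in E^{(\rho)}$ and hence Ziezold strong consistency. The specialisation to Lemma~\ref{lem:consistency} then follows by setting $Q=\mathcal{X}\times\mathcal{Y}$, $P=\mathcal{F}$ and $\rho((x,y),f)=\dist(y,f(x))$; separability of $(\mathcal{F},\dist_\mathcal{F})$ comes from separability of $\mathcal{Y}$ combined with compactness of $\mathcal{X}$, and the assumed moment bound supplies condition (ii).
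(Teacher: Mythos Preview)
The paper does not prove this theorem at all: it is stated as a cited result of Huckemann and then used as a black box in the proof of Lemma~\ref{lem:consistency}. Your proposal therefore goes further than the paper by supplying an actual argument for Huckemann's theorem itself. The sketch follows the classical Ziezold strategy and is essentially sound, though the phrase ``invoking the SLLN at the fixed point $\mu$ (approximated through $D$)'' hides a step that should be spelled out: since $\mu$ is $\omega$-dependent you cannot apply the SLLN at $\mu$ directly, so one first applies it at a nearby deterministic $p'\in D$ and then transfers from $p'$ to $\mu$ via the same uniform-continuity estimate you already used to pass from $\mu_{n_j}$ to $\mu$. Under (i) your appeal to a ``compact neighbourhood $B$ of $\mu$'' is not needed in that form; it suffices to take $B=\{\mu\}\cup\{\mu_{n_j}:j\geq 1\}$, which is compact as a convergent sequence together with its limit.

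Your final paragraph, specializing to $Q=\mathcal{X}\times\mathcal{Y}$, $P=\mathcal{F}$ and $\rho((x,y),f)=\dist(y,f(x))$ and verifying condition (ii), is precisely how the paper derives Lemma~\ref{lem:consistency} from the cited theorem: the paper checks uniform continuity in the second argument via the triangle inequality bound $|\dist(y,f(x))-\dist(y,\tilde f(x))|\leq \dist_{\mathcal{F}}(f,\tilde f)$ and obtains separability of $\mathcal{F}\subseteq C(\mathcal{X},\mathcal{Y})$ from compactness of $\mathcal{X}$ and separability of $\mathcal{Y}$, exactly as you indicate.
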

Our statement on consistency is then a straightforward consequence using (ii), with $\mathcal{X} \times \mathcal{Y}$ taking the role of $Q$ and $(\mathcal{F}, d_\mathcal{F})$ that of $(P, d)$.
\begin{proof}[Proof of Lemma \ref{lem:consistency}]
 Define $\rho: \mathcal{X} \times \mathcal{Y} \times \mathcal{F} \to [0, \infty)$, $\rho(x,y,f) = d(y, f(x))$. This loss function $\rho$ is continuous in the first two arguments since $d$ as a metric and $f$ are continuous. Furthermore, $\rho$ is uniformly continuous in the last argument since for all $x \in \mathcal{X}$, $y \in \mathcal{Y}$ and $f, \tilde{f} \in \mathcal{F}$ it holds that $d(y, f(x)) \leq d(y, \tilde{f}(x)) + d(\tilde{f}(x), f(x))$ via the triangle inequality and therefore because of symmetry: $| d(y, f(x)) -  d(y, \tilde{f}(x))| \leq d(\tilde{f}(x), f(x)) \leq d_\mathcal{F}(f, \tilde{f})$. $C(\mathcal{X}, \mathcal{Y})$ is separable by the proof of Theorem 2.4.3 in \cite{srivastava_borel_sets} and therefore $\mathcal{F}$ is separable as a subspace of a separable metric space.
\end{proof}

\subsection{Proof of Lemma \ref{lem:existence} and Lemma \ref{lem:existence_quotient}} \label{proof:existence}
\begin{proof}
$C(\mathcal{X}, \mathcal{Y})$ is complete since $\mathcal{Y}$ is complete \citep[e.g.,][Theorem 3.45]{burkill2002second} and therefore $\mathcal{F}$ in Lemma \ref{lem:existence} and $\Phi$ in Lemma \ref{lem:existence_quotient} complete as a closed subsets. Thus $\mathcal{F}$ and $\Phi$ are compact since they are complete and totally bounded. Since $f \mapsto \mathbb{E}(\dist(Y, f(X))$ and $\varphi \mapsto \mathbb{E}(\dist([Y], [\varphi(X)])$ are continuous as a compositions of continuous functions, they attain their minimum on $\mathcal{F}$ and $\Phi$, respectively.
\end{proof}

\subsection{Proof of Lemma \ref{lem:separable_complete}}
\begin{proof}
\begin{enumerate}
    \item[i)] Since $\mathcal{Y}$ is separable, there is a countable, dense subset $\mathcal{Z} \subseteq \mathcal{Y}$. Let $[y] \in \mathcal{Y}/G$. Since $\mathcal{Z}$ is dense in $\mathcal{Y}$, there is a sequence $(z_k)_{k \in \mathbb{N}} \subset \mathcal{Z}$ such that $\lim_{k \to \infty} z_k = y$. Therefore $\lim_{k \to \infty} d_G([z_k], [y]) \leq \lim_{k \to \infty} d(z_k, y) = 0$. Hence $\{[z] | z \in \mathcal{Z}\}$ is a countable, dense subset in $\mathcal{Y}/G$ and $\mathcal{Y}/G$ thus separable.
    \item[ii)] Let $([y_k])_{k \in \mathbb{N}} \subset \mathcal{Y}/G$ be a Cauchy sequence. W.l.o.g. assume $d_G([y_k], [y_{k + 1}]) < \frac{1}{2^k}$ for all $k \in \mathbb{N}$, otherwise consider a subsequence (and $([y_k])_{k \in \mathbb{N}}$ as a Cauchy sequence will converge to the same limit if it exists). We construct a sequence $(g_k)_{k \in \mathbb{N}} \subseteq G$ such that $(g_1 \circ \dots \circ g_{k-1} \circ y_k)_{k \in \mathbb{N}}$ is a Cauchy sequence in $\mathcal{Y}$. To do so set $g_1 = \text{id}$ and for $k \geq 2$ assume we already picked $g_1, \dots, g_{k-1}$. Then choose $g_k$ such that
    \begin{align*}
        d(g_1 \circ \dots \circ g_{k-1} \circ y_k, g_1 \circ \dots \circ g_k \circ y_{k+1}) < \frac{1}{2^{k-1}}.
    \end{align*}
    This is possible since $g_1, \dots, g_{k-1}$ are isometries and therefore
    \begin{align*}
        \inf_{g \in G} d(g_1 \circ \dots \circ g_{k-1} \circ y_k, g_1 \circ \dots g_{k - 1} \circ g \circ y_{k+1}) =
        \inf_{g \in G} d(y_k, g \circ y_{k+1}) = 
        d_G([y_k], [y_{k + 1}]) < \frac{1}{2^k}.
    \end{align*}
    Thus, $(g_1 \circ \dots \circ g_{k-1} \circ y_k)_{k \in \mathbb{N}}$ is a Cauchy sequence and converges to a $y \in \mathcal{Y}$, since $\mathcal{Y}$ is complete. Hence
    \begin{align*}
       d_G([y_k], [y]) &= d_G([g_1 \circ \dots \circ g_{k-1} \circ y_k], [y])\\ &= \inf_{g \in G} d(g_1 \circ \dots \circ g_{k-1} \circ y_k, g \circ y) \leq d(g_1 \circ \dots \circ g_{k-1} \circ y_k, y) \xrightarrow{k \to \infty} 0.
    \end{align*}
    As $([y_k])_{k \in \mathbb{N}}$ has a limit in $\mathcal{Y}/G$, $\mathcal{Y}/G$ is complete. 
    \end{enumerate}
\end{proof}

\subsection{Proof of Theorem \ref{theo:existence}}
\begin{proof}
We first show that $\Psi$ is coercive, that is $\Psi(\varphi) \to \infty$ if $\|\varphi\|_\Phi \to \infty$. To do so note that 
\begin{align*}
   d_G([Y], [\varphi(X)]) = \inf_{g \in G} \|Y - g \circ \varphi(X)\| &\geq \|g \circ \varphi(X)\|_\mathcal{Y} - \|Y\|_\mathcal{Y} \\
   &\geq \| \varphi(X)\|_\mathcal{Y} - C_1 
\end{align*}
for some $C_1 \in \mathbb{R}$, where the first inequality is due to the triangle inequality and the second due to the assumption that $[Y]$ is bounded. Therefore, 
\begin{align*}
    \Psi(\varphi) = \mathbb{E}(d_G([Y], [\varphi(X)])^2) \geq \mathbb{E}((\| \varphi(X)\|_\mathcal{Y} - C_1 )^2) \geq (\mathbb{E}(\| \varphi(X)\|_\mathcal{Y}) - C_1 )^2
\end{align*}
due to Jensen's inequality since $x \mapsto (x - C_1)^2$ is convex. Note that 
$\mathbb{E}(\| \varphi(X)\|_\mathcal{Y})$ defines a norm on $\Phi$ since supp$(X) = \mathcal{X}$ and all $\varphi \in \Phi$ are continuous. Since all norms are equivalent on $\Phi$ (finite dimensional vector space) this means $\mathbb{E}(\| \varphi(X)\|_\mathcal{Y}) \to \infty$ if $\|\varphi\|_\Phi \to \infty$ and therefore $\Psi(\varphi) \to \infty$ if $\|\varphi\|_\Phi \to \infty$.

Since $\Psi$ is continuous as a composition of continuous functions and coercive it attains its minimum. This is a standard argument that we repeat here for the sake of completeness. Pick a $\varphi_0 \in \Phi$. Since $\Psi$ is coercive, there is a $C_2 \in \mathbb{R}$ such that $\Psi(\varphi) \geq \Psi(\varphi_0) + 1$ if $\|\varphi\|_\infty \geq C_2$. Hence $\inf_{\| \varphi \|_\infty \leq C_2} \Psi(\varphi) = \inf_{\varphi \in \Phi} \Psi(\varphi)$. Since $\{\varphi \in \Phi | \|\varphi\|_\infty \leq C_2\}$ is a closed and bounded subset of a finite dimensional vector space, it is compact (Heine-Borel) and therefore $\Psi$ attains its minimum on $\{\varphi \in \Phi | \|\varphi\|_\infty \leq C_2\}$, which is also a global minimizer.
\end{proof}

\subsection{Proof of Lemma \ref{lem:shortest_path_quotient}}
\begin{proof}
Since $(\mathcal{Y}, d)$ is a length metric space and $\gamma$ a shortest path, for the length $l(\gamma)$ of $\gamma$ it holds by definition that
\begin{align*}
l(\gamma) = \sup_{a = t_0 < t_1 < \dots < t_n = b} \sum_{i = 0}^{n - 1} d(\gamma(t_i), \gamma(t_{i + 1})) = d(y_1, \tilde{g} \circ y_2),
\end{align*}
where $t_0, t_1, \dots, t_n$, $n \in \mathbb{N}$ is a partition of $[a, b]$.
The length of $[\gamma]$ is bounded from above by the length of $\gamma$ as
\begin{align*}
l([\gamma]) 
&= \sup_{a = t_0 < t_1 < \dots < t_n = b} \sum_{i = 0}^{n - 1} d_G( [\gamma(t_i)], [ \gamma(t_{i + 1})]) \\
&= \sup_{a = t_0 < t_1 < \dots < t_n = b} \sum_{i = 0}^{n - 1} \inf_{g \in G} d(\gamma(t_i), g \circ \gamma(t_{i + 1})) \\
&\overset{g = \text{ id}}{\leq} \sup_{a = t_0 < t_1 < \dots < t_n = b} \sum_{i = 0}^{n - 1} d(\gamma(t_i), \gamma(t_{i + 1})) 
= l(\gamma)
\end{align*}
for all partitions $a = t_0 < t_1 < \dots < t_n = b$, $n \in \mathbb{N}$.
To see that $l([\gamma]) = l(\gamma)$ choose the trivial partition $a = t_0 < t_1 = b$ and observe 
\begin{align}
l([\gamma]) 
&\geq d_G([\gamma(t_0)], [\gamma(t_{ 1})]) =
\inf_{g \in G} d(\gamma(a), g \circ \gamma(b)) = d(y_1, \tilde{g} \circ y_2) = l(\gamma). \label{lengthlonger}
\end{align}
Thus, $[\gamma]$ is a shortest path in $\mathcal{Y}/G$ as $l([\gamma]) = d_G([y_1], [y_2])$ and $l([\tilde\gamma]) \geq d_G([y_1], [y_2])$ as in \eqref{lengthlonger} for all other paths~$ \tilde\gamma$.
\end{proof}

\subsection{Proof of Corollary \ref{cor:multivariate_geodesic}}
This corollary is an immediate consequence of the following lemma.
\begin{lemma} \label{lem:convex_cone}
Let $(\mathcal{Y}, \langle \cdot \rangle)$ be a real inner product space and $G$ act on $\mathcal{Y}$ by isometries with $g \circ 0 = 0$ for all $g \in G$. Let $y_1, y_2 \in \mathcal{Y}$ be aligned to $y_0 \in \mathcal{Y}$ and $\alpha_1, \alpha_2 \geq 0$. Then $\alpha_1 y_1 + \alpha_2 y_2$ is aligned to $y_0$, which means the set of elements which are aligned to $y_0$ is a convex cone.
\end{lemma}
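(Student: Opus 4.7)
The plan is to reduce the alignment condition to an inner-product inequality, use that each $g \in G$ is automatically linear under the given assumptions, and then add up the inequalities for $y_1$ and $y_2$ weighted by $\alpha_1, \alpha_2 \geq 0$.

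First, I would establish that each $g \in G$ is a linear map preserving the inner product. Since $g$ is an isometry with $g \circ 0 = 0$, it is norm-preserving: $\|g \circ y\| = \|g \circ y - g \circ 0\| = \|y\|$. Expanding $\|g \circ x - g \circ y\|^2 = \|x - y\|^2$ and cancelling the norm-squared terms via the polarization identity yields $\langle g \circ x, g \circ y\rangle = \langle x, y\rangle$. Linearity then follows by noting that the vector $v := g \circ (\alpha_1 y_1 + \alpha_2 y_2) - \alpha_1 (g\circ y_1) - \alpha_2 (g\circ y_2)$ has inner product $0$ with $g \circ z$ for every $z \in \mathcal{Y}$ (testing with $z \in \{\alpha_1 y_1 + \alpha_2 y_2, y_1, y_2\}$ shows $\|v\|^2 = 0$).

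Second, I would re-express the alignment of $y_i$ to $y_0$ as the inner-product inequality $\langle y_0, g \circ y_i\rangle \leq \langle y_0, y_i\rangle$ for every $g \in G$. This follows by expanding $\|y_0 - g \circ y_i\|^2 \geq \|y_0 - y_i\|^2$ and cancelling $\|y_0\|^2$ together with $\|g \circ y_i\|^2 = \|y_i\|^2$.

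Finally, fix $g \in G$ and set $v = \alpha_1 y_1 + \alpha_2 y_2$. Linearity of $g$ gives $g \circ v = \alpha_1 (g\circ y_1) + \alpha_2 (g \circ y_2)$, and inner-product preservation gives $\|g \circ v\|^2 = \|v\|^2$ (since the cross terms $\alpha_1\alpha_2 \langle g \circ y_1, g \circ y_2\rangle = \alpha_1\alpha_2 \langle y_1, y_2\rangle$ also match). Therefore
\begin{align*}
\|y_0 - g \circ v\|^2 - \|y_0 - v\|^2 = 2\alpha_1 \langle y_0, y_1 - g \circ y_1\rangle + 2\alpha_2 \langle y_0, y_2 - g \circ y_2\rangle \geq 0
\end{align*}
by the two alignment inequalities and $\alpha_1, \alpha_2 \geq 0$. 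Since $g$ was arbitrary, $v$ is aligned to $y_0$, so the set of elements aligned to $y_0$ is closed under non-negative linear combinations, i.e.\ is a convex cone (it clearly contains $0$).

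The main obstacle is the first step: we need linearity of each $g$ even though only metric and basepoint-fixing properties are assumed. This is essentially the classical fact that a basepoint-fixing isometry of an inner product space is linear; the argument above gives a short self-contained version using the polarization identity. After that step, everything reduces to expanding the squared norm and observing that only the cross term with $y_0$ survives, so the two alignment inequalities combine additively.
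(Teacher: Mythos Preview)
Your proof is correct and follows essentially the same route as the paper's: reduce alignment to the inner-product inequality $\langle y_0, g\circ y\rangle \le \langle y_0, y\rangle$, establish linearity of each $g$, and then add the weighted inequalities. The only difference is that the paper invokes the Mazur--Ulam theorem for linearity (using that each $g$ is bijective since $G$ is a group), whereas you give a direct polarization-identity argument specific to inner product spaces; your version is more self-contained and does not need surjectivity.
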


\begin{proof}
Let $y \in \mathcal{Y}$ be aligned to $y_0 \in \mathcal{Y}$, that is $\| y_0 - y\| = \inf_{g \in G} \| y_0 - g \circ y \|$. This is equivalent with $\| y\|^2 = \langle y, y \rangle$ to
\begin{align*}
\| y_0 - y\|^2 &= \inf_{g \in G} \langle y_0 - g \circ y, y_0 - g \circ y \rangle \\
&= \| y_0\|^2 + \inf_{g \in G} \{ \langle g \circ y, g \circ y \rangle - 2\langle y_0, g \circ y \rangle \} \\
&= \| y_0\|^2 +\| y\|^2 - 2 \sup_{g \in G} \langle y_0, g \circ y \rangle \tag{$g$ isometry} \\
&= \| y_0 - y\|^2 + 2\langle y_0, y \rangle - 2 \sup_{g \in G} \langle y_0, g \circ y \rangle.
\end{align*}
Hence, $y \in \mathcal{Y}$ being aligned to $y_0 \in \mathcal{Y}$ is equivalent to $ \sup_{g \in G} \langle y_0, g \circ y \rangle = \langle y_0, y \rangle$.

Let $g \in G$, thus $g$ is a bijective isometry on a real vector space with $g \circ 0 = 0$, which means $g$ is linear by the Mazur-Ulam theorem \citep{mazur_ulam}. Hence, for 
$y_1, y_2 \in \mathcal{Y}$ being aligned to $y_0 \in \mathcal{Y}$ and $\alpha_1, \alpha_2 > 0$ it holds that
\begin{align*}
\sup_{g \in G} \langle y_0, g \circ (\alpha_1y_1 + \alpha_2y_2) \rangle 
&= \sup_{g \in G} \left[ \alpha_1 \langle y_0,  g \circ y_1 \rangle
+ \alpha_2 \langle y_0, g \circ y_2 \rangle 
\right]\\
&\leq \alpha_1 \sup_{g \in G} \langle y_0,  g \circ y_1 \rangle 
+ \alpha_2 \sup_{g \in G} \langle y_0,  g \circ y_2 \rangle
\tag{$\alpha_1, \alpha_2 \geq 0$} \\
&=  \alpha_1 \langle y_0, y_1 \rangle + \alpha_2 \langle y_0, y_2 \rangle \tag{$y_1, y_2$ aligned to $y_0 $} \\
&= \langle y_0, \alpha_1y_1 + \alpha_2y_2 \rangle 
\end{align*}
On the other hand, we observe
$\sup_{g \in G} \langle y_0, g \circ (\alpha_1y_1 + \alpha_2y_2) \rangle \geq \langle y_0, \alpha_1y_1 + \alpha_2y_2 \rangle$
if we take $g =$ id and therefore $\alpha_1y_1 + \alpha_2y_2$ is aligned to $y_0$.
\end{proof}

\begin{proof}[Proof of the Corollary]
$\beta_0 + \sum_{j = 1}^k \lambda_j \beta_j =  \sum_{j = 1}^k  \lambda_j (y_0 + \beta_j)$ is aligned to $\beta_0$ according to Lemma \ref{lem:convex_cone} and, thus, $\tilde{f}$ is a shortest path in $\mathcal{Y}/G$ as $x \mapsto  \beta_0 + x \sum_{j = 1}^k \lambda_j \beta_j$ is linear and therefore a shortest path in the inner product space $\mathcal{Y}$ (Lemma \ref{lem:shortest_path_quotient}).
\end{proof}

\subsection{Geodesics \texorpdfstring{in $\mathbb{L}_2/\Gamma$}{} cannot be modeled with splines}
\label{subsec:geodesic_splines}
We construct a counterexample that shows that geodesics between two spline curves in $\mathbb{L}_2/\Gamma$ do not necessarily lie in a spline space. Even though we only show this for a specific example, we expect this to be true for most spline curves and that the geodesic will only actually lie in a spline space in exceptional cases.

Consider the linear SRV spline curves $q_1(t) = \left(\begin{smallmatrix} 1\\2t + 1\end{smallmatrix}\right)$ and $q_2(t) = \left(\begin{smallmatrix} 0\\1\end{smallmatrix}\right)$, $t \in [0,1]$. Since $q_2$ is constant, the optimal warping $\gamma$ of $q_2$ to $q_1$ is given via
\begin{align}
    \dot{\gamma}(t) = \frac{\langle q_1(t), \left(\begin{smallmatrix} 0\\1\end{smallmatrix}\right) \rangle_+}{\int_0^1 \langle q_1(t), \left(\begin{smallmatrix} 0\\1\end{smallmatrix}\right) \rangle_+ dt} 
    = \frac{2t + 1}{\int_0^1 2t + 1 dt} 
    = t + 0.5,
\end{align}
using the formula derived in the online supplement B.1 of \cite{steyer}. Here $<\cdot, \cdot>_+$ denotes the positive part of the scalar product. Hence $q_2(\gamma(t)) \sqrt{\dot{\gamma}(t)} = \left(\begin{smallmatrix} 0\\1\end{smallmatrix}\right) \sqrt{t + 0.5}$ is optimally aligned to $q_1$. This means the geodesic between $[q_1]$ and $[q_2]$ in $\mathbb{L}_2/\Gamma$ is given by 
\begin{align*}
\xi: [0, 1] &\to \mathbb{L}_2/\Gamma \\ x &\mapsto \left[ (1 - x)\left(\begin{smallmatrix} 1\\2t + 1\end{smallmatrix}\right) + x\left(\begin{smallmatrix} 0\\1\end{smallmatrix}\right) \sqrt{t + 0.5} \right]. 
\end{align*}
Thus, $\xi$ lies in a spline space only if $\xi(0.5)$ contains a spline, i.e there is a warping function $\tilde{\gamma}:[0,1] \to [0,1]$ such that 
\begin{itemize}
    \item[I.] $0.5\sqrt{\dot{\tilde{\gamma}}}$ and
    \item[II.] $(\tilde{\gamma} + 0.5)\sqrt{\dot{\tilde{\gamma}}} + 0.5\sqrt{(\tilde{\gamma} + 0.5) \dot{\tilde{\gamma}}}$
\end{itemize}
are splines of some degree $m$. From I. we conclude that $\tilde{\gamma}$ is a spline of degree $2m + 1$, $m \in \mathbb{N}_0$. But this means $(\tilde{\gamma}(t) + 0.5) \dot{\tilde{\gamma}}(t)$ is a piecewise polynomial with degree $4m + 1$, hence its square root $\sqrt{(\tilde{\gamma} + 0.5) \dot{\tilde{\gamma}}(t)}$ cannot be piecewise polynomial. This contradicts the assumption that II. is a spline.

\subsection{Additional algorithms}
\begin{algorithm}[H]
\caption{Quotient space regression for elastic closed curves \label{algo:quotient_regression_closed}}
\KwIn
{data pairs $(\boldsymbol{x}_i, \boldsymbol{\check{q}}_i)$, $i = 1, \dots, n$ where $\boldsymbol{\check{q}}_i$ are the SRV transformations of observed polygons and  $\boldsymbol{x}_i = (x_{i,1}, \dots, x_{i,k})$, $i = 1, \dots, n$ are observed covariates;
convergence tolerance $\epsilon > 0$
}
Compute initial estimate $\hat{\boldsymbol{\beta}}_{0,new}, \dots, \hat{\boldsymbol{\beta}}_{k,new} =\arginf_{\boldsymbol{\beta}_0, \dots, \boldsymbol{\beta}_k \in \mathbb{L}_2
} \sum_{i = 1}^n \| \boldsymbol{\beta}_0 + \sum_{j = 1}^k \boldsymbol{\beta}_j x_{i,j} - \boldsymbol{\check{q}}_i \|_{\mathbb{L}_2}^2$\;
Set $\hat{\boldsymbol{\beta}}_{j,old} = \text{Inf} \quad \forall j = 0, \dots, k$;\\
\While{$\max_{j = 0, \dots, k} \| \hat{\boldsymbol{\beta}}_{j,old} - \hat{\boldsymbol{\beta}}_{j,new} \| > \epsilon $}{
$\hat{\boldsymbol{\beta}}_{j,old} = \hat{\boldsymbol{\beta}}_{j,new}  \quad \forall j = 0, \dots, k$\;
\For{$i \in 1, \dots, n $}{
$\boldsymbol{p}_i = \hat{\boldsymbol{\beta}}_{0, old} + \sum_{j = 1}^k \hat{\boldsymbol{\beta}}_{j, old} x_{i,j}$; \tcp*[f]{compute predicted SRV curves} \\
$\boldsymbol{v}_i(t) = \boldsymbol{p}_i \|\boldsymbol{p}_i\| - \int_0^1 \boldsymbol{p}_i(s) \|\boldsymbol{p}_i(s)\| ds$; \tcp*[f]{compute derivative of closed predicted curves} \\
$\gamma_i = \arginf_{\gamma} \left\| \frac{\boldsymbol{v}_i}{\sqrt{\|\boldsymbol{v}_i\|}} - (\boldsymbol{q}_i \circ \gamma)\sqrt{\gamma} \right\|_{\mathbb{L}_2}^2$  \tcp*{warping step}
}
$\hat{\boldsymbol{\beta}}_{0,new}, \dots, \hat{\boldsymbol{\beta}}_{k,new} = \arginf_{\boldsymbol{\beta}_0, \dots, \boldsymbol{\beta}_k \in \mathbb{L}_2
} \sum_{i = 1}^n  \left\| \boldsymbol{\beta}_0 + \sum_{j = 1}^k \boldsymbol{\beta}_j x_{i,j} - (\boldsymbol{\check{q}}_i \circ \gamma_i)\sqrt{\gamma_i} \right\|_{\mathbb{L}_2}^2$ \\ \tcp*[f]{$\mathbb{L}_2$ spline fit via least-squares} 
}
\Return{$\hat{\boldsymbol{\beta}}_{j} = \hat{\boldsymbol{\beta}}_{j,new} \quad \forall j = 0, \dots, k$}
\end{algorithm}

\begin{algorithm}[ht]
\caption{Fr\'echet regression for elastic curves \label{algo:frechet_reg}}
\KwIn
{data pairs $(\boldsymbol{x}_i, \boldsymbol{q}_i)$, $i = 1, \dots, n$ where $\boldsymbol{q}_i$ are the SRV transformations of observed curves and  $\boldsymbol{x}_i = (x_{i,1}, \dots, x_{i,k})$, $i = 1, \dots, n$ are observed covariates with mean $\bar{\boldsymbol{x}} = \frac{1}{n} \sum_{i = 1}^n \boldsymbol{x}_i$ and empirical covariance matrix $\boldsymbol{\hat{\Sigma}} = \frac{1}{n} \sum_{i = 1}^n (\boldsymbol{x}_i - \bar{\boldsymbol{x}})(\boldsymbol{x}_i - \bar{\boldsymbol{x}})^T$;
new covariate values $\boldsymbol{\tilde{x}}_1, \dots, \boldsymbol{\tilde{x}}_N$;
convergence tolerance $\epsilon > 0$
}
Compute initial mean $\bar{\boldsymbol{p}}_{l, new} = \arginf_{\bar{\boldsymbol{p}}} \sum_{i = 1}^n \left\| \bar{\boldsymbol{p}} - \boldsymbol{q}_i \right\|_{\mathbb{L}_2}^2$, $l = 1, \dots, N$\;
\For{$l = 1, \dots, N$}{
 $s_{l,i} = s(\boldsymbol{x}_i, \boldsymbol{x}_l) = 1 + (\boldsymbol{x}_i - \bar{\boldsymbol{x}})^T \boldsymbol{\hat{\Sigma}}^{-1} (\boldsymbol{x}_l - \bar{\boldsymbol{x}}) \quad \forall i = 1, \dots, n$  \tcp*{compute weights}
 
Set $\bar{\boldsymbol{p}}_{l, old} = \text{Inf}$;\\
\While{$\| \bar{\boldsymbol{p}}_{l, old} - \bar{\boldsymbol{p}}_{l, new} \| > \epsilon$}{
$\bar{\boldsymbol{p}}_{l, old} = \bar{\boldsymbol{p}}_{l, new}$\;
$\gamma_i = \arginf_{\gamma} \left\| \bar{\boldsymbol{p}}_{l, old} - (\boldsymbol{q}_i \circ {\gamma}) \sqrt{\dot{\gamma}} \right\|_{\mathbb{L}_2}^2, \quad \forall i = 1, \dots, n$  \tcp*{warping step}
$\bar{\boldsymbol{p}}_{l, new} = \arginf_{\bar{\boldsymbol{p}}} \sum_{i = 1}^n s_{l,i} \left\| \bar{\boldsymbol{p}} - (\boldsymbol{q}_i \circ {\gamma_i}) \sqrt{\dot{\gamma_i}} \right\|_{\mathbb{L}_2}^2$  \tcp*[f]{weighted $\mathbb{L}_2$ spline fit}
} 
}
\Return{$\bar{\boldsymbol{p}_l} = \bar{\boldsymbol{p}}_{l,new}$ for all $l = 1, \dots, N$}
\end{algorithm}

For the the weighted $\mathbb{L}_2$ spline fit step note that the average of the weights fulfills 
\begin{align*}
  \frac{1}{n} \sum_{i = 1}^n s_{l, i} = 1 + \frac{1}{n} \sum_{i = 1}^n (\boldsymbol{x}_i - \bar{\boldsymbol{x}})^T \boldsymbol{\hat{\Sigma}}^{-1} (\boldsymbol{x}_l - \bar{\boldsymbol{x}}) =  1 +  (\frac{1}{n} \sum_{i = 1}^n \boldsymbol{x}_i - \bar{\boldsymbol{x}})^T \boldsymbol{\hat{\Sigma}}^{-1} (\boldsymbol{x}_l - \bar{\boldsymbol{x}}) = 1
\end{align*}
and therefore the weighted $\mathbb{L}_2$ mean in the spline fit step can be written as
\begin{align*}
    \arginf_{\bar{\boldsymbol{p}}} \sum_{i = 1}^n s_{l,i} \left\| \bar{\boldsymbol{p}} - (\boldsymbol{q}_i \circ {\gamma_i}) \sqrt{\dot{\gamma_i}} \right\|_{\mathbb{L}_2}^2 
    &= \arginf_{\bar{\boldsymbol{p}}} \sum_{i = 1}^n (s_{l,i} \left\| \bar{\boldsymbol{p}} \right\|^2 - 2 s_{l,i} \langle \bar{\boldsymbol{p}},  (\boldsymbol{q}_i \circ {\gamma_i}) \sqrt{\dot{\gamma_i}} \rangle_{\mathbb{L}_2}^2)\\
    &= \arginf_{\bar{\boldsymbol{p}}} \sum_{i = 1}^n  (\left\| \bar{\boldsymbol{p}} \right\|^2 - 2\langle \bar{\boldsymbol{p}},  s_{l,i} (\boldsymbol{q}_i \circ {\gamma_i}) \sqrt{\dot{\gamma_i}} \rangle_{\mathbb{L}_2}^2) \\
    &= \arginf_{\bar{\boldsymbol{p}}} \sum_{i = 1}^n  \left\| \bar{\boldsymbol{p}} - s_{l,i} (\boldsymbol{q}_i \circ {\gamma_i}) \sqrt{\dot{\gamma_i}} \right\|_{\mathbb{L}_2}^2.
\end{align*}
This means we can find a solution for the weighted least-squares via fitting a spline to the pseudo data $s_{l,i} (\boldsymbol{q}_i \circ {\gamma_i}) \sqrt{\dot{\gamma_i}}$, $i = 1, \dots, n$. I.e, we can use the $\mathbb{L}_2$ spline fit step described in Web Appendix A of \cite{steyer} to also fit the weighted $\mathbb{L}_2$ mean. In particular this allows us to compute closed weighted means and therefore perform Fréchet regression for closed elastic curves.

\section{Additional plots and tables}

\subsection{Additional simulation results}
We show here the prediction for all covariates $x = -1, 0.8, \dots, 0.8, 1$ for the simulation runs we picked as an example in Section \ref{sec:simulation}. All runs of simulations 1-4 show similar results as displayed in Fig. \ref{fig:simulation_1} but differ in the number of observed points and the added noise to the observations. Likewise, Fig. \ref{fig:simulation_5} shows an example of the second scenario (Simulations 5-8) and Fig. \ref{fig:simulation_11} an example of the third scenario (Simulations 9-12).
\begin{figure}[H]
    \centering
    \includegraphics[scale=0.65]{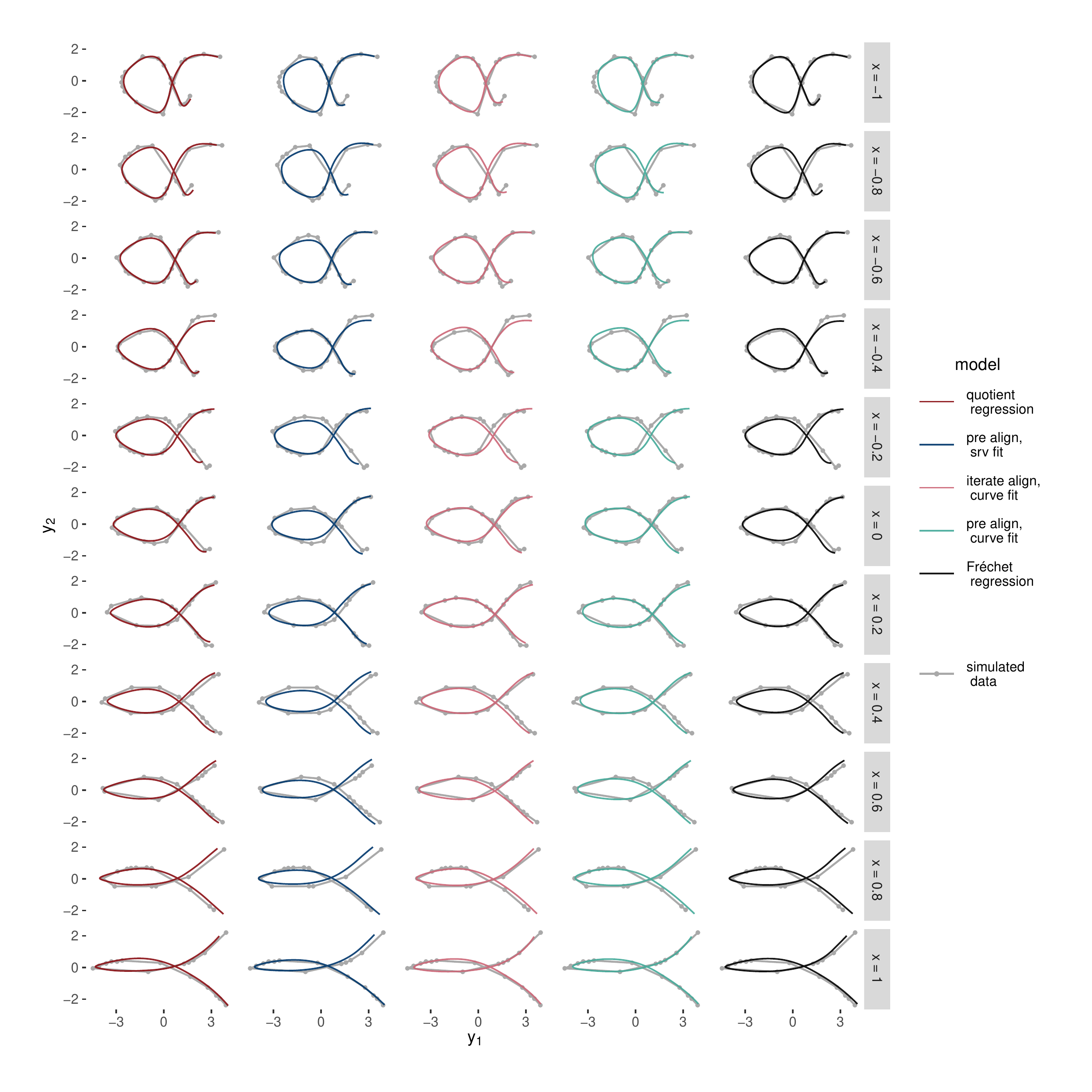}
    \caption{\label{fig:simulation_1} Predictions for an exemplarily selected run of simulation 1.}
\end{figure}

\begin{figure}[H]
    \centering
    \includegraphics[scale=0.65]{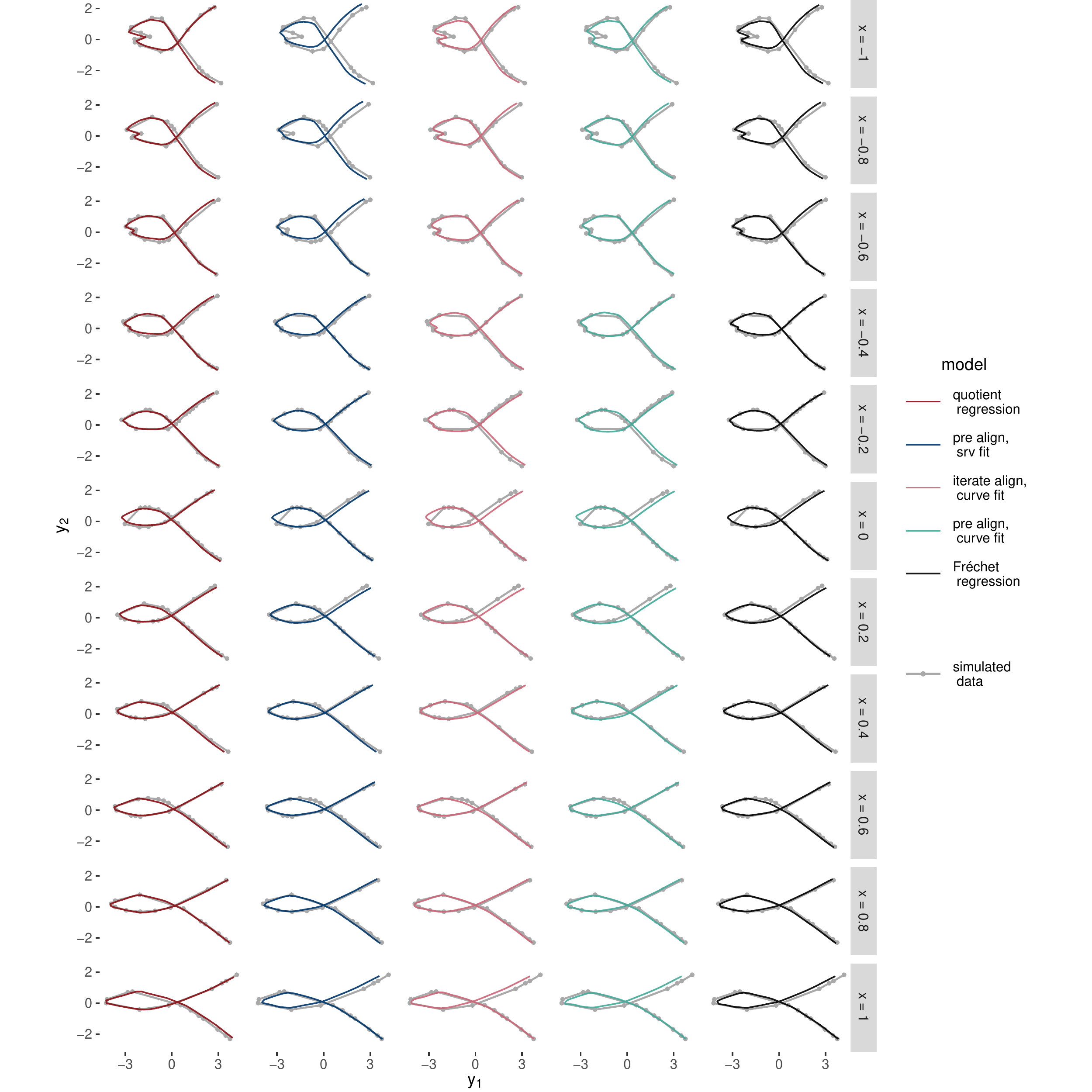}
    \caption{\label{fig:simulation_5}  Predictions for an exemplarily selected run of simulation 5.}
\end{figure}

\begin{figure}[H]
    \centering
    \includegraphics[scale=0.65]{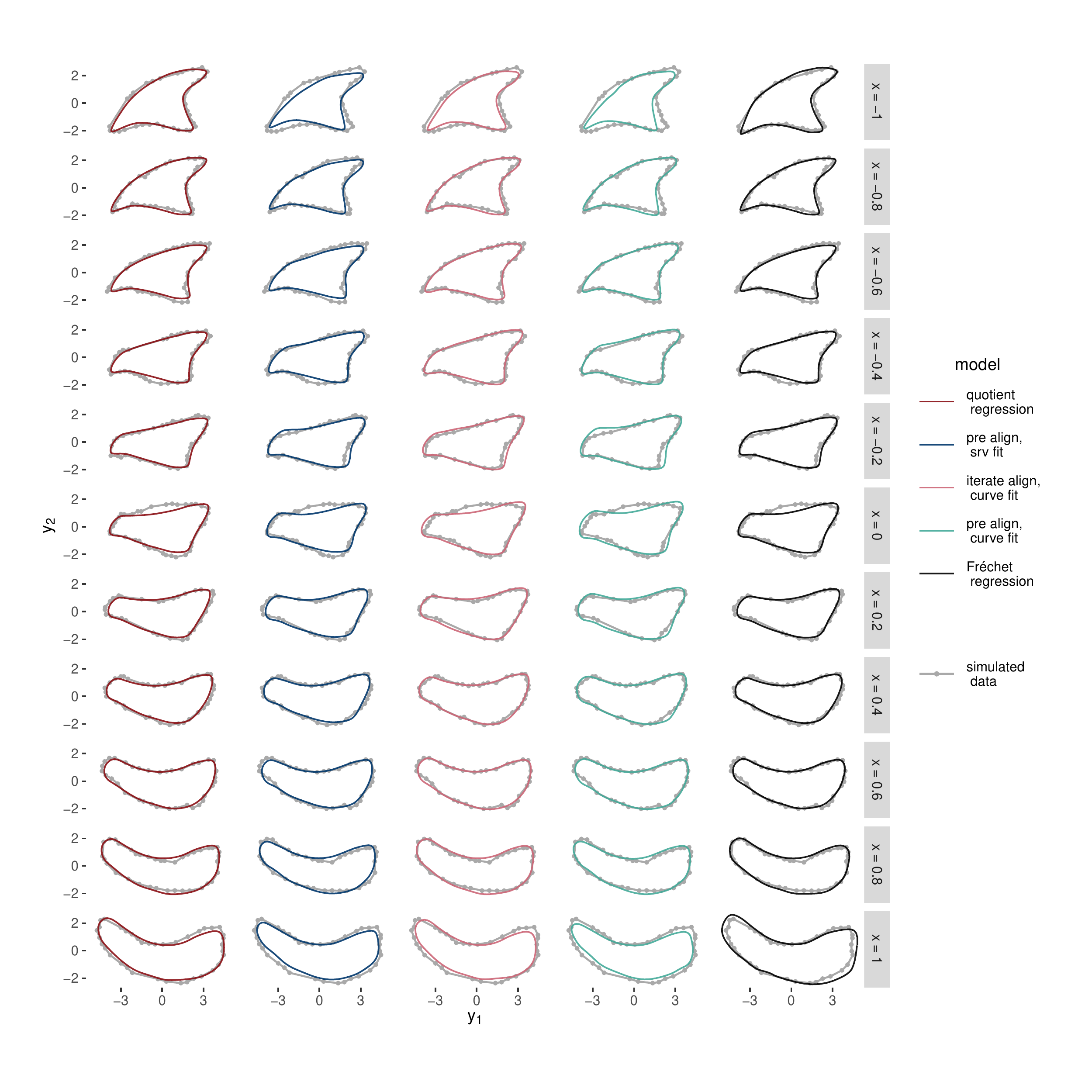}
    \caption{\label{fig:simulation_11}  Predictions for an exemplarily selected run of simulation 11.}
\end{figure}

\begin{figure}[H]
    \centering
    \includegraphics[scale=0.65]{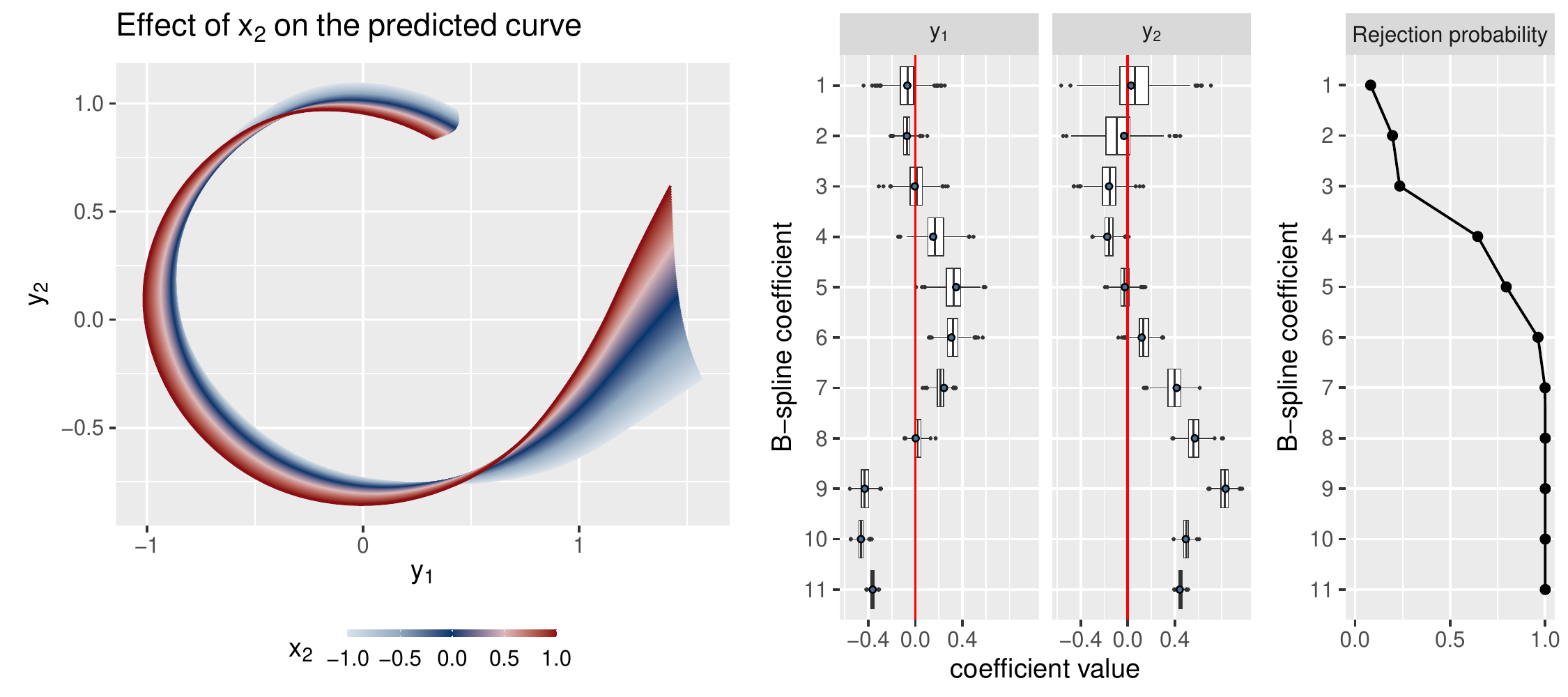}
    \caption{\label{fig:test_beta2_more_coefs} Results for the simulation for $n = 60$ and $N_{boot} = 1000$ using 11 instead of 6 spline coefficients per dimension. Left: Effect of $x_2$ given $x_1 = 0$ estimated on a large simulated sample (n = 500) as an approximation of the closest model to the true model in our model space. Note that the curves are translation invariant, hence the effect appears to be small in the left part of the curve. Middle: Distribution of mean bootstrapped coefficients $\bar{\boldsymbol{\xi}}_{2,m}$, $m = 1, \dots, 12$ over the 1000 repetitions; the coefficients of the effect estimated on the large sample are displayed as blue dots. Coefficients ${\boldsymbol{\xi}}_{j,1}, \dots, {\boldsymbol{\xi}}_{j,11}$  correspond to regions on the curves from the top anti-clockwise to the right. Right: Rejection probabilities for the tests of the individual spline coefficients.}
\end{figure}

\subsection{Additional application results}

\begin{table}[ht]
\centering
\begin{tabular}{rrrrrrr}
  \hline
 full & no Age & no Group & no Sex & only Group & only Sex & only Age \\ 
  \hline
 0.021 & 0.014 & 0.010 & 0.019 & 0.007 & -0.001 & 0.011 \\ 
   \hline
\end{tabular}
\caption{ \label{tab:adj_R_squared} Adjusted coefficients of determination $\tilde{R}^2_{\text{adj}}$ for all possible sub-models.}
\end{table}

\begin{figure}[H]
    \centering
    \includegraphics[scale=0.65]{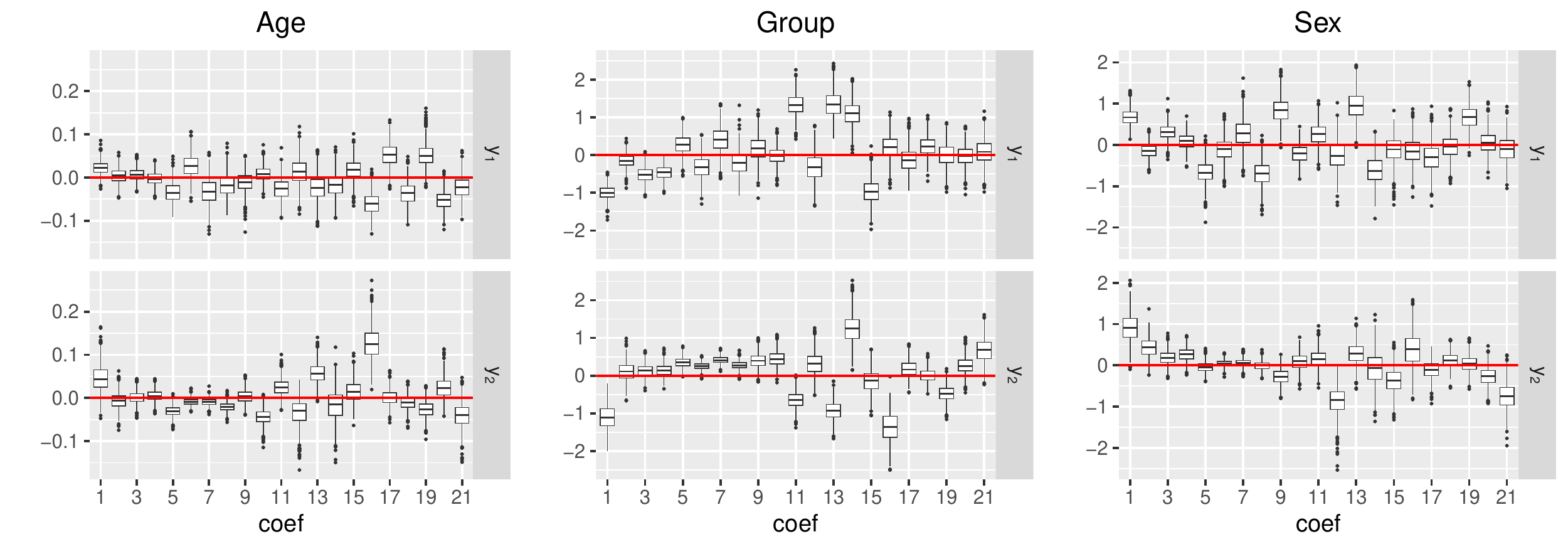}
    \caption{\label{fig:bootstrap_coefs_left} Distribution of the bootstrapped coefficients $\boldsymbol{\xi}_{j,m}^{b}\in \R^2$, $m = 1, \dots, 21$, $j \in \{$Age, Group, Sex$\}$, $b =1, \dots, 1000$ of the left hippocampus.}
\end{figure}

\begin{figure}[H]
    \centering
    \includegraphics[scale=0.65]{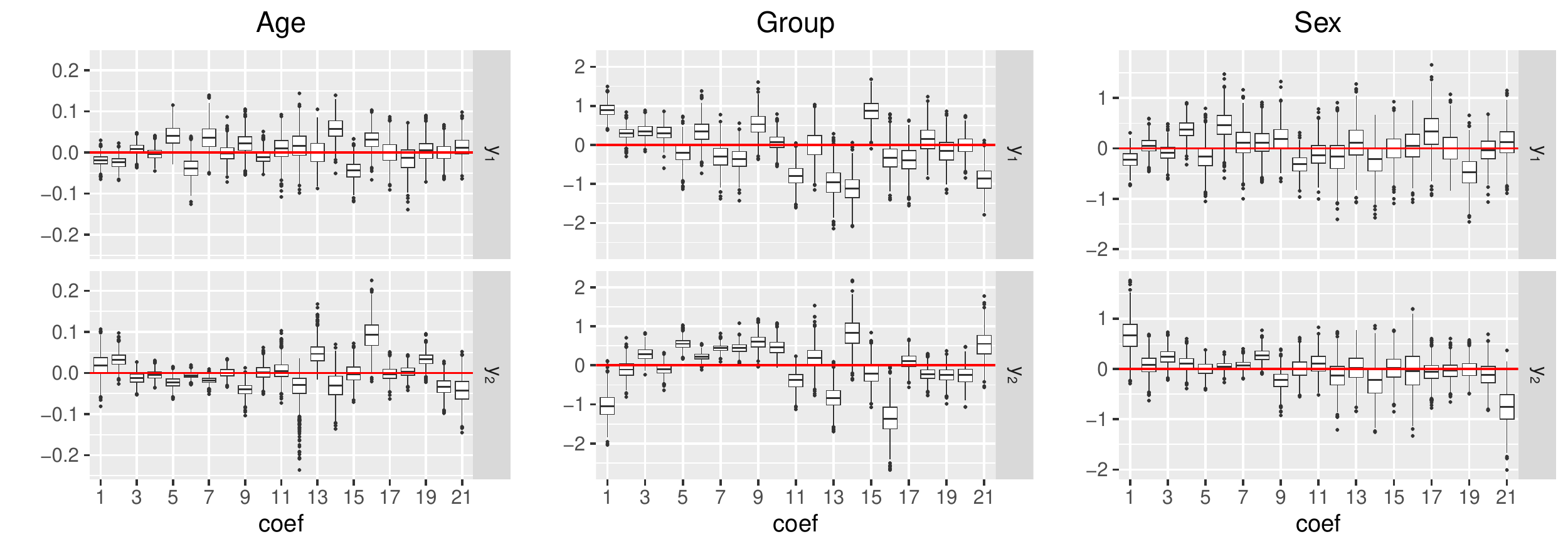}
    \caption{\label{fig:bootstrap_coefs_right} Distribution of the bootstrapped coefficients $\boldsymbol{\xi}_{j,m}^{b}\in \R^2$, $m = 1, \dots, 21$, $j \in \{$Age, Group, Sex$\}$, $b =1, \dots, 1000$ of the right hippocampus.}
\end{figure}

\end{document}